\pdfoutput=1

\documentclass[10pt,acmsmall,screen]{acmart}
\settopmatter{printfolios=true,printacmref=false}
\setcopyright{none}

\usepackage{booktabs}
\usepackage{wrapfig}
\usepackage{stmaryrd}
\usepackage{scalerel}
\usepackage{amssymb}
\usepackage{bbding}
\usepackage{amsmath}
\usepackage{amsthm}
\usepackage{array}
\usepackage{multirow}
\usepackage{xparse}
\usepackage{pbox}
\usepackage{color}
\usepackage{xcolor,colortbl}
\usepackage{mathtools}
\usepackage[noend]{algpseudocode}
\usepackage{soul}
\usepackage{fancyvrb}
\usepackage{cancel}
\usepackage{placeins}
\usepackage{mathpartir}
\usepackage{subcaption}
\captionsetup{compatibility=false}
\usepackage{relsize,enumitem}
\usepackage{colortbl}

\newcommand{\systemprefix}{\textsc{CG}~Calculus}
\newcommand{\parlang}{\textsc{CG$^{||}$}~Calculus}

\newcommand{\eagersys}{\mathtt{E}}

\newcommand{\nkey}{k}

\newcommand{\KS}[1]{\mathbf{KL}\tuple{#1}}
\newcommand{\K}{\resizebox{0.5\width}{\height}{$K$}}
\newcommand{\seqK}{K}

\newcommand{\ND}[1]{\mathbf{N}\tuple{#1}}

\newcommand{\kssetminus}{\pmb{\setminus}}
\newcommand{\kssubseteq}{\pmb{\subseteq}}

\newcommand{\config}{C}

\newcommand{\data}{B}

\newcommand{\station}{S}

\newcommand{\strictprocessor}{\omega}
\newcommand{\processor}{o}
\newcommand{\batchgroup}{U}

\newcommand{\ostream}{O}
\newcommand{\rstream}{R}

\newcommand{\fvs}{\ell}

\newcommand{\inode}{N}

\newcommand{\ep}{e}
\newcommand{\val}{v}

\newcommand{\configreduce}{\rightarrow}
\newcommand{\eagerreduce}{\xrightarrow{\eagersys}}
\newcommand{\eagerreducestar}{\mathrel{\xrightarrow{\eagersys}\!\!^*}}

\newcommand{\parreduce}{\twoheadrightarrow}

\newcommand{\target}{\odot}

\newcommand{\htitle}[1]{\thetitle{TLO}{#1}}
\newcommand{\ttitle}[1]{\thetitle{T}{#1}}
\newcommand{\rttitle}[1]{\thetitle{RT}{#1}}
\newcommand{\ptitle}[1]{\thetitle{P}{#1}}

\newcommand{\map}{\mathtt{map}}
\newcommand{\fold}{\mathtt{fold}}
\newcommand{\starmap}{\mathtt{mapAll}}
\newcommand{\starfold}{\mathtt{foldAll}}

\newcommand{\kwlet}{\mathtt{let}}
\newcommand{\kwin}{\mathtt{in}}

\newcommand{\df}{\stackrel{\triangle}{=}}

\newcommand{\fif}{\mathrm{if}\ }

\newcommand{\defassign}{\!\!::=\!\!}

\newcommand{\VAR}{\mathit{x}}
\newcommand{\VARY}{\mathit{y}}
\newcommand{\VARZ}{\mathit{z}}

\newcommand{\lb}{\langle}
\newcommand{\rb}{\rangle}

\newcommand{\emptyseq}{[]}
\renewcommand{\emptyset}{\{\}}

\newcommand{\fundef}[1]{\mathit{#1}}

\newcommand{\dom}{\fundef{dom}}
\newcommand{\ran}{\fundef{ran}}

\newcommand{\plaintitle}[1]{\textsc{#1}}
\newcommand{\thetitle}[2]{\textsc{#1-#2}}

\definecolor{myblue}{rgb}{100,.82,.86}
\sethlcolor{myblue}

\newcommand{\exgraphscale}{0.7}

\newcommand{\fresh}{\mathrm{fresh}}

\newcommand{\tuple}[1]{\lb #1 \rb}

\newcommand{\addn}{\mathtt{add}}

\makeatletter
\newcommand*\bigbullet{\mathpalette\bigbullet@{1.5}}
\newcommand*\bigbullet@[2]{\mathbin{\vcenter{\hbox{\scalebox{#2}{$\m@th#1\bullet$}}}}}
\makeatother

\makeatletter
\renewcommand*\bigcirc{\mathpalette\bigcirc@{1.5}}
\newcommand*\bigcirc@[2]{\mathbin{\vcenter{\hbox{\scalebox{#2}{$\m@th#1\circ$}}}}}
\makeatother

\newcommand{\init}{\fundef{init}}

\newcounter{sarrow}

\makeatletter
\newcommand{\xdashrightarrow}[2][]{\ext@arrow 0359\rightarrowfill@@{#1}{#2}}
\def\rightarrowfill@@{\arrowfill@@\relax\relbar\rightarrow}
\def\arrowfill@@#1#2#3#4{%
  $\m@th\thickmuskip0mu\medmuskip\thickmuskip\thinmuskip\thickmuskip
   \relax#4#1
   \xleaders\hbox{$#4#2$}\hfill
   #3$%
}
\makeatother


\usepackage{tikz}

\usetikzlibrary{positioning}
\usetikzlibrary{backgrounds}
\usetikzlibrary{calc}
\usetikzlibrary{arrows}
\usetikzlibrary{arrows.meta}
\usetikzlibrary{fit}
\usetikzlibrary{shapes}
\usetikzlibrary{hobby}
\usetikzlibrary{decorations.markings}
\usetikzlibrary{decorations.text}
\usetikzlibrary{decorations.pathmorphing}
\usetikzlibrary{decorations.pathreplacing}
\tikzset{snake arrow/.style=
{
decorate,
decoration={snake,amplitude=.4mm,segment length=2mm,post length=1mm}},
}
\usetikzlibrary{patterns}
\tikzset{lnode/.style={circle,fill=white,draw,minimum size=.5cm,inner sep=0pt},}
\tikzset{nnode/.style={circle,draw,minimum size=1.2cm,inner sep=0pt},}
\tikzset{op/.style={draw,fill=white,inner sep=0.05cm},}
\tikzset{pnode/.style={dashed, circle, fill=white,draw,minimum size=.6cm, inner sep=0pt},}
\tikzset{ledge/.style={->,},}
\tikzset{pedge/.style={->,},}
\tikzset{delta list/.style={rectangle,draw},}
\tikzset{srstream/.style={align=left,fill=gray!25,rectangle,draw},}
\tikzset{ostream/.style={align=left,rectangle,draw},}
\tikzset{tl/.style={align=left,rectangle,draw,fill=green!15,anchor=west,text width=0.5cm},}
\tikzset{tlp/.style={rectangle,fill=gray!50,anchor=west, inner sep=0.25cm},}
\tikzset{sarrow/.style={-{Latex[length=7mm,width=7mm]},draw=gray!35,opacity=0.7,line width=4mm},}
\tikzset{sarrownotip/.style={draw=gray!35,line width=3mm},}

\makeatletter
\pgfkeys{%
  /tikz/on layer/.code={
    \def\tikz@path@do@at@end{\endpgfonlayer\endgroup\tikz@path@do@at@end}%
    \pgfonlayer{#1}\begingroup%
  }%
}
\makeatother

\definecolor{Gray}{gray}{0.95}
\newcolumntype{G}{>{\columncolor{Gray}}c}


\algnewcommand\algorithmiccase{\texttt{case}}
\algdef{SE}[CASE]{Case}{EndCase}[1]{\algorithmiccase\ #1\ $\mathtt{of}$}{\algorithmicend\ \algorithmiccase}
\algtext*{EndCase}
\algdef{SE}[VARIANT]{Variant}{EndVariant}[1]{#1\ $\Rightarrow$\ }{\algorithmicend\ \algorithmicvariant}
\algtext*{EndVariant}
\algdef{SE}[SUBALG]{Indent}{EndIndent}{}{\algorithmicend\ }%
\algtext*{Indent}
\algtext*{EndIndent}

\usepackage{mdframed}
\newmdenv[linecolor=red,frametitle=Philip]{philipbox}

\newcommand{\toserver}{{\pmb{\Uparrow}}}
\newcommand{\fromserver}{{\pmb{\Downarrow}}}

\newcommand{\cE}{\mathbb{E}}

\newcommand{\dgcode}[1]{{\color{blue}#1}}

\newcommand{\bE}{\mathbb{B}}
\newcommand{\lE}{\mathbb{L}}
\newcommand{\tE}{\mathbb{T}}
\newcommand{\fE}{\mathbb{F}}

\newcommand{\func}{f}

\newcommand{\feq}{\equiv}

\definecolor{RC1}{rgb}{1.0,1.0,1.0}
\definecolor{RC2}{rgb}{0.9,0.9,0.9}
\definecolor{RC4}{rgb}{0.8,0.8,0.8}
\definecolor{RC3}{rgb}{0.7,0.7,0.7}

\definecolor{CBOX}{rgb}{0.9,0.9,0.9}
\definecolor{BLACK}{rgb}{0,0,0}
\newcommand{\systembox}[1]{\fcolorbox{BLACK}{CBOX}{#1}}

\makeatletter
\newcommand*{\bigcdot}{}
\DeclareRobustCommand*{\bigcdot}{%
  \mathbin{\mathpalette\bigcdot@{}}%
}
\newcommand*{\bigcdot@scalefactor}{.5}
\newcommand*{\bigcdot@widthfactor}{1.15}
\newcommand*{\bigcdot@}[2]{%
  \sbox0{$#1\vcenter{}$}
  \sbox2{$#1\cdot\m@th$}%
  \hbox to \bigcdot@widthfactor\wd2{%
    \hfil
    \raise\ht0\hbox{%
      \scalebox{\bigcdot@scalefactor}{%
        \lower\ht0\hbox{$#1\bullet\m@th$}%
      }%
    }%
    \hfil
  }%
}
\makeatother

\newcommand{\cons}{\mathbin{\pmb{:}\pmb{:}}}

\newcommand{\concat}{\mathbin{\pmb{+}\mkern-3mu\pmb{+}}}

\newcommand{\integer}{n}

\newcommand{\bs}{\mathbin{\setminus}}
\newcommand{\cleanexp}{\mathtt{F}}
\newcommand{\dirtyexp}{\mathtt{T}}
\newcommand{\cdstatus}{\mathtt{\varepsilon}}
\newcommand{\dirtyfunc}{\stackrel{\dirtyexp}{\rightarrow}}
\newcommand{\cleanfunc}{\stackrel{\cleanexp}{\rightarrow}}
\newcommand{\cdfunc}{\stackrel{\cdstatus}{\rightarrow}}

\makeatletter
\def\arcr{\@arraycr{}}
\makeatother

\newcommand{\writeresult}{\mathbin{\vcenter{\hbox{\scalebox{1.5}{$\triangleleft$}}}}}
\newcommand{\writeoptwo}{\mathbin{\underline{\blacktriangleleft}}}
\newcommand{\writeoperation}{\blacktriangleleft}

\newcommand{\eagerdata}{\resizebox{0.5\width}{\height}{$\data$}}
\newcommand{\eagerinode}{\resizebox{0.5\width}{\height}{$\inode$}}

\newcommand{\eagerbE}{\hat{\bE}}
\newcommand{\eagerlE}{\hat{\lE}}
\newcommand{\eagertE}{\hat{\tE}}
\newcommand{\eagerfE}{\hat{\fE}}

\newcommand{\neone}[1]{{}^{1}#1}
\newcommand{\netwo}[1]{{}^{2}#1}
\newcommand{\nethree}[1]{{}^{3}#1}
\newcommand{\neselect}[1]{{}^{\neI}#1}
\newcommand{\neI}{\pi}

\newcommand{\setadd}{\oplus}
\newcommand{\setsubtract}{\ominus}

\newcommand{\aset}[1]{\overline{#1}}
\newcommand{\aseq}[1]{\overrightarrow{#1}}

\newcommand{\scalemath}{0.8}
\newcommand{\scalecode}{0.8}
\newcommand{\scalegraph}{0.8}

\newcommand{\dcomp}[1]{\mathbin{\overset{#1}{\circ{\circ}}}}







\begin{document}

\title{Formal Foundations of Continuous Graph Processing}

\author{Philip Dexter}
\email{pdexter1@cs.binghamton.edu}
\author{Yu David Liu}
\email{davidl@binghamton.edu}
\author{Kenneth Chiu}
\email{kchiu@binghamton.edu}
\affiliation{State University of New York (SUNY) at Binghamton}

\author{}

\newcommand{\inlong}[1]{}
\newcommand{\inshort}[1]{#1}


\setcopyright{none}

\bibliographystyle{ACM-Reference-Format}

\begin{abstract}


With the growing need for online and iterative graph processing, software systems that \emph{continuously} process large-scale graphs become widely deployed. 
With optimizations inherent as part of their design, these systems are complex, and have unique features beyond conventional graph processing. 
This paper describes \systemprefix{}, the first semantic foundation for continuous graph processing.
The calculus captures the essential behavior of both the backend graph processing engine and the frontend application, with a focus on two essential features: \emph{temporal locality optimization} (TLO) and \emph{incremental operation processing} (IOP).
A key design insight is that the operations continuously applied to the graph can be captured by a semantics defined over the \emph{operation stream} flowing through the graph nodes. 
\systemprefix{} is a systematic study on the \emph{correctness} of building continuous graph processing systems and applications.
The most important result is \emph{result determinism}: despite significant non-deterministic executions introduced by TLO and IOP, the results produced by \systemprefix{} are the same as conventional graph processing without TLO or IOP. The metatheory of \systemprefix{} is mechanized in Coq.

\end{abstract}

\maketitle

\section{Introduction}
\label{sec:intro}



Large graphs serve as the bedrock for numerous state-of-the-art data-intensive applications, from social networks, bioinformatics, artificial intelligence, to large-scale simulation. To support these applications, there is significant interest in building graph processing systems, such as graph databases~\cite{10.1145/1322432.1322433} and graph analytics and mining systems~\cite{6360146}. What emerges as the essential \emph{modus operandi} in this rapidly developing domain is \emph{continuous} graph processing: a long-running application continuously applies a large number of operations to the continuously evolving graphs. Continuous graph processing is the converging point of two directions of graph processing in active pursuit. First, as data-intensive applications are increasingly deployed on cloud computing platforms and data centers, \emph{online graph processing}~\cite{ediger2012stinger,Cheng:2012:KTP:2168836.2168846,10.1145/2567948.2580051,Vora:2017:KFA:3037697.3037748,10.1145/3267809.3267811,dhulipala2019low,10.1145/3364180,han2014chronos,shi2016tornado,196274,10.1145/2168836.2168854} is highly relevant, where the graph processing engine must process operations at a rapid rate. Second, many graph analytic and mining tasks can only be expressed through processing a large number of operations in iterations, i.e.,  \emph{iterative graph processing}~\cite{kang2009pegasus,Malewicz:2010:PSL:1807167.1807184,low2012distributed,kyrola2012graphchi,10.1145/2517349.2522739,180251,190488}.

\subsection{The Essence of Continuous Graph Processing}

Whereas early graph processing systems focused on how novel graph processing algorithms and designs~\cite{even2011graph} can help effectively process \emph{one} operation, continuous graph processing systems focus on how graph processing can benefit from optimizations that take a \emph{multitude} of operations into account. In this spirit, a broad family of optimizations at the core of continuous graph processing systems can be summarized as \emph{temporal locality optimization (TLO)}: temporally consecutive operations applied to the graph may be manipulated for more effective graph processing \emph{before or during their processing}. In the simple case of two consecutive operations $\processor_1$ and $\processor_2$, where $\processor_1$ is to be applied to the graph before $\processor_2$, four forms of TLO are well known in continuous graph processing: 

\begin{itemize}
    \item \emph{Batching}: processing $\processor_1$ and $\processor_2$ ``in tandem,'' so that only one graph traversal is needed for processing both, as opposed to two traversals if $\processor_1$ and $\processor_2$ are processed one by one. 

    \item \emph{Reordering}: applying $\processor_2$ first and $\processor_1$ later to the graph, on the assumption that this reversed order of application can produce the same result as processing in the original order. Reordering is useful in use scenarios such as $\processor_2$ has a higher priority or a closer deadline.
    
    \item \emph{Fusing}: composing $\processor_1$ and $\processor_2$ into one operation $\processor$, on the assumption that $\processor$ can produce the same result as processing both $\processor_1$ and $\processor_2$. Just like batching, fusing is useful in reducing the number of graph traversals.  
    
    \item \emph{Reusing}:  applying $\processor_1$ and $\processor_2'$ to the graph where $\processor_2'$ derives from $\processor_2$ but reuses the result of $\processor_1$ processing, so that the computation performed for $\processor_1$ processing is not redundantly done. This style of TLO is known as Multi-Query Optimization (MQO)~\cite{10.1145/42201.42203,10.1145/318898.318993,105474} in database systems.
\end{itemize}

A second distinct form of optimization in continuous graph processing is \emph{incremental operation processing} (IOP). Operations often arrive at a high rate in continuous graph processing systems. The need for scalable --- and sometimes real-time --- data processing dictates that ``one operation at a time'' processing often cannot meet the demand. 
A common theme of online continuous graph processing systems 
is to defer some operations for incremental processing later.

Indeed, the two forms of optimizations essential for continuous graph processing go hand in hand: it is often the delay resulted from IOP that multiple operations can co-exist to participate in the optimizations enabled by TLO.

The large body of experimental research demonstrates why continuous graph processing matters: graph databases, graph analytics, and graph mining systems are becoming an indispensable part of state-of-the-art computing infrastructure. At the same time, 
the online and iterative nature of these systems poses distinct challenges 
that call for complex solutions in graph processing engine design spanning TLO and IOP, as well as the programming model for the data-intensive application at the frontend. 
In a nutshell, continuous graph processing systems and applications form a landscape with \emph{wide deployment, unique challenges, and complex and diverse solutions}. Surprisingly, in contrast with the rapid advances in experimental research, no prior formal foundations exist for this important family of software systems.

\begin{figure}[t]
\centering

\begin{tabular}{@{\hspace{-0.8cm}}c@{\hspace{0.5cm}}c}
\scalebox{\scalegraph}{
\begin{tikzpicture}

\node[] at (0, 1) (server) {\hspace{-0.9cm}Backend};
\node[] at (4.5, 1) (client) {\hspace{1.3cm}Frontend};

\node[nnode] at ($(server) + (0.5,-1.55)$) (eve) {$\mathtt{eve}$};
\node[nnode] at ($(eve) + (-2.0,0)$) (deb) {$\mathtt{deb}$};
\node[nnode] at ($(deb) + (1.0,-1.0)$) (cam) {$\mathtt{cam}$};
\node[nnode] at ($(cam) + (-1.0,-1.0)$) (bob) {$\mathtt{bob}$};
\node[nnode] at ($(bob) + (2.0,0)$) (amy) {$\mathtt{amy}$};

\draw[ledge, out=135, in=135, looseness=2] (eve) to (bob);
\draw[ledge, bend right=20] (eve) to (amy);
\draw[ledge, bend left] (deb) to (cam);
\draw[ledge, bend left] (cam) to (bob);
\draw[ledge, bend right=20] (amy) to (eve);
    
\begin{pgfonlayer}{background}
\node[draw, fill=black!75, inner sep=4pt, below left=0.3cm and -0.6cm of client, inner ysep=2.0cm, inner xsep=1.5cm, anchor=north west] (clientbox) {};
\end{pgfonlayer}
\node[text=white, anchor=north west] (t1) at (clientbox.north west) {\tiny $\kwlet $};
\node[text=white] (t2) at (t1.south) {\tiny $\kwlet $};
\draw[-,draw=white,decorate,decoration={snake,amplitude=0.4mm}] (t1) -- (t1 -| clientbox.east);
\draw[-,draw=white,decorate,decoration={snake,amplitude=0.4mm}] (t2) -- (t2 -| clientbox.east);
\coordinate (y) at ([shift={(-0.15,-0.25)}]t2);
\draw[-,draw=white,decorate,decoration={snake,amplitude=0.4mm}] (y) -- (y -| clientbox.east);
\coordinate (y) at ([shift={(-0.15,-0.5)}]t2);
\draw[-,draw=white,decorate,decoration={snake,amplitude=0.4mm}] (y) -- (y -| clientbox.east);
\coordinate (y) at ([shift={(-0.15,-0.75)}]t2);
\draw[-,draw=white,decorate,decoration={snake,amplitude=0.4mm}] (y) -- (y -| clientbox.east);
\coordinate (y) at ([shift={(-0.15,-1.0)}]t2);
\draw[-,draw=white,decorate,decoration={snake,amplitude=0.4mm}] (y) -- (y -| clientbox.east);
\coordinate (y) at ([shift={(-0.15,-1.25)}]t2);
\draw[-,draw=white,decorate,decoration={snake,amplitude=0.4mm}] (y) -- (y -| clientbox.east);
\coordinate (y) at ([shift={(-0.15,-1.5)}]t2);
\draw[-,draw=white,decorate,decoration={snake,amplitude=0.4mm}] (y) -- (y -| clientbox.east);
\coordinate (y) at ([shift={(-0.15,-1.75)}]t2);
\draw[-,draw=white,decorate,decoration={snake,amplitude=0.4mm}] (y) -- (y -| clientbox.east);
\coordinate (y) at ([shift={(-0.15,-2.0)}]t2);
\draw[-,draw=white,decorate,decoration={snake,amplitude=0.4mm}] (y) -- (y -| clientbox.east);
\coordinate (y) at ([shift={(-0.15,-2.25)}]t2);
\draw[-,draw=white,decorate,decoration={snake,amplitude=0.4mm}] (y) -- (y -| clientbox.east);
\coordinate (y) at ([shift={(-0.15,-2.50)}]t2);
\draw[-,draw=white,decorate,decoration={snake,amplitude=0.4mm}] (y) -- (y -| clientbox.east);
\coordinate (y) at ([shift={(-0.15,-2.75)}]t2);
\draw[-,draw=white,decorate,decoration={snake,amplitude=0.4mm}] (y) -- (y -| clientbox.east);
\coordinate (y) at ([shift={(-0.15,-3.0)}]t2);
\draw[-,draw=white,decorate,decoration={snake,amplitude=0.4mm}] (y) -- (y -| clientbox.east);
\coordinate (y) at ([shift={(-0.15,-3.25)}]t2);
\draw[-,draw=white,decorate,decoration={snake,amplitude=0.4mm}] (y) -- (y -| clientbox.east);
\coordinate (y) at ([shift={(-0.15,-3.5)}]t2);
\draw[-,draw=white,decorate,decoration={snake,amplitude=0.4mm}] (y) -- (y -| clientbox.east);
\coordinate (y) at ([shift={(-0.15,-3.75)}]t2);
\draw[-,draw=white,decorate,decoration={snake,amplitude=0.4mm}] (y) -- (y -| clientbox.east);
\node[fill=black!75,anchor=east, inner xsep=0.06cm, inner ysep=2.0cm] at (clientbox.east) {};

\node[draw,fit=(eve)(deb)(bob)(amy),inner sep=0.6cm] (graphbox) {};

\begin{pgfonlayer}{background}
\coordinate (evex) at ($(eve) + (0,0.3)$);
\coordinate (amyx) at ($(amy) + (0,0.3)$);
\draw[sarrow, text opacity=1, rounded corners] (clientbox.west |- evex) to node {\hspace{-2.0cm}$\processor_1, \ldots, \processor_i$} (graphbox.east |- evex) to ($(eve) + (0,0.3)$) to ($(deb) + (0,0.3)$) to ($(cam) + (0,0.3)$) to ($(bob) + (0,0.3)$) to ($(amy) + (0,0.3)$) to (graphbox.east |- amyx) to node {\hspace{-2.0cm}$\val_1, \ldots, \val_j$} (clientbox.west |- amyx);
\end{pgfonlayer}

\end{tikzpicture}
}
&
\raisebox{1cm}{
\begin{tikzpicture}[scale=\exgraphscale]
  \begin{scope}[every node/.style={scale=\exgraphscale,font=\fontsize{12pt}{\baselineskip}\selectfont}]
    \node[draw,dashed,fill=olive!25,align=left] at (1.9,0.5) (legend) {Legends: \\ \hspace{2em} operation stream \\ \hspace{0.5em} $\processor$ \hspace{0.5em} operation \\ \hspace{0.5em} $\val$ \hspace{0.5em} result \\ \hspace{2em} graph processing engine \\ \hspace{2em} data-intensive application \\[1ex] \hspace{2em} graph node and edge};
    \draw[{Latex[length=1.2mm,width=3.2mm]}-,draw=black,line width=1.2mm] ($(legend.west) + (0.13,0.80)$) -- ($(legend.west) + (1.02,0.80)$);
    \draw[{Latex[length=1mm,width=3mm]}-,draw=gray!35,line width=1mm] ($(legend.west) + (0.15,0.80)$) -- ($(legend.west) + (1.00,0.80)$);
    \node[draw,fill=white] at ($(legend.west) + (0.55,-0.25)$) {$\hspace{1em}$};
    \node[draw,fill=black!75] at ($(legend.west) + (0.55,-0.65)$) {$\hspace{1em}$};
    \node[lnode] (x) at ($(legend.west) + (0.4,-1.2)$) {};
    \coordinate (z) at ($(x) + (0.5,0)$);
    \draw[ledge] (x) -- (z);
  \end{scope}
\end{tikzpicture}
}
\end{tabular}

\caption{The Frontend and Backend of Continuous Graph Processing}
\label{fig:clientserver:simple}

\end{figure}

\subsection{\systemprefix{}}
\label{subsec:dcalculus}

We introduce \systemprefix{}, the first formal foundation for continuous graph processing. Our calculus has two high-level design goals:  \emph{illuminating the design space} of continuous graph processing systems and applications, and \emph{improving the assurance} of their construction. 
The behavior captured by \systemprefix{} spans the graph processing application (the \emph{frontend}) and the graph processing system (the \emph{backend}), as illustrated in Fig.~\ref{fig:clientserver:simple}. The frontend program continuously produces data processing operations such as $\processor_1, \ldots, \processor_i$ in the Figure, and delivers them to the backend that maintains a potentially large and evolving graph. As operations are processed and results become available, the backend delivers the latter back to the frontend, such as $\val_1, \ldots, \val_j$ in the Figure. 
\systemprefix{} unifies both the frontend and the backend in one calculus.


\paragraph{A Semantic Model with Operation Streams}

A key insight of \systemprefix{} is that the spirit of \emph{continuous} graph processing can be embodied by viewing the operations as a \emph{stream}, which we call the \emph{operation stream}; more importantly, the operation stream does not only exist at the frontend-backend boundary, but also ``flows through'' the graph nodes. For example, Fig.~\ref{fig:clientserver:simple} shows an operation stream forms from the frontend to backend (which we call the \emph{top-level operation stream} for convenience), and then continues to flow into nodes \texttt{eve}, \texttt{deb}, \texttt{cam}, \texttt{bob}, \texttt{amy}, in that order (which we call the \emph{in-graph operation stream}). In other words, we conceptually align continuous graph processing with \emph{stream processing}. This view is not only aligned with our intuition of what ``continuous'' means, but also does it provide a unified foundation to model the essential features of continuous graph processing: TLO is modeled as stream rewriting, and IOP is modeled as operation propagation in the stream.
We will elaborate these features shortly. To place this novel view in context, 
observe that there is a fundamental difference between \systemprefix{} and \emph{data streaming} (e.g.,~\cite{lucid,lustre,murray2011ciel,Zaharia:2016:ASU:3013530.2934664,10.1145/2517349.2522737,murray2013naiad,thies2002streamit,10.1145/1297027.1297043,Meyerovich:2009:FPL:1640089.1640091,10.1007/978-3-662-44202-9_15}.
In \systemprefix{},
a stream is formed by operations, to be passed through structured data.
In data streaming systems, 
a stream is formed by data, to be passed through structured operations.




\paragraph{A Unified Design Space Spanning Backend and Frontend}
\label{sec:unified}

The centerpiece of \systemprefix{} is an operational semantics to account for the backend behavior, i.e., the graph processing engine. It unifies the design space of TLO and that of IOP into one core calculus. Furthermore, \systemprefix{} consists of a simple programming model, and a type system to reason about the frontend-backend interaction. 

\systemprefix{} captures the essence of TLOs as a reduction relation for rewriting the operation stream. This relation can define all 4 forms of TLOs introduced earlier. In addition to defining \emph{how} TLOs are supported, \systemprefix{} highlights \emph{where} and \emph{when} TLOs may happen. Thanks to in-graph operation streams, TLOs may happen at an arbitrary graph node that the in-graph operation stream flows through, leading to in-graph batching, in-graph reordering, in-graph fusing, and in-graph reusing.  


In \systemprefix{}, in-graph operation streams provide a natural and general way to support IOP. 
In the in-graph operation stream, the operation can incrementally move through the graph nodes, and be deferred at any arbitrary node and resumed later. 
A more basic design widely adopted in graph processing systems --- the operation may be deferred at a ``top-level buffer'' at the boundary between the frontend and the backend~\cite{10.1145/3267809.3267811,Cheng:2012:KTP:2168836.2168846,Cheng:2012:KTP:2168836.2168846,Vora:2017:KFA:3037697.3037748} --- is a special case of IOP design in \systemprefix{}. 


\inlong{
On the front end, \systemprefix{} is endowed with a simple programming model, an extension to $\lambda$ calculus with data processing operations. 
The interaction between the frontend program and the backend graph processing engine is modeled through a pair of expressions: \emph{emitting} the operation from the frontend to the backend, and \emph{claiming} the operation result from the backend to the frontend. 
The frontend and the backend interact asynchronously~\cite{multilisp}, decoupling the frontend from the backend. 
}

\paragraph{Correctness}

With expressive forms of TLO and IOP  unified in one calculus, 
establishing the \emph{correctness} of continuous graph processing
is a non-trivial problem. With TLO, the operations in the stream are altered. With IOP, significant non-deterministic executions are introduced. On the high level, both TLO and IOP can be viewed as optimizations over data processing, and the very premise of \emph{sound} optimization is that graph processing should not produce an unexpected result. 

The main property enjoyed by
\systemprefix{} 
is \emph{result determinism}: despite significant non-deterministic executions introduced by TLO and IOP (see \S~\ref{sec:highlevelsemantics}), all terminating executions of the same program produce the same result. 
This important result subsumes the \emph{observable equivalence} of \systemprefix{} with the baseline \emph{eager data
processing} --- the processing style where operations in the operation stream are processed one by one, without TLO or IOP. 

\inlong{
The result subsumes important properties of
continuous graph processing: 
\begin{itemize}
    \item TLO is \emph{sound}, i.e., it does not alter the final result
of computation despite its manipulation in the operation stream before the processing of operations are completed. 
\item The non-determinism introduced through in-graph operation streams is indeed \emph{benign}, i.e., 
a model with non-deterministic executions but deterministic results.
\end{itemize} 
}

\systemprefix{} is also endowed with an effect type system with the primary goal of enforcing \emph{phase distinction}: while the computation at the frontend can freely issue new operations for backend processing, the backend computation should not issue new operations for processing. If phase distinction were ignored, the non-determinism latent in in-graph operation streams would lead to non-determinism in results. Intuitively, this is analogous to a high-level data race that our type system eliminates.

\subsection{Contributions}

To the best of our knowledge, \systemprefix{} is the first rigorous study on continuous graph processing. Its scope spans the backend graph processing engine design, the frontend programming model, and the interaction between the two. 
Technically, this paper makes the following contributions:

\begin{itemize}

\item a novel operational semantics based on operation streams to capture the essence of the backend of continuous graph processing, where TLO and IOP are unified in one system;


\item a frontend asynchronous programming model with a type system to enforce phase distinction;


\item a metatheory establishing result determinism in the presence of non-deterministic executions, 
as well as type soundness; 

\item mechanized proofs in Coq for the metatheory, available online~\footnote{anonymous at \url{https://github.com/anonymous10012/lang.v}}.




\end{itemize}

\section{Motivating Examples}
\label{sec:background}

In this section, we informally highlight the essential features of \systemprefix{} through examples. The two examples --- one on graph databases and the other on graph analytics --- also serve as concrete usage contexts that motivate the rigorous development of \systemprefix{}. In other words, the expressiveness and correctness of graph databases and graph analytic systems matter. 



\subsection{Motivating Usage Context 1: Continuous Graph Processing Databases}
\label{sec:coresocial}

\begin{figure}[t]
\centering

\begin{minipage}[t]{0.38\textwidth}
\scalebox{\scalecode}{
\begin{minipage}[t]{\linewidth}
\begin{algorithmic}[1]\small
\State \texttt{// node payload values}
\State $\kwlet\ \mathtt{amy, bob, cam, deb, eve, fred} =$
\Statex $\hspace{0.55cm}\mathtt{n_{amy}, n_{bob}, n_{cam}, n_{deb}, n_{eve}, n_{fred}}\ \kwin$
\State \texttt{// graph construction}
\State $\kwlet\ \mathtt{a} = \mathtt{\color{blue}\mathtt{add\ amy}}\ \kwin$
\State $\kwlet\ \mathtt{b} = \mathtt{\color{blue}\mathtt{add\ bob}}\ \kwin$
\State $\kwlet\ \mathtt{c} = \mathtt{\color{blue}\mathtt{add\ cam}}\ \kwin$
\State $\kwlet\ \mathtt{d} = \mathtt{\color{blue}\mathtt{add\ deb}}\ \kwin$
\State $\kwlet\ \mathtt{e} = \mathtt{\color{blue}\mathtt{add\ eve}}\ \kwin$
\State ${\color{blue}\mathtt{addRelationship}\ \mathtt{c}\ \mathtt{b}};$
\State ${\color{blue}\mathtt{addRelationship}\ \mathtt{d}\ \mathtt{c}};$
\State $\mathtt{\color{blue}\mathtt{addRelationship}\ \mathtt{e}\ \mathtt{b}};$
\State $\mathtt{\color{blue}\mathtt{addRelationship}\ \mathtt{e}\ \mathtt{a}};$
\State $\mathtt{\color{blue}\mathtt{addRelationship}\ \mathtt{a}\ \mathtt{e}};$
\State \texttt{// dynamic queries and updates}
\State $\kwlet\ \mathtt{nb} = \mathtt{\color{blue}\mathtt{queryNode}\ \mathtt{b}}\ \kwin$
\State $\mathtt{\color{blue}\mathtt{updatePayload}\ \mathtt{a}\ \mathtt{nb}};$
\State $\kwlet\ \mathtt{nb2} = \mathtt{\color{blue}\mathtt{queryNode}\ \mathtt{b}}\ \kwin$
\State $\kwlet\ \mathtt{f} = \mathtt{\color{blue}\mathtt{add}\ \mathtt{fred}}\ \kwin$
\State $\mathtt{\color{blue}\mathtt{addRelationship}\ \mathtt{b}\ \mathtt{f}};$
\State $\mathtt{\color{blue}\mathtt{deleteRelationship}\ \mathtt{b}\ \mathtt{f}};$
\State $\ldots$
\end{algorithmic}
\end{minipage}
}
\caption{The \textsc{CoreSocial} Application in Pseudocode}
\label{fig:serverclient}
\end{minipage}\hfill\vline\hfill
\hspace{-1em}\begin{minipage}[t]{0.58\textwidth}
\scalebox{\scalecode}{
\begin{minipage}[t]{\textwidth}
\begin{algorithmic}[1]\small
\State $\kwlet\ \mathtt{numSuperSteps = 30}\ \kwin$
\State \texttt{// keys of interest}
\State $\kwlet\ \mathtt{keys} = \ldots \ \kwin$
\State $\kwlet\ \mathtt{numNodes} = \mathtt{length\ keys}\ \kwin$
\State $\kwlet\ \mathtt{fPInit = \lambda \tuple{\_; \_; \_} . \frac{1}{\mathtt{numNodes}}}\ \kwin$
\State $\dgcode{\mathtt{mapVal\ fPInit\ keys}};$
\State $\mathtt{foreach\ 1..numSuperSteps}$ \label{l:loop}
\Indent
\State $\kwlet\ \mathtt{neighborPSums} =$
\Indent
\State $[\tuple{\mathtt{nk}; \dgcode{\mathtt{foldVal\ fPSum\ 0\ keys}}}$ \label{l:listc}
\State $\hspace{0.5em}\mid \mathtt{nk \ in\ keys},$
\State $\hspace{1.1em}\kwlet\ \mathtt{fPSum = \lambda \tuple{\_;\ payload;\ adjlist} . \lambda sum . if\ nk \ in\ adjlist}$
\Statex $\hspace{7.0cm}\mathtt{then\ payload + sum}$
\Statex $\hspace{7.0cm}\mathtt{else\ sum}]\ \kwin$
\EndIndent
\State $\mathtt{foreach\ \tuple{nk; \tuple{\_;\ neighborsSum;\ \_}}\ in\ neighborPSums}$
\Indent
\State $\kwlet\ \mathtt{fPG = \lambda \tuple{\_;\ \_;\ adjlist} . \frac{0.15}{numNodes} + 0.85 * \frac{neighborsSum}{length\ adjlist}}\ \kwin$
\State $\dgcode{\mathtt{mapVal\ fPG\ [ nk ]}}$
\EndIndent
\EndIndent
\end{algorithmic}
\end{minipage}
}\vskip0.2in
\caption{The \textsc{CorePR} Application in Pseudocode (Expressions encodable by $\lambda$ calculus are liberally used, such as loop at Line~\ref{l:loop} and list comprehension at Line~\ref{l:listc}. A summary of encoded expressions can be found in \S~\ref{sec:syntax}. Notation $\_$ represents a name that does not occur in the function body.  )
}
\label{fig:pagerank} 
\end{minipage}

\end{figure}

Graph databases~\cite{papakonstantinou1995object,buneman1996query,Venkataramani:2012:TFS:2213836.2213957}
are an important family of databases that rely on structured graphs for data storage. 
\inlong{
They are known for their intuitive and expressive representation for complex relationships among data. In real-world deployment settings, a graph database often serves as the backend for continuous processing of queries and updates, defined through application programming interface (API) functions defined by the database. 
}


\begin{example}[\textsc{CoreSocial} in \systemprefix{} Sugared Syntax]
\label{ex:serverclient}
Fig.~\ref{fig:serverclient} shows a minimalistic program for maintaining a social network in the form of 
a graph database. In this sugared syntax, 
\inlong{
akin to ML/Haskell functional programming, the sequenced expressions (through the $\kwlet\dots \kwin$ expression and the ; expression) mimic the continuous submission of database queries and updates.  
%
}
Lines 1-8 are node additions and
Lines 9-13 are relationship additions.
The remaining lines further consist of a mixture of queries (Lines 15 and 17) and updates (Lines 16, 18, 19, 20). Each graph processing operation --- highlighted in {\color{blue}blue} --- is analogous to an API function in the graph database. 
The (logical) graph after the program reaches Line 13
is shown as the backend of Fig.~\ref{fig:clientserver:simple}.
\end{example}


The programmer syntax assumed by our calculus is conventional: 
it consists of standard features encodable by $\lambda$ calculus, together with graph processing expressions.
%
%
The runtime representation of the graph is also conventional. A directed graph consists of a sequence of \emph{nodes}, each consisting of a unique identifier (\emph{key}), a value the node carries (\emph{payload}), and a list of keys for its adjacent nodes connected through out-edges (\emph{adjacency list}). In the \textsc{CoreSocial} example, the $\mathtt{add}$ expression at Line 4 adds a node whose key is freshly generated, whose payload is $\integer_\mathtt{amy}$, and whose adjacent list is initialized as an empty sequence. The generated key is returned and bound to name $\mathtt{a}$.
At Line 9, the $\mathtt{addRelationship}$ expression updates the adjacent list of node whose payload is $\integer_\mathtt{cam}$ with the single sequence that contains the key of the node whose payload is $\integer_\mathtt{bob}$.
The functionality of other graph processing expressions should be self-explanatory through their names.   


The simple \textsc{CoreSocial} example demonstrates that \systemprefix{}
support \emph{dynamic graphs}, i.e., it allows for continuous graph evolution in terms of both node and topology changes. 
\inlong{
This is a basic requirement for graph databases, but note that in the general context of continuous graph processing, our model is capable of modeling the more expressive subset of solutions in support of graph evolution. Indeed, in \systemprefix{}, operations that may lead to graph changes (\emph{updates}) and those that do not (\emph{queries}) can intermingle in an arbitrary order without restriction. 
}



Beneath the conventional programmer syntax, \emph{asynchronous} semantics is designed for graph operations: the evaluation of a graph operation at the frontend does \emph{not} need to 
block until the backend database returns the result. Instead, the evaluation places the operation of concern into the operation stream destined for the backend, which we say the operation is \emph{emitted} from now on.


\begin{example}[Graph Database Processing as Operation Streams]
\label{ex:streaming}
The execution through Line 4-8 emit 5 operations forming an operation stream as 
$[\addn\ \mathtt{n}_\mathtt{amy}, \addn\ \mathtt{n}_\mathtt{bob}, \addn\ \mathtt{n}_\mathtt{cam}, \addn\ \mathtt{n}_\mathtt{deb}, \addn\ \mathtt{n}_\mathtt{eve}]$. 
\end{example}

Modeling the frontend-backend interaction through asynchronous semantics is not new. Indeed, asynchronous data processing~\cite{elteir2010enhancing,wang2013asynchronous,bertsekas1989parallel} is a classic data processing system design, often justified by the performance impedance mismatch between the application and the underlying system. How to design a programming language with asynchronous
semantics is also well understood. \systemprefix{} follows the same route of futures~\cite{multilisp}. For example, the emission of $\mathtt{add\ amy}$ at Line 4 generates a future value, which is subsequently \emph{claimed} at Line 12 \emph{a la} future semantics. In this backdrop, \systemprefix{} is minimalistic in its support on the frontend, but with a fresh motivation on why asynchronous data processing matters: it enables operation streams and brings in their benefits, e.g., on TLO.



The \textsc{CoreSocial} example further shows that \systemprefix{} supports \emph{dependent operations}: an operation may have an argument referring to the result of an earlier emitted operation. 

\begin{example}[Dependent Operations]
\label{ex:depend}
Line 15 queries the node $\mathtt{b}$ through the
$\mathtt{queryNode}$ expression. The resulting
value is used to update the payload
of the node $\mathtt{a}$ at Line 16 through 
the $\mathtt{updatePayload}$ expression. 
\end{example}

\inlong{
If operation processing were synchronous, dependent operations would be a non-feature: the frontend could simply eliminate the dependency through a $\beta$ reduction. 
}

With asynchronous operation processing, dependent operations entail that the dependency is carried into the operation stream and subsequently resolved at the \emph{backend}. 
To revisit Example~\ref{ex:depend}, the execution of Line 15-16 emits both operations into the operation stream. At the \emph{backend}, the argument of the $\mathtt{updatePayload}$ expression, a future value, can be claimed upon the completion of processing $\mathtt{queryNode}$, without any interaction with the frontend. 


\subsection{Motivating Usage Context 2: Iterative Graph Analytics}
\label{sec:corepr}

Graph analytics and mining is another important workload for 
graph processing. These programs have an algorithm-centric core for computing often graph-theoretic properties. 
Most realistic algorithms involve
multiple \emph{iterations}, each of which involves
non-trivial computations based on graph payload and topological information.
In this section, we use PageRank~\cite{pagerank} as an example to describe how
\systemprefix{} can capture the essential behavior of iterative graph analytics. 
For iterative algorithms such as PageRank, each iteration is often referred to as a \emph{super-step}.
\inlong{
At the first super-step,
each node is initialized with an identical payload value of
$\frac{1}{N}$ where $N$ is the number of nodes in the graph.
For each ensuing super-step, each node updates its payload with a new
value that aggregates the payload values of its in-degree 
adjacent nodes.
}

\begin{example}[\textsc{CorePR} in \systemprefix{} Sugared Syntax]
Fig.~\ref{fig:pagerank} presents a 30-super step PageRank algorithm in \systemprefix{} sugared syntax.
Lines 2-4 compute the number of nodes.
Lines 5-6 initialize the node payloads.
Line 7 iterates over super-steps.
Each super-step has two sub-steps.
The first sub-step, at Lines 9-11,
computes the sum of payload values for each node's in-degree adjacent nodes.
The second sub-step, a loop at Lines 12-14,
updates each node with a new payload value,
by utilizing $\mathtt{fPG}$, the core PageRank aggregation function. 
\end{example}

This program uses an extensive set of language features (encodable by $\lambda$ calculus),
but the \systemprefix{}-specific expressions are only two graph processing operations,
shown in {\color{blue}blue}. The $\mathtt{mapVal}$-$\mathtt{foldVal}$ pair should be familiar to
MapReduce~\cite{mapreduce} users and functional programmers. 
Here, \emph{selective} map/fold is supported: the last argument for the $\mathtt{mapVal}$ or $\mathtt{foldVal}$ operation is a list of keys,
and the operation will only be applied to nodes whose keys appear in the list.
%
%
Both operations can inspect the node, and hence the mapping function and (the second argument of) the folding function take in a triple as an argument (recall that a graph node consists of a key, a payload, and its out-edge adjacency list). The $\mathtt{mapVal}$ operation updates the payload of the node, and the $\mathtt{foldVal}$ operation folds into a value in the same type of the payload.

Through the lens of \systemprefix{}, an insight drawn from this example is that an iterative graph analytics
program consists of numerous graph processing operations --- within a
super-step and across super-steps --- continuously applied to
the graph being analyzed. Just as the \textsc{CoreSocial} example in \S\ref{sec:coresocial},  
the evaluation of a graph processing operation such as $\mathtt{mapVal}$ or $\mathtt{foldVal}$ in \textsc{CorePR} 
does not require the backend graph to complete its processing. 
Instead, each evaluation at Line 3, Line 6, Line 9, and Line 14, results in emitting a graph processing operation to the operation stream, in the same behavior we described in Example~\ref{ex:streaming}. 

\begin{example}[Iterative Graph Analytics as Continuous Graph Processing]
\label{ex:itera}
Assume \textsc{CorePR} is applied to a graph with 2 nodes whose keys are $\mathtt{k}_1$ and $\mathtt{k}_2$. Its execution through the initial step and the first two super-steps produces the following operation stream:
\begin{center}
\scalebox{\scalemath}{
$\begin{array}{l}
[\mathtt{mapVal}\ \mathtt{fPInit\ keys},
\\
\mathtt{foldVal}\ \mathtt{fPSum}_{1,1}\ 0\ \mathtt{keys},
\mathtt{foldVal}\ \mathtt{fPSum}_{1,2}\ 0\ \mathtt{keys},
\mathtt{mapVal}\ \mathtt{fPG}_{1,1}\ [ \mathtt{k}_1 ],
\mathtt{mapVal}\ \mathtt{fPG}_{1,2}\ [ \mathtt{k}_2 ],
\\
\mathtt{foldVal}\ \mathtt{fPSum}_{2,1}\ 0\ \mathtt{keys},
\mathtt{foldVal}\ \mathtt{fPSum}_{2,2}\ 0\ \mathtt{keys},
\mathtt{mapVal}\ \mathtt{fPG}_{2,1}\ [ \mathtt{k}_1 ],
\mathtt{mapVal}\ \mathtt{fPG}_{2,2}\ [ \mathtt{k}_2 ]
]
\end{array}$
}
\end{center}
where $\mathtt{fPSum}_{i,j}$ and $\mathtt{fPG}_{i,j}$ are the closures of $\mathtt{fPSum}$ and $\mathtt{fPG}$ functions
in the $i^{th}$ iteration of the loop starting at Line 9
and the $j^{th}$ iteration of the loop starting at Line 12.
\end{example}


\subsection{IOP in Continuous Graph Processing}
\label{subsec:ingraph}


Taking a per-operation view, graph processing can be viewed as a process that reaches the graph nodes one by one following the traversal order (\emph{propagation}), and along the way, computation is performed when the operation reaches the node(s) it is intended for (\emph{realization}). 
The default ``baseline'' behavior in graph processing is \emph{eager processing}: the processing of an operation must be completed before another is started. 
\inlong{
This is independent of the frontend behavior: even if the frontend produces an operation stream, the backend may still be designed by processing each operation in this ``top-level'' operation stream
\emph{one at a time}.
}

\begin{example}[Eager Processing]
\label{ex:eager}
If one were to apply eager processing for executing Lines 9-10 in Fig.~\ref{fig:serverclient}, the backend would process $\processor_1$ first, traversing through $\mathtt{eve}$ and $\mathtt{deb}$, and finally realizing at the latter. After the completion of $\processor_1$, the traversal of the graph may start for $\processor_2$, 
through nodes $\mathtt{eve}$, $\mathtt{deb}$,
and $\mathtt{cam}$.
\end{example}

In contrast, \systemprefix{} naturally supports ``many operations at a time'' processing: 





\newcommand{\basegraph}{%
    \node[nnode] at ($(server) + (0.5,-1.55)$) (eve) {$\mathtt{eve}$};
    \node[nnode] at ($(eve) + (-2.0,0)$) (deb) {$\mathtt{deb}$};
    \node[nnode] at ($(deb) + (1.0,-1.0)$) (cam) {$\mathtt{cam}$};
    \node[nnode] at ($(cam) + (-1.0,-1.0)$) (bob) {$\mathtt{bob}$};
    \node[nnode] at ($(bob) + (2.0,0)$) (amy) {$\mathtt{amy}$};
    \begin{pgfonlayer}{background}
    \coordinate (evex) at ($(eve) + (0,0.3)$);
    \coordinate (amyx) at ($(amy) + (0,0.3)$);
    \draw[sarrow, text opacity=1, rounded corners] ($(eve) + (1.8,0.3)$) to ($(eve) + (0,0.3)$) to ($(deb) + (0,0.3)$) to ($(cam) + (0,0.3)$) to ($(bob) + (0,0.3)$) to ($(amy) + (0,0.3)$) to ($(amy) + (1.5,0.3)$);
    \end{pgfonlayer}
}

\begin{figure}[t]
\centering

\begin{tabular}{c@{}c@{}c}
\begin{subfigure}{0.33\linewidth}
\centering
  \scalebox{\scalegraph}{
  \begin{tikzpicture}
    
    \basegraph

    \node[op] at ($(eve) + (1.5,0.3)$) (op2) {\scriptsize$\processor_2$};
    \node[op] at ($(eve) + (1.1,0.3)$) (op1) {\scriptsize$\processor_1$};
    
  \end{tikzpicture}
  }
  \caption{$\processor_1$ and $\processor_2$ at backend}
\end{subfigure}
&
\begin{subfigure}{0.33\linewidth}
\centering
  \scalebox{\scalegraph}{
  \begin{tikzpicture}
  
    \basegraph
    
    \node[op] at ($(eve) + (1.5,0.3)$) (op2) {\scriptsize$\processor_2$};
    \node[op] at ($(eve) + (0,0.3)$) (op1) {\scriptsize$\processor_1$};
    
  \end{tikzpicture}
  }
  \caption{$\processor_1$ propagates to $\mathtt{eve}$}
\end{subfigure}
&
\begin{subfigure}{0.33\linewidth}
\centering
  \scalebox{\scalegraph}{
  \begin{tikzpicture}
  
    \basegraph
    
    \node[op] at ($(eve) + (0.2,0.3)$) (op2) {\scriptsize$\processor_2$};
    \node[op] at ($(eve) + (-0.2,0.3)$) (op1) {\scriptsize$\processor_1$};
    
  \end{tikzpicture}
  }
  \caption{$\processor_2$ propagates to $\mathtt{eve}$}
\end{subfigure}
\\\\\hline\\
\begin{subfigure}{0.33\linewidth}
\centering
  \scalebox{\scalegraph}{
  \begin{tikzpicture}
  
    \basegraph
    
    \node[op] at ($(eve) + (0,0.3)$) (op2) {\scriptsize$\processor_2$};
    \node[op] at ($(deb) + (0,0.3)$) (op1) {\scriptsize$\processor_1$};
    
  \end{tikzpicture}
  }
  \caption{$\processor_1$ propagates to $\mathtt{deb}$}
\end{subfigure}
&
\begin{subfigure}{0.33\linewidth}
\centering
  \scalebox{\scalegraph}{
  \begin{tikzpicture}
  
    \basegraph
    
    \node[op] at ($(eve) + (0,0.3)$) (op2) {\scriptsize$\processor_2$};
    \node[op] at ($(cam) + (0,0.3)$) (op1) {\scriptsize$\processor_1$};
    
  \end{tikzpicture}
  }
  \caption{$\processor_1$ propagates to $\mathtt{cam}$}
\end{subfigure}
&
\begin{subfigure}{0.33\linewidth}
\centering
  \scalebox{\scalegraph}{
  \begin{tikzpicture}
  
    \basegraph
    
    \node[op] at ($(deb) + (0,0.3)$) (op2) {\scriptsize$\processor_2$};
    \node[op] at ($(cam) + (0,0.3)$) (op1) {\scriptsize$\processor_1$};
    
  \end{tikzpicture}
  }
  \caption{$\processor_2$ propagates to $\mathtt{deb}$}
\end{subfigure}
\\\\\hline\\
\begin{subfigure}{0.33\linewidth}
\centering
  \scalebox{\scalegraph}{
  \begin{tikzpicture}
  
    \basegraph
    
    \node[op] at ($(deb) + (0,0.3)$) (op2) {\scriptsize$\processor_2$};
    \draw[ledge, bend left] (cam) to (bob);
    
  \end{tikzpicture}
  }
  \caption{$\processor_1$ realizes at $\mathtt{cam}$}
\end{subfigure}
&
\begin{subfigure}{0.33\linewidth}
\centering
  \scalebox{\scalegraph}{
  \begin{tikzpicture}
  
    \basegraph
    
    \draw[ledge, bend left] (deb) to (cam);
    \draw[ledge, bend left] (cam) to (bob);
    
  \end{tikzpicture}
  }
  \caption{$\processor_2$ realizes at $\mathtt{deb}$}
\end{subfigure}
&
\begin{tikzpicture}[scale=\exgraphscale]
  \begin{scope}[every node/.style={scale=\exgraphscale,font=\fontsize{12pt}{\baselineskip}\selectfont}]
    \node[draw,dashed,fill=olive!25,align=left] at (1.9,0.5) (legend) {Legend: \\ \hspace{0.5em} $\processor_1$ \hspace{0.5em} $\mathtt{addRelationship\ c\ b}$\\ \hspace{0.5em} $\processor_2$ \hspace{0.5em} $\mathtt{addRelationship\ d\ c}$};
  \end{scope}
\end{tikzpicture}
\end{tabular}

\caption{In-Graph Operation Streams for Fig.~\ref{fig:serverclient} Lines 9-10.}
\label{fig:datacentriclaziness}

\end{figure}



\begin{example}[In-Graph Operation Streams]
\label{ex:opstream}
Fig~\ref{fig:datacentriclaziness} illustrates 8 runtime configurations of the backend graph for \textsc{CoreSocial} in a \systemprefix{} reduction sequence. The first one coincides with the moment when the processing of Line 1-8 in Fig.~\ref{fig:serverclient} is completed, and the operations in Line 9-10 have been emitted but not processed. These two operations $\processor_1$ and $\processor_2$ flow through the graph nodes following the traversal of $\mathtt{eve}$, $\mathtt{deb}$, $\mathtt{cam}$, $\mathtt{bob}$, $\mathtt{amy}$, in that order.
Intuitively, the in-graph stream view entails that the traversal of multiple operations may co-exist: for configurations (b)(c)(d)(e)(f), neither $\processor_1$ nor $\processor_2$ is completed. In addition, the propagation steps for different operations may intermingle, the first 3 transitions in Fig.~\ref{fig:datacentriclaziness} are propagation steps for $\processor_1$, $\processor_2$, and $\processor_1$, respectively. 

%
\end{example}


The behavior exhibited in Example~\ref{ex:opstream} is \emph{incremental propagation}:  
the processing of $\processor_1$ can be deferred without the need of ``rushing'' to its realization. When $\processor_1$ is deferred, the runtime can process (i.e., either propagate or realize) another operation, such as the later emitted $\processor_2$. 
\inlong{
Fig.~\ref{fig:eagerlazydraw} provides a timeline to visualize the difference between eager and lazy processing. Whereas Fig.~\ref{fig:eagerlazydraw:eager} requires no overlapping in processing time between the two operations,  the timeline of Fig.~\ref{fig:eagerlazydraw:lazy} is interleaved. 
When an operation takes a propagation or realization step, another co-existing operation  in the in-graph operation stream can simply be held at the node it has flows to in the operation stream, which we informally refer to as the \emph{host node} of the operation. We also say the latter node is now in \emph{suspension}. 
}



As a general calculus, \systemprefix{} places no restriction on the ``schedule'' of operation stream processing: when multiple operations are processed, a \emph{non-deterministic} choice can be made as to which operation should take a step. For example, instead of transitioning from Fig.~\ref{fig:datacentriclaziness}(b) to Fig.~\ref{fig:datacentriclaziness}(c), the program runtime may choose to have $\processor_1$ take another propagation step to $\mathtt{deb}$. A data processing system that supports non-deterministic executions and deterministic results --- which \systemprefix{} enjoys --- is good news for adaptiveness support, which now illustrate through a straggler example, a classic problem in data processing~\cite{188988}.

 
\inlong{

\begin{figure}
\centering

\begin{tabular}{ccc}
\begin{subfigure}{0.33\textwidth}
\centering

\scalebox{\scalegraph}{
\begin{tikzpicture}

\draw[-Triangle] (-0.1,0) -- (4.7,0);
\draw (-0.1,-0.1) -- (-0.1,0.1);
\node at (2.4, -0.5) {Elapsed Time};

\node[tl, text width=2cm] at (0.1,1) (b1) {$\processor_1$};

\node[tl, text width=2cm] at (2.4,1.75) (b2) {$\processor_2$};

\end{tikzpicture}
}

\caption{Eager Operation Processing}
\label{fig:eagerlazydraw:eager}

\end{subfigure}
&
\begin{subfigure}{0.33\textwidth}
\centering

\scalebox{\scalegraph}{
\begin{tikzpicture}

\draw[-Triangle] (-0.1,0) -- (4.7,0);
\draw (-0.1,-0.1) -- (-0.1,0.1);
\node at (2.4, -0.5) {Elapsed Time};

\node[tl, text width=3.3cm] at (0.1,1) (b1) {$\processor_1$};
\node[tlp, inner xsep=0cm, text width=1.2cm] at (1.2,1) {};

\node[tl, text width=3.2cm] at (1.2,1.75) (b2) {$\processor_2$};
\node[tlp, inner xsep=0cm, text width=1.2cm] at (2.4,1.75) {};

\end{tikzpicture}
}

\caption{Lazy Operation Processing}
\label{fig:eagerlazydraw:lazy}

\end{subfigure}
&
\scalebox{\scalegraph}{
\begin{tikzpicture}
\node[draw,dashed,fill=olive!25,align=left] at (2.3,4.0) (legend) {Legend: \\ \hspace{2.5em} execution \\[1.5ex] \hspace{2.5em} suspension};
\node[tl, text width=0.50cm, inner ysep=0.19cm] at ($(legend.west) + (0.20,0.05)$) {};
\node[tlp, text width=0.20cm, inner ysep=0.19cm] at ($(legend.west) + (0.20,-0.4)$) {};
\end{tikzpicture}
}
\end{tabular}

\caption{A Timeline for The Operation Processing of 2 Operations Emitted at Lines 9-10 in
Fig.~\ref{fig:serverclient}
}
\label{fig:eagerlazydraw}

\end{figure}
}


\begin{figure}[t]
\centering

\begin{tabular}{@{}c@{\ \ \ \ }|@{\ \ \ }c@{}}
\begin{subfigure}{0.5\textwidth}
\centering

\scalebox{\scalegraph}{
\begin{tikzpicture}
\node[draw,dashed,fill=olive!25,align=left] at (1.3,4.0) (legend) {Legend: \\ \hspace{2.5em} straggling \\ \hspace{0.5em} $\processor_{4\textrm{--}7}$ \hspace{0.5em} 4$^\textrm{th}$--7$^\textrm{th}$ operations in Example~\ref{ex:itera}};
\node[tlp, inner ysep=0.19cm, text width=0.20cm] at ($(legend.west) + (0.20,0.0)$) {};
\draw[thick, dotted] ($(legend.west) + (0.2,0.0)$) -- ($(legend.west) + (0.9,0.0)$);
\draw[thick, dotted] ($(legend.west) + (0.2,0.12)$) -- ($(legend.west) + (0.9,0.12)$);
\draw[thick, dotted] ($(legend.west) + (0.2,-0.12)$) -- ($(legend.west) + (0.9,-0.12)$);

\draw (-0.1,0) -- (1.6,0);
\draw[dotted] (1.6,0) -- (3.8,0);
\draw (3.8,0) -- (6.3,0);
\draw[dotted] (-0.1,0) -- (-0.6,0);
\draw[dotted,-Triangle] (6.3,0) -- (6.9,0);
\node at (2.7, -0.75) {Elapsed Time};
\node[text=black!75] at (-0.75, 1.75) (ss1) {Super Step 1};
\node[text=black!75] at (-0.75, 2.5) (ss2) {Super Step 2};

\node[tl] at (0.1,1) (b1) {$\processor_4$};

\node[tl, text width=0.5cm] at (0.9,1.75) (b2) {$\processor_5$};
\node[tl, text opacity=0, text width=0.5cm] at (3.7, 1.75) {$\nkey_1$};
\node[tlp, text width=1.65cm] at (1.6,1.75) {};
\draw[thick, dotted] (1.7, 1.75) -- (3.7, 1.75);
\draw[thick, dotted] (1.7, 1.95) -- (3.7, 1.95);
\draw[thick, dotted] (1.7, 1.55) -- (3.7, 1.55);

\node[tl] at (4.5,2.5) (b3) {$\processor_6$};

\node[tl] at (5.3,3.25) (b4) {$\processor_7$};

\node[fit=(b1)(b2)(ss1)] (sb1) {};
\node[fit=(b3)(b4)(ss2)] (sb2) {};
\coordinate (end) at (6.9,0);
\draw[draw=black!75, decorate, decoration={zigzag, amplitude=0.3mm}] (sb1.north west) -- (sb1.north east -| end);
\end{tikzpicture}
}

\caption{Eager Processing}
\label{fig:timeline:eager}

\end{subfigure}
&
\begin{subfigure}{0.5\textwidth}
\centering

\scalebox{\scalegraph}{
\begin{tikzpicture}

\node at (1.3,4.6) {};

\draw (-0.1,0) -- (3.6,0);
\draw[dotted,-Triangle] (3.6,0) -- (6.1,0);
\draw[dotted] (-0.1,0) -- (-0.6,0);
\node at (2.7, -0.75) {Elapsed Time};
\node[text=black!75] at (-0.75, 1.75) (ss1) {Super Step 1};
\node[text=black!75] at (-0.75, 2.5) (ss2) {Super Step 2};

\node[tl, text width=1.75cm] at (0.1,1) (b1) {$\processor_4$};
\node[tlp, text width=0.5cm] at (0.6,1) {};

\node[tl, text width=3.7cm] at (0.6,1.75) (b2) {$\processor_5$};
\node[tlp, text width=3.9cm] at (1.6,1.75) {};
\node[tlp, inner ysep=0.25cm, text width=0.1cm] at (9.0, 1.75) {};
\draw[thick, dotted] (3.6, 1.75) -- (6.0, 1.75);
\draw[thick, dotted] (3.6, 1.95) -- (6.0, 1.95);
\draw[thick, dotted] (3.6, 1.55) -- (6.0, 1.55);

\node[tl, text width=1.25cm] at (2.1,2.5) (b3) {$\processor_6$};
\node[tlp, text width=0.25cm] at (2.6, 2.5) {};

\node[tl, text width=0.50cm] at (2.6,3.25) (b4) {$\processor_7$};

\node[fit=(b1)(b2)(ss1)] (sb1) {};
\node[fit=(b3)(b4)(ss2)] (sb2) {};
\coordinate (end) at (6.1,0);
\draw[draw=black!75, decorate, decoration={zigzag, amplitude=0.3mm}] (sb1.north west) -- (sb1.north east -| end);

\end{tikzpicture}
}

\caption{Adaptive Processing}
\label{fig:timeline:lazy}

\end{subfigure}
\end{tabular}

\caption{PageRank Execution with Stragglers}
\label{fig:timeline}

\end{figure}


\begin{example}[Super-Step Blending for Straggler Mitigation]
\label{ex:superstepblend}
Fig.~\ref{fig:timeline} illustrates two timelines of execution
of \textsc{CorePR} in the same setting as Example~\ref{ex:itera}. We only plot each timeline to start when Line 14 is reached for the first time (the fourth operation in Example~\ref{ex:itera}).
Due to system resource fluctuations and transient failures, let us assume the processing of operation $\processor_5$ may be suspended, becoming a straggler.
In Fig.~\ref{fig:timeline:eager}, the slowdown by the straggler delays the beginning of the next super-step. 
In Fig.~\ref{fig:timeline:lazy} however, while the straggler is suspended, operation $\processor_6$ in
the second super-step may start, intuitively interleaving the two super-steps. 
The straggling operation will indeed have to eventually
complete, but the program is not delayed by its straggling.
\end{example}




Another dimension of IOP support in \systemprefix{} is \emph{incremental load update}: 

\begin{example}[Incremental Load Update]
Suppose the operation at Line 14 is processed at the backend and the node indicated by $\mathtt{nk}$ is reached whose payload value is 5. Our calculus will update the payload of the node with expression $\mathtt{fPG\ 5}$. The realization step is completed without the need to evaluate $\mathtt{fPG\ 5}$ to a value. The intra-node evaluation of the payload expression can be deferred and resumed later non-deterministically. From now on, we informally refer to any expression inside a runtime graph node as a \emph{load}. 
\end{example} 

\inlong{

Furthermore, the evaluation of the payload expression above can also be performed non-deterministically. A client calculus of \systemprefix{} can impose the evaluation schedule based on a variety of factors, e.g., systems utilization, user-defined priorities, or deadlines. 

}


It should be made clear that non-deterministic executions are a feature only when they can introduce deterministic results. In other words, a non-deterministic propagation is \emph{not} an arbitrary propagation. In particular, note that the operations in the operation stream form a \emph{chronological order} that indicates the order of emission. It must be preserved (unless TLO rules allows for reordering). This requirement can be understood through an example.

\begin{example}[Chronological Order Preservation]
Let us assume the payload value in \texttt{amy} is initially 1, i.e., $\mathtt{n_{amy}} = 1$. Operation $\processor_1$ is an operation to double the payload of \texttt{amy} while $\processor_2$ is an operation to add the payload of \texttt{amy} by 10. After the two operations are completed, \texttt{amy} should have a payload of 12. Should we allow $\processor_2$ to ``swap'' with $\processor_1$, the payload of \texttt{amy} would be 22.
\end{example}

\inlong{

\begin{figure}[t]
\centering

\begin{tabular}{ccc}
\begin{subfigure}{0.33\linewidth}
\centering
\scalebox{\scalegraph}{
  \begin{tikzpicture}
  
    \basegraph
    
    \node[op] at ($(eve) + (0.2,0.3)$) (op2) {\scriptsize$\processor_2$};
    \node[op] at ($(eve) + (-0.2,0.3)$) (op1) {\scriptsize$\processor_1$};
    
  \end{tikzpicture}
  }
  \caption{$\processor_1$ and $\processor_2$ at $\mathtt{eve}$}
\end{subfigure}
&
\begin{subfigure}{0.33\linewidth}
\centering
\scalebox{\scalegraph}{
  \begin{tikzpicture}
  
    \basegraph
    
    \node[op] at ($(deb) + (0.2,0.3)$) (op2) {\scriptsize$\processor_2$};
    \node[op] at ($(deb) + (-0.2,0.3)$) (op1) {\scriptsize$\processor_1$};
    
  \end{tikzpicture}
  }
  \caption{$\processor_1$ and $\processor_2$ propagate to $\mathtt{deb}$}
\end{subfigure}
&
\scalebox{\scalegraph}{
\begin{tikzpicture}
    \node[draw,dashed,fill=olive!25,align=left] at (1.9,0.5) (legend) {Legend: \\ \hspace{0.5em} $\processor_1$ \hspace{0.5em} $\mathtt{addRelationship\ c\ b}$\\ \hspace{0.5em} $\processor_2$ \hspace{0.5em} $\mathtt{addRelationship\ d\ c}$};
\end{tikzpicture}
}
\end{tabular}

\caption{In-Graph Operation Batching}
\label{fig:batching}

\end{figure}
}

\subsection{TLO in Continuous Graph Processing}
\label{subsec:tlo}

\inlong{
With in-graph operation streams, the processing of individual operations may be deferred at an arbitrary node, and the opportunity of optimizing temporally adjacent operations arises when they happen to be deferred at the same host node. 
}

We now revisit the \textsc{SocialCore} example to illustrate the common forms of TLO that \systemprefix{} supports. 


\begin{example}[In-Graph Operation Batching]
\label{ex:batch}

Consider Fig.~\ref{fig:datacentriclaziness}(c). Since neither $\mathtt{addRelationship}$ operation
realizes at $\mathtt{eve}$,
both may propagate in a ``batch'' to $\mathtt{deb}$ in one reduction step. 
\inlong{
This is illustrated in Fig.~\ref{fig:batching}.
}
\end{example}

\inlong{

\begin{figure}[t]
\centering

\begin{tabular}{ccc}
\begin{subfigure}{0.33\linewidth}
\centering
  \hspace{-1.5cm}
\scalebox{\scalegraph}{
  \begin{tikzpicture}
  
    \basegraph

    \draw[ledge, out=135, in=135, looseness=2] (eve) to (bob);
    \draw[ledge, bend right=20] (eve) to (amy);
    \draw[ledge, bend left] (deb) to (cam);
    \draw[ledge, bend left] (cam) to (bob);
    \draw[ledge, bend right=20] (amy) to (eve);

    \node[op] at ($(deb) + (0.4,0.3)$) (op3) {\scriptsize$\processor_3$};
    \node[op] at ($(deb) + (0.0,0.3)$) (op2) {\scriptsize$\processor_2$};
    \node[op] at ($(deb) + (-0.4,0.3)$) (op1) {\scriptsize$\processor_1$};

  \end{tikzpicture}
  }
  \caption{$\processor_1$, $\processor_2$, and $\processor_3$ are at $\mathtt{deb}$}
\end{subfigure}
&
\begin{subfigure}{0.33\linewidth}
\centering
\hspace{-1.5cm}
\scalebox{\scalegraph}{
  \begin{tikzpicture}
  
    \basegraph

    \draw[ledge, out=135, in=135, looseness=2] (eve) to (bob);
    \draw[ledge, bend right=20] (eve) to (amy);
    \draw[ledge, bend left] (deb) to (cam);
    \draw[ledge, bend left] (cam) to (bob);
    \draw[ledge, bend right=20] (amy) to (eve);

    \node[op] at ($(deb) + (-0.4,0.3)$) (op3) {\scriptsize$\processor_3$};
    \node[op] at ($(deb) + (-0.0,0.3)$) (op2) {\scriptsize$\processor_2$};
    \node[op] at ($(deb) + (0.4,0.3)$) (op1) {\scriptsize$\processor_1$};

  \end{tikzpicture}
  }
  \caption{$\processor_3$ is reordered ahead of $\processor_1$ and $\processor_2$}
\end{subfigure}
&
\raisebox{-0.5cm}{
\scalebox{\scalegraph}{
\begin{tikzpicture}
    \node[draw,dashed,fill=olive!25,align=left] at (1.9,0.5) (legend) {Legend: \\ \hspace{0.5em} $\processor_1$ \hspace{0.5em} $\mathtt{queryNode\ b}$ \\ \hspace{0.5em} $\processor_2$ \hspace{0.5em} $\mathtt{updatePayload\ a\ nb}$ \\ \hspace{0.5em} $\processor_3$ \hspace{0.5em} $\mathtt{queryNode\ b}$};
\end{tikzpicture}
}
}
\end{tabular}

\caption{In-Graph Operation Reordering
}
\label{fig:reordering}

\end{figure}
}


\inlong{

In \systemprefix{}, batching is supported through a pair of reduction rules over the in-graph operation stream --- one for grouping multiple operations into a unit of propagation (\emph{batching}) and the other for ungrouping in the opposite manner (\emph{unbatching}). Our support is \emph{general} in the sense that the selection of temporally adjacent operations --- how often they should be applied, where in the operation stream batching/unbatching should happen, and how many operations should form one batch --- is supported in a non-deterministic manner, and can be further refined by client calculi built on top of \systemprefix{}. In practice, batching is particularly useful when the graph processing backend is ``overloaded,'' i.e., the frontend places operations to the operation stream at a faster rate than the backend can process it. 

Similarly, the other 3 forms of temporal locality optimization are also supported through a set of reduction rules that conservatively define how an operation stream can be rewritten without altering the computation result. The principle behind is simple: rule-based pattern matching and term rewriting over consecutive (and hence, temporally local) operations. We now show these optimizations through examples. 
}

\begin{example}[In-Graph Operation Reordering]
\label{ex:reorder}
Consider a configuration where three operations at Lines 15-17 in Fig.~\ref{fig:serverclient}
reach node $\mathtt{deb}$. 
The third operation, $\mathtt{queryNode\ b}$,
reads from the node $\mathtt{b}$ while
the second operation writes to node $\mathtt{a}$.
\systemprefix{} allows these last two operations to ``swap'' since they do not operate on the same node.
\inlong{
This is illustrated in Fig.~\ref{fig:reordering}. 
}

\end{example}

\begin{example}[In-Graph Operation Fusing]
\label{ex:cancel}
Consider a configuration where two operations at Lines 19-20
in Fig.~\ref{fig:serverclient}
both reach node $\mathtt{eve}$.
\systemprefix{} allows the $\mathtt{addRelationship}$ and $\mathtt{deleteRelationship}$ operations to ``cancel out'' so that further processing of both is avoided. 
\end{example}

\begin{example}[In-Graph Operation Reusing]
Let us follow up on Example~\ref{ex:reorder}. After swapping, 
two $\mathtt{queryNode\ b}$ operations are adjacent in the operation stream at node $\mathtt{deb}$.
\systemprefix{} allows the second instance to immediately return, referencing the return value of the first instance.
\end{example}



An expressive feature of \systemprefix{} is that TLO optimizations 
may happen \emph{in-graph}. 
For example, batching in Example~\ref{ex:batch} happens at node $\mathtt{eve}$; reordering and reusing in Example~\ref{ex:reorder} at node $\mathtt{deb}$; and fusing in Example~\ref{ex:cancel} at node $\mathtt{eve}$. 


In \systemprefix{}, TLOs are supported through rewriting rules over the operation stream. The principle behind is well known in programming languages: rule-based pattern matching and term rewriting over consecutive (and hence, temporally local) operations. In our calculus, TLOs are applied dynamically. 
This is aligned with our ``open-world'' assumption on the usage scenarios in practice: when the program is compiled, the operations may not be statically known yet. 
\inlong{
In other words, the program we showed in Fig.~\ref{fig:serverclient} may well be a textual \emph{a posteriori} representation of an interactive program, where each line of graph processing operation is filled, say, into some graphical interface, and executed by the graph.  
}

\inlong{
\subsection{Correctness in Continuous Graph Processing Design}
\label{subsec:types}

An important design goal of \systemprefix{} and \parlang{} is to establish the correctness of continuous graph processing where expressive features in the design space are unified under one system.

\paragraph{Result Determinism in the Presence of Benign Non-Determinism} 
 
The discussion in \S~\ref{subsec:ingraph}
has revealed that significant non-determinism are latent in the behavior of in-graph operation streams. 
Recall we also said that these non-determinism are ``benign'': they do not change the ``end-to-end'' behavior of continuous graph processing systems. 

We show \systemprefix{} produces the observably equivalent results as eager processing, i.e., the one-at-a-time semantics we described in \S~\ref{subsec:ingraph}. 
With different forms of non-determinism in the calculus, the proof of this theorem is non-trivial, which we use bisimulation. An interesting insight in defining our baseline eager processing is that it shares exactly the same reduction system as \systemprefix{}, except that the evaluation context of the former must be restricted to enforce a deterministic evaluation order. 

\paragraph{Sound Temporal Locality Optimization} 

The essence of an ``optimization'' is that it should not change the result of computation. The result of observable equivalence above entails sound temporal locality optimization: all rewriting rules of \systemprefix{} for temporal locality optimization --- batching, reordering, fusing, and reusing --- preserve the result of computation. 




}

\subsection{A Type System for Phase Distinction}

The primary goal of \systemprefix{} type system is to enforce \emph{phase distinction} of operation emission: the backend should not emit an operation for processing while processing another operation. To see why this restriction is important, let us start with a counterexample. 

\begin{example}[Backend Operation Emission]
\label{ex:emit}
Consider the following program:
\begin{center}
\scalebox{\scalecode}{
$\begin{array}{l}
\kwlet\ \mathtt{k = \ldots}\ \texttt{// key of interest} \\
\kwlet\ \mathtt{f = \lambda \tuple{\_;\ payload;\ \_} . payload * 2\ \kwin} \\
\kwlet\ \mathtt{g = \lambda \tuple{\_;\ payload;\ \_} . (mapVal\ f\ [k]; payload)}\ \kwin \\
\mathtt{mapVal\ g\ [k] ;} \\
\kwlet\ \mathtt{h = \lambda \tuple{\_;\ payload;\ \_} . payload + 1\ \kwin} \\
\mathtt{mapVal\ h\ [k]} \\
\end{array}$
}
\end{center}
If the operation $\mathtt{mapVal\ f\ [k]}$ inside the body of $\mathtt{g}$ is emitted
\emph{before} the operation $\mathtt{mapVal\ h\ [k]}$ is emitted, 
the node with key $\mathtt{k}$
will have its payload value multiplied by $2$ and then incremented by $1$.
If the order is reversed, the payload value will be incremented by $1$ and then multiplied by $2$. 

\end{example}

The root problem is that the evaluation order between a backend-emitted $\mathtt{mapVal}$ and a frontend-emitted is not always decided upon, a symptom analogous to a race condition. 
Note that this is fundamentally different from the non-determinism executions (with deterministic results) that \systemprefix{} does support. 
Our type system disallows backend operation emission through effect types: for every operation that is emitted from the frontend, we guarantee that its processing does not have the effect of operation emission. 

%
%
%
%
%
%
%

\inlong{

\begin{table}[t]
\centering

\scalebox{\scalemath}{
\begin{tabular}{rll}
\toprule
Notation & Meaning & Definition
\\\cmidrule(lr){1-3}
\rowcolor{RC2}
\multicolumn{3}{c}{
sequences
}
\\
$[\sigma_1, \sigma_2, \ldots, \sigma_m]$
& sequence
&
\\
$\aseq{\sigma}^m$
& sequence $[\sigma_1, \sigma_2, \ldots, \sigma_m]$ if $m \geq 0$
\\
$\aseq{\sigma}$
& sequence $[\sigma_1, \sigma_2, \ldots, \sigma_m]$ for some $m$
\\
$|\Sigma|$
& length of sequence $\Sigma$
& $m$ if $\Sigma = \aseq{\sigma}^m$
\\
$\sigma \cons \Sigma$
& sequence with head $\sigma$ and tail sequence $\Sigma$
& $[\sigma, \sigma_1, \ldots, \sigma_m]$ if $\Sigma = \aseq{\sigma}^m$
\\
$\Sigma \concat \Sigma'$
& concatenation of sequences $\Sigma$ and $\Sigma'$
& $[\sigma_1, \ldots, \sigma_m, \sigma'_1, \ldots, \sigma'_{m'}]$
if $\Sigma = \aseq{\sigma}^m, \Sigma' = \aseq{\sigma'}^{m'}$
\\
\rowcolor{RC2}
\multicolumn{3}{c}{
sets
}
\\
$\{ \sigma_1, \sigma_2, \ldots, \sigma_m \}$
& set
\\
$\aset{\sigma}^m$
& set $\{ \sigma_1, \sigma_2, \ldots, \sigma_m \}$ if $m \geq 0$
\\
$\aset{\sigma}$
& set $[\sigma_1, \sigma_2, \ldots, \sigma_m]$ for some $m$
\\
$|\Sigma|$
& size of set $\Sigma$
& $m$ if $\Sigma = \aset{\sigma}^m$
\\
\rowcolor{RC2}
\multicolumn{3}{c}{
mappings
}
\\
$\aseq{\sigma \mapsto \sigma'}^m$
or
$\aset{\sigma \mapsto \sigma'}^m$
& mapping (when $\sigma_1$, \ldots, $\sigma_m$ are distinct)
&
\\
$M(\sigma)$
& mapping lookup
& $\sigma'$ where $\sigma \mapsto \sigma' \in M$
\\
$\dom(M)$
& mapping domain
& $\aset{\sigma}^m$ where
$M = \aseq{\sigma \mapsto \sigma'}^m$
or
$M = \aset{\sigma \mapsto \sigma'}^m$
\\
$\ran(M)$
& mapping range
& $\aset{\sigma'}^m$ where
$M = \aseq{\sigma \mapsto \sigma'}^m$
or
$M = \aset{\sigma \mapsto \sigma'}^m$
\\
\rowcolor{RC2}
\multicolumn{3}{c}{
$\lambda$ calculus
}
\\
$\ep[ \val / \VAR ]$
& variable substitution in expression
&
\\
$\feq$
& term equivalence
&
\\
$\mathtt{Id}$
& $\lambda \VAR. \VAR$ 
&
\\
$(\lambda x . \ep) \circ (\lambda y . \ep')$ 
& function composition
& $\lambda z . (\lambda x . \ep)\ ((\lambda y . \ep')\ z)$
\\
\bottomrule
\end{tabular}
}

\caption{Notations
}
\label{table:notations}

\end{table}




}

\section{\systemprefix{} Syntax and Runtime Definitions}
\label{sec:baseline}

In this section, we provide definitions for \systemprefix{}, including 
%
abstract syntax in \S~\ref{sec:syntax}
and runtime configuration in \S~\ref{sec:opandresultstreams}. 
\inlong{
Common notations and definitions 
used throughout the paper are summarized in Table~\ref{table:notations}.
}

\inshort{
\paragraph{Notations}
We summarize 3 common structures used in this paper: sequence, set, and mapping. We use notation $[\sigma_1, \sigma_2, \ldots, \sigma_m]$ to represent a sequence of $\sigma_1$, $\dots$, $\sigma_m$ (in that order) for some $m \geq 0$; we shorthand it as $\aseq{\sigma}^m$, or $\aseq{\sigma}$ when its length does not matter. We further call $\sigma_1$ as the \emph{head element} and $\sigma_m$ as the \emph{last element}. When $m = 0$, we further represent an empty sequence as $\emptyseq$. Binary operator $\sigma \cons \Sigma$ prepend $\sigma$ to sequence $\Sigma$ as the head, and binary operator $\Sigma \concat \Sigma'$ concatenate $\Sigma$ and $\Sigma'$ together. We elide their definitions here. We use notation $\{ \sigma_1, \sigma_2, \ldots, \sigma_m \}$ to represent a set with elements $\sigma_1$, $\dots$, $\sigma_m$ for some $m \geq 0$; we shorthand it as $\aset{\sigma}^m$, or $\aset{\sigma}$ when its length does not matter. When $m = 0$, we further represent an empty set as $\emptyset$. Common set operators $\in$, $\subseteq$, and $\cap$ 
apply. 
We overload operator $|\Sigma|$ to compute the length of the sequence $\Sigma$, or the size of set $\Sigma$. 

When a sequence takes the form of $\aseq{\sigma \mapsto \sigma'}^m$ or when a set takes the form of $\aset{\sigma \mapsto \sigma'}^m$, we call it a \emph{mapping} when $\sigma_1$, $\dots$, $\sigma_m$ are distinct. Given $M$ as the aforementioned mapping, we further define $M(\sigma_i)$ as $\sigma_i'$ for some $1 \leq i \leq m$; $\dom(M)$ as $\aset{\sigma}^m$; and $\ran(M)$ as $\aset{\sigma'}^m$.

We use some common functions of $\lambda$ calculus, whose definitions we omit. We use $\ep[ \val / \VAR ]$ to represent substitution of name $\VAR$ with value $\val$ for expression $\ep$.  We use $\feq$ for term equivalence. 
We use $\mathtt{Id}$ to represent identity function, and $\circ$ for function composition.
}

\subsection{Syntax}
\label{sec:syntax}

\begin{figure}[t]
\centering

\begin{tabular}{r|l}
\scalebox{\scalemath}{
$\begin{array}{@{}l@{\ \ }l@{\ \ }lr}

\multicolumn{4}{@{}l}{\textbf{Expressions, Operations, Values}}\\

\ep
& \defassign &
\val
\mid \ep\ \ep
\mid \VAR
\mid \mathbf{fix}\ \ep
& \textit{expression}
\\ & \mid &
\seqK
\mid \ep \setadd \ep
\mid \ep \setsubtract \ep
\\ & \mid &
\inode
\mid \neselect{\ep}
\\ & \mid &
\toserver\ \processor
\mid \fromserver\ \ep
\\

\processor
& \defassign &
\addn\ \ep
\mid \map\ \ep\ \ep
\mid \fold\ \ep\ \ep\ \ep
&
\textit{operation}
\\

\val
& \defassign &
\func
\mid \nkey
\mid \integer
\mid \K
\mid \eagerinode
\mid \fvs
&
\textit{value}
\\

\func
& \defassign &
\lambda \VAR : \tau . \ep
&
\textit{function}
\\

\fvs
& & 
&
\textit{future value/label}
\\

\tau
& &
&
\textit{type (see \S~\ref{sec:safety})}
\\

\end{array}$
}
&
\raisebox{0.125cm}{
\scalebox{\scalemath}{
$\begin{array}{@{}l@{\ \ }l@{\ \ }lr}
\multicolumn{4}{@{}l}{\textbf{Keys, Nodes, Integers, Names}}\\

\nkey
& &
&
\textit{key}
\\

\seqK
& \defassign &
\KS{\aseq{\ep}}
& \textit{key list}
\\

\K
& \defassign &
\KS{\aseq{\nkey}}
& \textit{key list value}
\\

\inode
& \defassign &
\ND{\ep; \ep; \ep}
& \textit{node}
\\

\eagerinode
& \defassign &
\ND{\nkey; \integer; \K}
& \textit{node value}
\\

\integer
& &
&
\textit{integer}
\\

\multicolumn{3}{@{}l}{\VAR, \VARY, \VARZ}
&
\textit{name}
\\

\neI
& \in &
\{ 1, 2, 3 \}
&
\textit{projection index} 
\\
\end{array}$
}
}
\end{tabular}

\caption{Abstract Syntax}
\label{fig:clientdefs}

\end{figure}

\begin{table}[t]
\centering

\scalebox{\scalemath}{
\begin{tabular}{lll}
\toprule
Programmer Syntax
&
Formal Syntax
&
$\mathtt{f}$
\\
\midrule
$\tuple{\ep; \ep'; \ep''}$
& $\ND{\ep; \ep'; \ep''}$
\\
$[\ep_1, \ldots, \ep_n]$
& $\KS{[\ep_1, \ldots, \ep_n]}$
\\
$\mathtt{addRelationship}\ \ep\ \ep'$
&
$\map\ \mathtt{f}\ \KS{[\ep]}$
&
$\lambda x . \ND{\neone{x}; \netwo{x}; \nethree{x} \setadd \KS{[\ep']}}$
\\
$\mathtt{deleteRelationship}\ \ep\ \ep'$
&
$\map\ \mathtt{f}\ \KS{[\ep]}$
&
$\lambda x . \ND{\neone{x}; \netwo{x}; \nethree{x} \setsubtract \KS{[\ep']}}$
\\
$\mathtt{updatePayload}\ \ep\ \ep'$
&
$\map\ \mathtt{f}\ \KS{[\ep]}$
&
$
\lambda x . \ND{\neone{x}; \netwo{\ep'}; \nethree{x}}
$
\\
$\mathtt{queryNode}\ \ep$
&
$\fold\ \mathtt{f}\ \ND{\_; 0; \KS{\emptyseq}}\ \KS{[\ep]}$
&
$\lambda x . \lambda y . x$
\\
$\mathtt{mapVal}\ \ep\ \ep'$
&
$\map\ \mathtt{f}\ \ep'$
&
$\lambda x . \ND{\neone{x}; \ep\ x; \nethree{x}}$
\\
$\mathtt{foldVal}\ \ep\ \ep'\ \ep''$
&
$\fold\ \mathtt{f}\ \ND{\_; \ep'; \KS{\emptyseq}}\ \ep''$
&
$\lambda x . \lambda y . \ND{\neone{y}; \ep\ x\ \netwo{y}; \nethree{y}}$
\\
\bottomrule
\end{tabular}
}

\caption{Graph Operations Encodings
} 
\label{fig:encodings}

\end{table}

Fig.~\ref{fig:clientdefs} defines the abstract syntax of \systemprefix{}. It consists of conventional 
$\lambda$ calculus features, such as name, abstraction, application, and the fixpoint expression. Features that appeared in the earlier examples but can be encoded by $\lambda$ calculus, 
including $\mathtt{if-then-else}$, list comprehension, $\mathtt{let-in}$, the $;$ expression, $\mathtt{foreach}$, and arithmetic, are omitted. We choose to include integers explicitly, so that node payload values
can be intuitively represented. 
The additional expressions come in two categories.

\paragraph{Expressions for Graph Structural Support}

Both the key list and the node are first-class citizens in our calculus. In the programmer syntax, the former is represented as a sequence and the latter as a triple. To differentiate programming abstractions from meta-level structures, we associate the key list with an explicit constructor \textbf{KL} and the node with constructor \textbf{N} in the formal syntax, as shown in Table~\ref{fig:encodings}.   

Key lists in our calculus play two roles: defining the (ordered) adjacency list of a node, providing as argument for selective mapping and folding. 
The $\setadd$ and $\setsubtract$ expressions are binary operators over key lists for their concatenation and subtraction respectively.
To assist key list subtraction,
we define operator $\aseq{\nkey} \kssetminus \aseq{\nkey'}$ as
identical to $\aseq{\nkey}$ except that every element that appears in $\aseq{\nkey'}$ is removed.
As we have seen in the \textsc{CorePR} example, each graph node ($\inode$) is a triple: a key, a payload expression, and an adjacency list expression. 
Projection expression $\neselect{\ep}$ computes the $\neI$-th component of the triple.
\inlong{
}

\inlong{
Indeed, key lists and nodes can be easily encoded by $\lambda$ calculus through sets and tuples. We choose to represent them explicitly for two reasons. First, In a typed calculus like hours, an explicit representation of these entities has the benefit of ensuring the type safety of their use. Second, by allowing them to be first-class citizens, we highlight the central role of key lists (graph edges and operation targets) and graph nodes play in continuous graph processing. 
}

\paragraph{Expressions for Graph Operation Lifecycle Support}

\begin{wrapfigure}{R}{0.33\textwidth}
\centering
\scalebox{\scalecode}{
\begin{minipage}{0.5\textwidth}
\begin{algorithmic}[ht]\small
\State \texttt{// dynamic queries and updates}
\State $\kwlet\ \mathtt{nb} = \mathtt{\color{blue}\toserver\ \mathtt{queryNode}\ \fromserver\ \mathtt{b}}\ \kwin$
\State $\mathtt{\color{blue}\toserver\ \mathtt{updatePayload}\ \fromserver\ \mathtt{a}\ \fromserver\ \mathtt{nb}};$
\State $\kwlet\ \mathtt{nb2} = \mathtt{\color{blue}\toserver\ \mathtt{queryNode}\ \fromserver\ \mathtt{b}}\ \kwin$
\State $\kwlet\ \mathtt{f} = \mathtt{\color{blue}\toserver\ \mathtt{add}\ \fromserver\ \mathtt{fred}}\ \kwin$
\State $\mathtt{\color{blue}\toserver\ \mathtt{addRelationship}\ \fromserver\ \mathtt{b}\ \fromserver\ \mathtt{f}};$
\State $\mathtt{\color{blue}\toserver\ \mathtt{deleteRelationship}\ \fromserver\ \mathtt{b}\ \fromserver\ \mathtt{f}};$
\State $\ldots$
\end{algorithmic}
\end{minipage}
}
\caption{From Programmer Syntax to Formal Syntax: $\toserver$ and $\fromserver$ (Fig.~\ref{fig:serverclient} Lines 14-21)}
\label{fig:explicitarrows}
\end{wrapfigure}
Two new expressions handle the operation stream at the frontend: operation 
emission ($\toserver\ \processor$) and result claim ($\fromserver\ \ep$). To highlight the asynchronous
nature of operation processing, each program point of operation emission (or result claim) in the 
programmer syntax is annotated with a $\toserver$ symbol (or $\fromserver$ symbol) explicitly.
For example,
Fig.~\ref{fig:explicitarrows}
shows how the Lines 14-21 of Fig.~\ref{fig:serverclient} can be explicitly
annotated with $\toserver$ and $\fromserver$.

\systemprefix{} supports 3 core operations: $\addn$, $\map$, and $\fold$. The first operation has been used in the \textsc{CoreSocial} and \textsc{CorePR} examples. The second and third operations are similar to $\mathtt{mapVal}$ and $\mathtt{foldVal}$ in \textsc{CorePR}, except that mapping function in $\map$ returns a node, and the folding function $\fold$ is a binary function over nodes.
The choice of these 3 operations is a balance between simplicity and expressiveness, with several considerations. First, extensive programming experience with MapReduce has shown that the $\map-\fold$ pair is capable of programming a large number of graph analytics algorithms~\cite{kang2009pegasus,lin2010design}. 
Second, $\map$ and $\fold$ can encode graph database operations, with the encoding of those appearing in the \textsc{CoreSocial} example shown in Table~\ref{fig:encodings}. Third, $\addn$ is useful for supporting dynamic graphs. 
Overall, the choice of graph-specific operations may impact on the specific rules for TLO, but is orthogonal to the development of IOP in our calculus. 

For operations, we introduce
a convenience function $\target$ that computes the keys of nodes where the operation is intended for realization: 
\begin{definition}[Operation Target]
\label{def:target}
The function $\target(\processor)$
computes the \emph{target} of the operation $\processor$, defined as $\aset{\nkey}$ if $\processor = \map\ \func\ \KS{\aseq{\nkey}}$ or $ \processor = \fold\ \func\ \ep\ \KS{\aseq{\nkey}}$. 
The operator is undefined for $\addn$.
\end{definition}

\paragraph{Values}

The values of our languages are functions, node keys, node payloads (integers), key list values, node values, and futures. 
A future value $\fvs$ is generated when an operation is emitted (\S~\ref{sec:coresocial}), and as we shall see soon, it also serves as the unique label for identifying the operation and its result in backend graph processing.
Except for futures, all forms of values are also programs, including keys. To be consistent with real-world data processing, we allow programmers to name a key in their program. 
\inlong{
The implication is that \systemprefix{} allows a programmer to explicitly name a key in their program, in addition to allowing them to use keys returned from the $\addn$ operation (such as Line 4 in Fig.~\ref{fig:serverclient}). Indeed, it would be tempting to disallow the former, because we would end up having a more ``elegant'' calculus: in that case, a key value would be analogous to an object reference in object-oriented languages, and we could design a type system so that ``key not found'' would never occur, just as Java's type system would eliminate ``object not found'' errors. This artificial elegance however results from an oversimplification: in real-world data processing, keys are routinely named and used by users and ``key not found'' does happen in all practical databases.
}

\begin{figure}[t]
\centering

\begin{minipage}[t]{0.45\textwidth}
\scalebox{\scalemath}{
$\begin{array}{@{}l@{\ \ }l@{\ \ }lr}


\config
& \defassign &
\tuple{\data; \ostream; \rstream; \ep}
&
\textit{configuration}
\\

\data
& \defassign &
\aseq{\station}
& \textit{backend}
\\

\station
& \defassign &
\tuple{\inode; \ostream}
& \textit{station}
\\

%

\ostream
& \defassign &
\aseq{\batchgroup}
&
\textit{operation stream/streamlet}
\\

\batchgroup
& \defassign &
\aseq{\fvs \mapsto \processor}
&
\textit{stream unit}
\\

\rstream
& \defassign &
\aset{\fvs \xmapsto{\K} \val}
&
\textit{result store}
\\



\end{array}$
}

\caption{\systemprefix{} Runtime Definitions
}
\label{fig:pgraph:grammar}
\end{minipage} \hfill\vline\hfill
\begin{minipage}[t]{0.45\textwidth}
\raisebox{-0.25cm}{
\scalebox{\scalecode}{
\begin{minipage}[t]{\linewidth}
\vskip -15ex
\[
\begin{array}{l}
[\tuple{\ND{\nkey_{\mathtt{eve}}; \integer_\mathtt{eve}; \KS{[\nkey_{\mathtt{bob}}, \nkey_{\mathtt{amy}}]}}; [[\fvs_2 \mapsto \processor_2]},
\\ \tuple{\ND{\nkey_{\mathtt{deb}}; \integer_\mathtt{deb}; \KS{[\nkey_{\mathtt{cam}}]}}; [[\fvs_1 \mapsto \processor_1]]},
\\ \tuple{\ND{\nkey_{\mathtt{cam}}; \integer_\mathtt{cam}; \KS{[\nkey_{\mathtt{bob}}]}}; \emptyseq},
\\ \tuple{\ND{\nkey_{\mathtt{bob}}; \integer_\mathtt{bob}; \KS{\emptyseq}}; \emptyseq},
\\ \tuple{\ND{\nkey_{\mathtt{amy}}; \integer_\mathtt{amy}; \KS{[\nkey_\mathtt{eve}]}}; \emptyseq}]
\end{array}
\]
\end{minipage}
}
}
\caption{A Backend Example of Fig.~\ref{fig:datacentriclaziness}(d) (
$\nkey_\mathtt{amy}$,
$\nkey_\mathtt{bob}$,
$\nkey_\mathtt{cam}$,
$\nkey_\mathtt{deb}$,
$\nkey_\mathtt{eve}$
are the generated keys corresponding to the
nodes with payload values
$\integer_\mathtt{amy}$,
$\integer_\mathtt{bob}$,
$\integer_\mathtt{cam}$,
$\integer_\mathtt{deb}$,
$\integer_\mathtt{eve}$
and
$\fvs_1$, $\fvs_2$
are generated labels corresponding to the operations
$\processor_1$, $\processor_2$)}
\label{ex:repr} 
\end{minipage} 
\end{figure}

\inlong{

\begin{figure}[t]
\centering

\scalebox{\scalegraph}{
\begin{tikzpicture}

\node[] at (0, 1) (server) {\hspace{-0.9cm}Backend ($\data$)};
\node[] at (4.5, 1) (client) {\hspace{1.1cm}Frontend ($\ep$)};

\node[nnode] at ($(server) + (0.5,-1.55)$) (eve) {$\mathtt{eve}$};
\node[nnode] at ($(eve) + (-2.0,0)$) (deb) {$\mathtt{deb}$};
\node[nnode] at ($(deb) + (1.0,-1.0)$) (cam) {$\mathtt{cam}$};
\node[nnode] at ($(cam) + (-1.0,-1.0)$) (bob) {$\mathtt{bob}$};
\node[nnode] at ($(bob) + (2.0,0)$) (amy) {$\mathtt{amy}$};

\draw[ledge, out=135, in=135, looseness=2] (eve) to (bob);
\draw[ledge, bend right=20] (eve) to (amy);
\draw[ledge, bend left] (deb) to (cam);
\draw[ledge, bend left] (cam) to (bob);
\draw[ledge, bend right=20] (amy) to (eve);
    
\begin{pgfonlayer}{background}
\node[draw, fill=black!75, inner sep=4pt, below left=0.3cm and -0.6cm of client, inner ysep=2.0cm, inner xsep=1.5cm, anchor=north west] (clientbox) {};
\end{pgfonlayer}
\node[text=white, anchor=north west] (t1) at (clientbox.north west) {\footnotesize $\kwlet\ \mathtt{x = \toserver\ \processor_1\ \kwin}$};
\node[text=white] (t3) at ($(t1.south) + (0,-0.05)$) {\footnotesize $\kwlet\ \mathtt{y = \toserver\ \processor_2\ \kwin}$};
\coordinate (t2) at ($(t3) + (-0.7,0)$);
\coordinate (y) at ([shift={(-0.15,-0.25)}]t2);
\draw[-,draw=white,decorate,decoration={snake,amplitude=0.4mm}] (y) -- (y -| clientbox.east);
\coordinate (y) at ([shift={(-0.15,-0.5)}]t2);
\draw[-,draw=white,decorate,decoration={snake,amplitude=0.4mm}] (y) -- (y -| clientbox.east);
\coordinate (y) at ([shift={(-0.15,-0.75)}]t2);
\draw[-,draw=white,decorate,decoration={snake,amplitude=0.4mm}] (y) -- (y -| clientbox.east);
\coordinate (y) at ([shift={(-0.15,-1.0)}]t2);
\draw[-,draw=white,decorate,decoration={snake,amplitude=0.4mm}] (y) -- (y -| clientbox.east);
\coordinate (y) at ([shift={(-0.15,-1.25)}]t2);
\draw[-,draw=white,decorate,decoration={snake,amplitude=0.4mm}] (y) -- (y -| clientbox.east);
\coordinate (y) at ([shift={(-0.15,-1.5)}]t2);
\draw[-,draw=white,decorate,decoration={snake,amplitude=0.4mm}] (y) -- (y -| clientbox.east);
\coordinate (y) at ([shift={(-0.15,-1.75)}]t2);
\draw[-,draw=white,decorate,decoration={snake,amplitude=0.4mm}] (y) -- (y -| clientbox.east);
\coordinate (y) at ([shift={(-0.15,-2.0)}]t2);
\draw[-,draw=white,decorate,decoration={snake,amplitude=0.4mm}] (y) -- (y -| clientbox.east);
\coordinate (y) at ([shift={(-0.15,-2.25)}]t2);
\draw[-,draw=white,decorate,decoration={snake,amplitude=0.4mm}] (y) -- (y -| clientbox.east);
\coordinate (y) at ([shift={(-0.15,-2.50)}]t2);
\draw[-,draw=white,decorate,decoration={snake,amplitude=0.4mm}] (y) -- (y -| clientbox.east);
\coordinate (y) at ([shift={(-0.15,-2.75)}]t2);
\draw[-,draw=white,decorate,decoration={snake,amplitude=0.4mm}] (y) -- (y -| clientbox.east);

\coordinate (y) at ([shift={(-0.25,-2.75)}]t2);
\node[text=white,anchor=north west] (t4) at (y) {\footnotesize $\mathtt{\fromserver\ x;}$};
\node[text=white] (t5) at ($(t4.south) + (0,-0.05)$) {\!\footnotesize $\mathtt{\fromserver\ y}$};

\node[fill=black!75,anchor=east, inner xsep=0.06cm, inner ysep=2.0cm] at (clientbox.east) {};

\node[draw,fit=(eve)(deb)(bob)(amy),inner sep=0.6cm] (graphbox) {};

\begin{pgfonlayer}{background}
\coordinate (evex) at ($(eve) + (0,0.3)$);
\coordinate (amyx) at ($(amy) + (0,0.3)$);
\draw[sarrow, text opacity=1, rounded corners] (clientbox.west |- evex) to node {\hspace{-1.9cm}$\processor_1, \ldots, \processor_i$} (graphbox.east |- evex) to ($(eve) + (0,0.3)$) to ($(deb) + (0,0.3)$) to ($(cam) + (0,0.3)$) to ($(bob) + (0,0.3)$) to ($(amy) + (0,0.3)$) to (graphbox.east |- amyx) to node {\hspace{-1.9cm}$\val_1, \ldots, \val_j$} (clientbox.west |- amyx);
\end{pgfonlayer}

\node[] at ($(eve) + (1.25,0.8)$) {Operation Stream ($\ostream$)};
\node[] at ($(amy) + (1.5,0.8)$) {Result Store ($\rstream$)};

\end{tikzpicture}
}

\caption{A Continuous Graph Processing Runtime}
\label{fig:clientserver:indepth}

\end{figure}
}

\subsection{The Structure of the Runtime}
\label{sec:opandresultstreams}


Fig.~\ref{fig:pgraph:grammar} provides core definitions of the program runtime. A configuration $\config$ consists of 4 components: the backend $\data$, the frontend expression $\ep$, and two structures that bridge them: the (top-level) \emph{operation stream} $\ostream$ 
and the \emph{result store} $\rstream$.
\inlong{
Fig.~\ref{fig:clientserver:indepth} is a more refined illustration of Fig.~\ref{fig:clientserver:simple}. 
}

The backend is the runtime graph, represented as a sequence of \emph{stations}, each of which consists of a graph node ($\inode$) and the operations ($\ostream$) that have so far propagated to that node. We also call the latter as a \emph{streamlet}.  
In other words, we represent the in-graph operation stream as the combination of per-station streamlets. An example of the backend in its formal form can be found in Fig.~\ref{ex:repr}. 
This representation reflects \emph{fine-grained} nature of our support for incremental processing: the operation can propagate to and be deferred at \emph{any node}. Client calculi to \systemprefix{} can further restrict this most general treatment, e.g., a more implementation-oriented choice where nodes form partitions and streamlets can only be associated with (the first node of) partitions.

Our sequence-based representation of the runtime graph is aligned with existing graph processing (database or analytics) systems, where nodes are generally stored as a sequence, edges are explicitly represented as adjacency information within each node, and graph traversal follows the linear order of nodes (a process often called \emph{scanning} in existing literature). By embracing this representation, our calculus still applies to other more logical representations, e.g., an object graph on the heap, because (depth-first or breadth-first) traversal algorithms can place graph nodes into a sequence in the traversal order. For example, the graph in Fig.~\ref{fig:clientserver:simple} could indeed be implemented as a pointer-based graph, but its traversal order still implies a sequence-based representation, following \texttt{eve}, \texttt{deb}, \texttt{cam}, \texttt{bob}, \texttt{amy}.

We formally represent an operation stream/streamlet as a sequence of \emph{stream units} ($\batchgroup$), each of which is a sequence of operations. This 2-dimensional representation --- instead of a 1-dimensional one --- results from batching (\S~\ref{subsec:ingraph}), so that each stream unit can be viewed as a ``batch.''
Observe that we also associate each operation with a unique label ($\fvs$), and 
the result of processing the operation is associated with the same label. The label is not only a convenience for correlating the operation and its result, but also serves as the future value of the operation as we explained earlier.


Each element in the result store takes the form of $\fvs \xmapsto{\K} \val$, associating result value $\val$ with label $\fvs$. The additional $\K$ is called a \emph{residual target}. 
If any key in the target key list of an operation cannot be found during processing, it will be kept as the residual target in the result store.  

%

The following definitions highlight the different access patterns of the operation stream and the result store: whereas order does not matter for the latter, it clearly matters for the former  (recall \S~\ref{subsec:ingraph}): 

\begin{definition}[Operation Stream Addition and Result Store Addition]
\label{def:osa}
The $\writeoperation$ operator appends a stream unit to the configuration, the $\writeoptwo$ operator appends a stream unit to a non-empty backend, and the $\writeresult$ operator adds results to the configuration: 
%
\[
\begin{array}{r@{\ }c@{\ }l}
\tuple{\data; \ostream; \rstream; \ep} \writeoperation \batchgroup & \df &
  \tuple{\data; \ostream \concat [\batchgroup]; \rstream; \ep}
  \\
\tuple{\inode; \ostream} \cons \data \writeoptwo \batchgroup & \df &
  \tuple{\inode; \ostream \concat [\batchgroup]} \cons \data
  \\
  \tuple{\data; \ostream; \rstream; \ep} \writeresult \rstream' & \df &
  \tuple{\data; \ostream; \rstream' \cup \rstream; \ep}
  \\

\end{array}
\]
%
\end{definition}

The head element in the operation stream represents the earliest-emitted element in the stream, and the last element represents the latest-emitted element. The definition above says that any addition to an operation stream --- be it a top-level operation stream or a streamlet --- must be \emph{appended}. As we shall see in the operational semantics, any \emph{removal} from the operation stream will be from the head. It is through this consistent access pattern that the chronological order of the operations is preserved in our semantics. 

\inlong{

The access of the operation stream follows a distinct FIFO pattern: \emph{appending} to the stream always applies to its end, and \emph{removing} from the stream is always from its beginning. For the top-level operation stream, recall that appending happens at the frontend, and removing happens at the backend, the FIFO access pattern thus coincides with the intuition that the operations emitted earlier at the frontend will also be processed earlier. The same pattern also applies to each streamlet inside the in-graph operation stream, which intuitively says that the operations propagated to a node earlier will also be propagated to the next node earlier.

The access pattern to the result store is more relaxed: even though its appending follows the same access pattern as the operation stream, we do not need to support removal, and instead can just support random access for reading from the result stream. The difference here results from two facts: 1) logically, there is no incentive for a continuously processed graph to remove results; Indeed, by allowing a result to hold in the result store indefinitely, newly emitted operations can be dependent (see \S~\ref{sec:coresocial}) on any earlier results. 2) the order of result claim depends on the program control flow, and it has no correspondence with the order in which the results are produced.

}



\inlong{

By now, it should be easy to understand the encodings in Table~\ref{fig:encodings}.
For example, consider the operation
$\mathtt{updatePayload\ a\ nb}$
at Line 16 in Fig.~\ref{fig:serverclient}.
Table~\ref{fig:encodings} says that
the operation is encoded as
$\map\ \mathtt{f}\ \KS{[a]}$
where
$\mathtt{f} = \lambda x . \ND{\neone{x}; \netwo{\ep'}; \nethree{x}}$
\emph{i.e.,} a map operation which updates
the node payload value to the payload of $\mathtt{nb}$
for the node $\mathtt{a}$.

}

\section{\systemprefix{} Operational Semantics}
\label{sec:highlevelsemantics}

\begin{figure}[t]
\centering

\begin{tabular}{l@{\hspace{1.0cm}}r}
\scalebox{\scalemath}{
$\begin{array}{@{}l@{\ \ }l@{\ \ }lr}

\fE
& \defassign &
\tuple{\data; \ostream; \rstream; \cE}
& \textit{frontend context}
\\

\bE
& \defassign &
\tuple{\bullet; \ostream; \rstream; \ep}
& \textit{backend context}
\\

\tE
& \defassign &
\bE[\data \concat \bullet \concat \data]
& \textit{task context}
\\

\lE
& \defassign &
\tE[[\tuple{\cE; \ostream}]]
& \textit{load context}
\\ & \mid &
\multicolumn{2}{@{}l}{
\tE[[\tuple{\inode; \ostream \concat [\fvs \mapsto \fold\ \func\ \cE\ \K] \cons \ostream}]]
}
\\
\end{array}$
}
&
\scalebox{\scalemath}{
$\begin{array}{@{}l@{\ \ }l@{\ \ }lr}

\cE
& \defassign &
\bullet
\mid \cE\ \ep
\mid \val\ \cE
\mid \neselect{\cE}
\mid \KS{\aseq{\nkey} \concat [ \cE ] \concat \aseq{\ep}}
& \textit{expression context}
\\ & \mid &
\ND{\cE; \ep; \ep}
\mid \ND{\nkey; \cE; \ep}
\mid \ND{\nkey; \integer; \cE}
\\ & \mid &
\cE \setadd \ep \mid \K \setadd \cE
\mid \cE \setsubtract \ep \mid \K \setsubtract \cE
\\ & \mid &
\toserver\ \addn\ \cE
\mid \toserver\ \map\ \cE\ \ep
\mid \toserver\ \map\ \func\ \cE\
\\ & \mid &
\multicolumn{2}{@{}l}{
\toserver\ \fold\ \cE\ \ep\ \ep
\mid \toserver\ \fold\ \func\ \cE\ \ep
\mid \toserver\ \fold\ \func\ \val\ \cE
\mid \fromserver\ \cE
}
\\


\end{array}$
}
\end{tabular}

\caption{Evaluation Contexts
}
\label{fig:contexts}

\end{figure}

\begin{figure}[t]
\centering

\begin{mathpar}
\scalebox{\scalemath}{$
\inferrule*[left=\plaintitle{Emit}]{
   \fvs\ \fresh
}{
   \fE[\toserver\ \strictprocessor]
   \configreduce
   \fE[\fvs] \writeoperation [\fvs \mapsto \strictprocessor]
}
$}

\scalebox{\scalemath}{$
\inferrule*[left=\plaintitle{Claim}]{
   \fE[\fromserver\ \fvs] = \tuple{\data; \ostream; \rstream; \ep}
}{
   \fE[\fromserver\ \fvs]
   \configreduce
   \fE[\rstream(\fvs)]
}
$}

\scalebox{\scalemath}{$
\inferrule*[left=\plaintitle{Beta}]{}{
   \fE[(\lambda \VAR . \ep)\ \val]
   \configreduce
   \fE[\ep[\val / \VAR]]
}
$}

\scalebox{\scalemath}{$
\inferrule*[left=\plaintitle{Node}]{
   \inode = \ND{\val_1; \val_2; \val_3}
}{
   \fE[\neselect{\inode}]
   \configreduce
   \fE[\val_{\neI}]
}
$}

\scalebox{\scalemath}{$
\inferrule*[left=\plaintitle{KSA}]{}{
   \fE[\KS{\aseq{\nkey}} \setadd \KS{\aseq{\nkey'}}]
   \configreduce
   \fE[\KS{\aseq{\nkey} \concat \aseq{\nkey'}}]
}
$}

\scalebox{\scalemath}{$
\inferrule*[left=\plaintitle{KSS}]{}{
   \fE[\KS{\aseq{\nkey}} \setsubtract \KS{\aseq{\nkey'}}]
   \configreduce
   \fE[\KS{\aseq{\nkey} \kssetminus \aseq{\nkey'}}]
}
$}

\scalebox{\scalemath}{$
\inferrule*[left=\plaintitle{Map}]{
   \nkey \in \aset{\nkey}_1
\\ \processor_i = \map\ \func\ \KS{\aseq{\nkey}_i}\ \textrm{for}\ i = 1,2
\\ \inode_1 = \ND{\nkey; \ep; \ep'}
\\ \inode_2 = \ND{\nkey; \netwo{(\func\ \inode_1)}; \nethree{(\func\ \inode_1)}}
\\ \aseq{\nkey}_2 = \aseq{\nkey}_1 \kssetminus \nkey
}{
   \tE[\tuple{\inode_1; [\fvs \mapsto \processor_1] \cons \ostream}]
   \configreduce
   \tE[\tuple{\inode_2; [\fvs \mapsto \processor_2] \cons \ostream}]
}
$}

\scalebox{\scalemath}{$
\inferrule*[left=\plaintitle{Fold}]{
   \nkey \in \aset{\nkey}_1
\\ \processor_i = \fold\ \func\ \ep_i\ \KS{\aseq{\nkey}_i}\ \textrm{for}\ i = 1,2
\\ \inode = \ND{\nkey; \ep'; \ep''}
\\ \ep_2 = \func\ \inode\ \ep_1
\\ \aseq{\nkey}_2 = \aseq{\nkey}_1 \kssetminus \nkey
}{
   \tE[\tuple{\inode; [\fvs \mapsto \processor_1] \cons \ostream}]
   \configreduce
   \tE[\tuple{\inode; [\fvs \mapsto \processor_2] \cons \ostream}]
}
$}

\scalebox{\scalemath}{$
\inferrule*[left=\plaintitle{Prop}]{
   \nkey \notin \hspace{-0.3cm}\bigcup_{\processor \in \ran(\batchgroup)} \hspace{-0.3cm}\target(\processor)
\\ \data_i = [\tuple{\ND{\nkey; \ep; \ep'}; \ostream_i}]\ \textrm{for}\ i = 1,2
\\ \ostream_1 = \batchgroup \cons \ostream_2
}{
   \tE[\data_1 \concat \data]
   \configreduce
   \tE[\data_2 \concat (\data \writeoptwo \batchgroup)]
}
$}

\scalebox{\scalemath}{$
\inferrule*[left=\plaintitle{Complete}]{
   \data_i = [\tuple{\inode; \ostream_i}]\ \textrm{for}\ i = 1,2
\\ \ostream_1 = [\fvs \mapsto \processor] \cons \ostream_2
\\ \target(\processor) = \emptyset
}{
   \tE[\data_1]
   \configreduce
   \tE[\data_2] \writeresult [\pmb{\circlearrowright} \fvs \mapsto \processor]
}
$}
%

\scalebox{\scalemath}{$
\inferrule*[left=\plaintitle{Last}]{
   \tE = \data \concat \bullet
\\ \inode = \ND{\nkey; \ep; \ep'}
\\ \nkey \notin \target(\processor)
}{
   \tE[\tuple{\inode; \fvs \mapsto \processor \cons \ostream}]
   \configreduce
   \tE[\tuple{\inode; \ostream}] \writeresult [\pmb{\circlearrowright} \fvs \mapsto \processor]
}
$}

\scalebox{\scalemath}{$
\inferrule*[left=\plaintitle{Opt}]
{
   \data_i = [\tuple{\inode; \ostream \concat \ostream_i \concat \ostream'}]\ \textrm{for}\ i = 1,2
\\ \ostream_1 \leadsto \ostream_2, \rstream
}{
   \tE[\data_1]
   \configreduce
   \tE[\data_2] \writeresult \rstream
}
$}

\scalebox{\scalemath}{$
\inferrule*[left=\plaintitle{Load}]{
   \lE[\ep] = \tuple{\data; \ostream; \rstream; \ep''}
\\ \tuple{\emptyseq; \emptyseq; \rstream; \ep}
   \configreduce
   \tuple{\emptyseq; \emptyseq; \rstream; \ep'}
}{
   \lE[\ep]
   \configreduce
   \lE[\ep']
}
$}

\scalebox{\scalemath}{$
\inferrule*[left=\plaintitle{Empty}]{
   \processor \neq \addn\ \integer
}{
   \tuple{\emptyseq; [\fvs \mapsto \processor] \cons \ostream; \rstream; \ep}
   \configreduce 
   \tuple{\emptyseq; \ostream; \{\pmb{\circlearrowright} \fvs \mapsto \processor\} \cup \rstream; \ep}
}
$}

\scalebox{\scalemath}{$
\inferrule*[left=\plaintitle{First}]{
   \processor \neq \addn\ \integer
}{
   \tuple{\data; [\fvs \mapsto \processor] \cons \ostream; \rstream; \ep}
   \configreduce 
   \tuple{\data \writeoptwo [\fvs \mapsto \processor]; \ostream; \rstream; \ep}
}
$}
%

\scalebox{\scalemath}{$
\inferrule*[left=\plaintitle{Add}]{
   \nkey\ \fresh
}{
   \tuple{\data; [\fvs \mapsto \addn\ \integer] \cons \ostream; \rstream; \ep}
   \configreduce 
   \tuple{\ND{\nkey; \integer; \KS{\emptyseq}} \cons \data; \ostream; \{\fvs \xmapsto{\KS{\emptyseq}} \nkey\} \cup \rstream; \ep}
}
$}
\end{mathpar}

\caption{\systemprefix{} Operational Semantics
}

\label{fig:clientserver}

\end{figure}

%
%


In this section, we define the behavior of continuous graph processing spanning the frontend and the backend. The main reduction system is presented in \S~\ref{sec:ourlang:1}. The semantics of TLO is an independent system that bridges with the main system via one reduction rule, whose details are in \S~\ref{sec:tlo}. 

\subsection{Semantics for Continuous Graph Processing}
\label{sec:ourlang:1}

Reduction relation
$\config \configreduce \config'$ in Fig.~\ref{fig:clientserver}
says that configuration $\config$
one-step reduces to configuration $\config'$. 
We use $\configreduce^*$
to represent the reflexive and transitive
closure of $\configreduce$. Evaluation contexts play an important role in defining the semantics of \systemprefix{}, whose definitions appear in Fig.~\ref{fig:contexts}. 
For the convenience of our discussion, we classify $\configreduce$ reduction into 4 forms, based on \emph{where} a reduction happens.

\inlong{

A \emph{frontend reduction} happens on the frontend. A backend reduction, i.e., one that happens in the graph, may either be a \emph{task reduction} and a \emph{load reduction}. The distinction between \emph{task} and \emph{load} is evocative of the task vs. data duality in programming languages: the former focuses on the evolution of \emph{operations}, whereas the latter addresses that of graph \emph{data}. Finally, a \emph{to-graph reduction} describes the behavior at the boundary between the top-level operation stream and the graph.

Recall that \systemprefix{} captures the behavior of both the frontend and backend, and the two are connected through the top-level operation stream and the result store. The different forms of evaluation contexts correspond to \emph{where} the reductions happen, or formally indeed, where the redex appears in the configuration. Specifically, the \emph{frontend context} $\fE$ and the \emph{backend context} $\bE$ corresponds to our intuition that the reduction happens at the frontend and the backend respectively. The frontend context is used to enable \emph{frontend reductions}: 

\begin{definition}[Frontend Reduction]
\label{def:frontendreduction}
A frontend reduction describes the behavior of the frontend, including its interaction with the top-level operation stream and the result store. We say a reduction step is a frontend reduction iff it is an instance of \plaintitle{Emit}, \plaintitle{Claim}, \plaintitle{Beta}, \plaintitle{Node}, \plaintitle{KSA$_1$}, \plaintitle{KSA$_2$}, \plaintitle{KSS$_1$}, \plaintitle{KSS$_2$}. 
\end{definition}

Within the backend, we further define the \emph{task context} $\tE$ and \emph{load context} $\lE$. The distinction between \emph{task} and \emph{load} is evocative of the task vs. data duality in concurrent/parallel programming languages: the former focuses on the evolution of \emph{operations}, whereas the latter addresses that of graph \emph{data}. To be more specific, we define the reductions that may happen appear in these two contexts are:


\begin{definition}[In-Graph Task Reduction]
\label{def:taskreduction}
An \emph{in-graph task reduction}, or \emph{task reduction} for short, describes the behavior of operation processing in the graph. We say a reduction step is a task reduction iff it is an instance of \plaintitle{Map}, \plaintitle{Fold}, \plaintitle{Prop}, \plaintitle{Complete}, \plaintitle{Last}, or \plaintitle{Opt}.  
\end{definition}

\begin{definition}[In-Graph Load Reduction]
\label{def:loadreduction}
An \emph{in-graph load reduction}, or \emph{load reduction} for short, describes the behavior of processing the \emph{load expression} carried inside a single station, where the load expression is defined as (i) its graph node as an expression, or (ii) the base-case argument (the second argument) of a $\fold$ operation in its streamlet. We say a reduction step is a load reduction iff it is an instance of \plaintitle{Load}.
\end{definition}

To complete our taxonomy, one more form of reduction exists in \systemprefix{}:

\begin{definition}[To-Graph Reduction]
\label{def:tographreduction}
A \emph{to-graph reduction} describes the behavior at the boundary between the top-level operation stream and the graph. We say a reduction step is a to-graph reduction iff it is an instance of \plaintitle{Empty}, \plaintitle{First}, or \plaintitle{Add}. 
\end{definition}



}


\paragraph{1) Frontend Reduction}


Rules with the $\fE$ evaluation context enable reductions that happen on the frontend. The pair of \plaintitle{Emit} and \plaintitle{Claim} rules define the behavior of asynchronous operation processing at the frontend, with the former placing an operation on the top-level operation stream, and the latter reading from the result store. The definition here follows future semantics, where the fresh label in \plaintitle{Emit} is the future value.
We say an operation is emittable if all of its arguments are values, which we represent as metavariable $\strictprocessor$:
\[
\begin{array}{@{}l@{\ \ }l@{\ \ }lr}
\strictprocessor
& \defassign &
\addn\ \integer
\mid \map\ \func\ \K
\mid \fold\ \func\ \val\ \K
\\
\end{array}
\]
Both nodes and key lists as first-class citizens can be constructed at the frontend. 
A node may be eliminated through \plaintitle{Node}. Key list concatenation and subtraction are supported are defined through \plaintitle{KSA} and \plaintitle{KSS} respectively.
The rest of the frontend computation is enabled by \plaintitle{Beta}, in a call-by-value style.

\paragraph{2) In-Graph Task Reduction}


On the backend, in-graph processing may either be enabled by a \emph{task reduction} and a \emph{load reduction}, the first of which we describe now. Rules with the $\tE$ evaluation context enable reductions that perform a \emph{task}, i.e., a step on operation processing. 

The task that ``drives'' the data processing at the backend is propagation, an instance of \plaintitle{Prop}. A step of operation propagation involves two consecutive stations in the runtime graph. The reduction removes the \emph{head} element (the oldest element) from the the streamlet of the first station, and places it to the \emph{last} element (the youngest element) of the streamlet in the second station. 
It is important to observe that the selection of redex for propagation is non-deterministic according to the definition of $\tE$. In other words, propagation may happen between any adjacent two stations in the runtime graph. 
\inlong{
As shown in the example, the redex selection of the first reduction propagates $\fvs_1$, whereas that of the second reduction propagates $\fvs_2$. On the high level, the non-determinism in propagation redex selection naturally enables lazy operation processing: when the second reduction step happens, the processing of $\fvs_1$ is deferred. 
}
Furthermore, observe that \plaintitle{Prop} supports batched propagation: the unit for propagation may contain more than one operation. We will defer this discussion to \S~\ref{sec:tlo}, when batching is formally defined. 
\inlong{
We illustrate the behavior of propagation with an example:

\begin{example}[Propagation]
\label{exm:prop}
Consider the following configuration
illustrated in Fig.~\ref{fig:datacentriclaziness}(c),
\[
\lb
\begin{array}{l@{\hspace{-10em}}l}
[\tuple{\ND{\nkey_{\mathtt{eve}}; \integer_{\mathtt{eve}}; \KS{\emptyseq}}; \\ &[[\fvs_1 \mapsto \mathtt{addRelationship\ \nkey_\mathtt{cam}\ \nkey_\mathtt{bob}}],\\ &[\fvs_2 \mapsto \mathtt{addRelationship\ \nkey_\mathtt{deb}\ \nkey_\mathtt{cam}}]]},
\\
\tuple{\ND{\nkey_{\mathtt{deb}}; \integer_{\mathtt{deb}}; \KS{[\nkey_{\mathtt{cam}}]}}; \emptyseq},
\\
\tuple{\ND{\nkey_{\mathtt{cam}}; \integer_{\mathtt{cam}}; \KS{\emptyseq}}; \emptyseq},
\\
\tuple{\ND{\nkey_{\mathtt{bob}}; \integer_{\mathtt{bob}}; \KS{\emptyseq}}; \emptyseq},
\\
\tuple{\ND{\nkey_{\mathtt{amy}}; \integer_{\mathtt{amy}}; \KS{\emptyseq}}; \emptyseq}]
\end{array}
; \emptyseq; \rstream; \ep \rb
\]
With \plaintitle{Prop}, this configuration can reduce to the following configuration in one step, as
illustrated in Fig.~\ref{fig:datacentriclaziness}(d),
\[
\lb
\begin{array}{l@{\hspace{-10em}}l}
[\tuple{\ND{\nkey_{\mathtt{eve}}; \integer_{\mathtt{eve}}; \KS{\emptyseq}};\\ &[[\fvs_2 \mapsto \mathtt{addRelationship\ \nkey_\mathtt{deb}\ \nkey_\mathtt{cam}}]]},
\\
\tuple{\ND{\nkey_{\mathtt{deb}}; \integer_{\mathtt{deb}}; \KS{[\nkey_{\mathtt{cam}}]}};\\ &[[\fvs_1 \mapsto \mathtt{addRelationship\ \nkey_\mathtt{cam}\ \nkey_\mathtt{bob}}]]},
\\
\tuple{\ND{\nkey_{\mathtt{cam}}; \integer_{\mathtt{cam}}; \KS{\emptyseq}}; \emptyseq},
\\
\tuple{\ND{\nkey_{\mathtt{bob}}; \integer_{\mathtt{bob}}; \KS{\emptyseq}}; \emptyseq},
\\
\tuple{\ND{\nkey_{\mathtt{amy}}; \integer_{\mathtt{amy}}; \KS{\emptyseq}}; \emptyseq}]
\end{array}
; \emptyseq; \rstream; \ep \rb
\]
As illustrated in Fig.~\ref{fig:datacentriclaziness}(e),
another step of \plaintitle{Prop} can reduce the configuration
to the following:
\[
\lb
\begin{array}{l@{\hspace{-10em}}l}
[\tuple{\ND{\nkey_{\mathtt{eve}}; \integer_{\mathtt{eve}}; \KS{\emptyseq}}; \emptyseq},
\\
\tuple{\ND{\nkey_{\mathtt{deb}}; \integer_{\mathtt{deb}}; \KS{[\nkey_{\mathtt{cam}}]}};\\ &[[\fvs_1 \mapsto \mathtt{addRelationship\ \nkey_\mathtt{cam}\ \nkey_\mathtt{bob}}],\\ &[\fvs_2 \mapsto \mathtt{addRelationship\ \nkey_\mathtt{deb}\ \nkey_\mathtt{cam}}]]},
\\
\tuple{\ND{\nkey_{\mathtt{cam}}; \integer_{\mathtt{cam}}; \KS{\emptyseq}}; \emptyseq},
\\
\tuple{\ND{\nkey_{\mathtt{bob}}; \integer_{\mathtt{bob}}; \KS{\emptyseq}}; \emptyseq},
\\
\tuple{\ND{\nkey_{\mathtt{amy}}; \integer_{\mathtt{amy}}; \KS{\emptyseq}}; \emptyseq}]
\end{array}
; \emptyseq; \rstream; \ep \rb
\]
\end{example}
}

The realizations of $\map$ and $\fold$ are defined by \plaintitle{Map} and \plaintitle{Fold}, over a single station as the redex. The task reduction for $\map$ realization happens when the key of the redex is included in the target key list, the second argument of the $\map$ operation. It further applies the mapping function (the first argument) to the current node, which computes a new node to update the current node. Following the convention in graph processing, our calculus does not allow a $\map$ operation to update the key of the node: even though the node payload and the graph topology can be changed in dynamic graphs, keys as unique identifiers of nodes do not change. This is ensured in \plaintitle{Map} by updating only the second and third components of the node based on the computation of the mapping function, while preserving the key. Similar to $\map$, a task reduction for $\fold$ realization happens when the key of the redex is included in its target key list (the third argument). The folding function is applied to the current node and the base-case expression, whose resulting expression becomes the base-case expression for further propagation. Both \plaintitle{Map} and \plaintitle{Fold} demonstrate the incremental nature of load update  (recall \S~\ref{subsec:ingraph}): when being applied, the $\map$ operation does not immediately evaluate the resulting payload expression or adjacency list expression to a value; similarly, the $\fold$ operation does not immediately evaluate the application of the folding function.

As the target of $\map$ and $\fold$ operations may contain multiple keys, completing each operation may involve multiple realizations. That said, both \plaintitle{Map} and \plaintitle{Fold} are defined over one station only. This is possible because the targets of these operations are updated ``as they go'': after the operation is realized at a node of key $\nkey$, the operation is placed back to its streamlet of the current station, except that its target no longer contains $\nkey$. 
When the target of the $\map$ (or $\fold$) operation contains multiple keys, its processing is ``incremental'': the processing consists of many \plaintitle{Prop} steps occasionally interposed by \plaintitle{Map} (or \plaintitle{Fold}) steps. 
We will show an example of this incremental process shortly, in Example~\ref{ex:foldinc}. 



Finally, \plaintitle{Complete} and \plaintitle{Last} are a pair of rules to ``wrap up'' the processing of an operation. The former captures the case when a $\map$ or $\fold$ operation is successfully realized over every node defined by its target. The latter represents the case when the last node is reached in the graph. In both cases, the $\pmb{\circlearrowright}$ operator computes the result to be placed to the result store:
\[
\pmb{\circlearrowright} \fvs \mapsto \strictprocessor \df
  \begin{cases} \fvs \xmapsto{\K} 0 & \fif \strictprocessor = \map\ \func\ \K \arcr \fvs \xmapsto{\K} \val & \fif \strictprocessor = \fold\ \func\ \val\ \K \end{cases}
\]
For a completed $\map$ operation, we return the default value of 0. For the $\fold$ operation, we return its incrementally updated base-case value. We further place the residual target key list to the result store, i.e., the remaining keys in the operation target key list of $\map$ or $\fold$ while the operation has reached the last node. 

%

A quick case analysis can reveal that each task reduction only involves at most two consecutive stations in the station sequence (\plaintitle{Prop}), and often one station only (\plaintitle{Map}, \plaintitle{Fold}, \plaintitle{Complete}, \plaintitle{Last}, or \plaintitle{Opt}). 
In other words, both task reductions exhibit \emph{local} behaviors.
\inlong{
From now on, we call the subset of stations in the station sequence where a task/load reduction happens a \emph{station neighborhood}.
}


\paragraph{3) In-Graph Load Reduction}

On the backend, the other form of in-graph processing is a \emph{load reduction}, enabled by \plaintitle{Load}. Unlike task reductions that process \emph{operations}, load reductions process \emph{data}. What constitutes a data \emph{load} is evident by an inspection on the $\lE$ evaluation context, whose fulfilling redex we call a \emph{load expression}: (i) the graph node inside a station, or (ii) the base-case argument (the second argument) of a $\fold$ operation in the streamlet of a node.


As revealed by \plaintitle{Load}, a load reduction depends on a frontend reduction: 
the premise of the rule is a reduction over a configuration whose backend and top-level operation stream are both set to $\emptyset$, \emph{de facto} only allowing for a frontend reduction. Intuitively, this means we consider every load expression forms its own runtime with a trivial configuration that has no backend graph or operation stream. This simplifies our definition because a load reduction can thus depend on a \plaintitle{Beta}, \plaintitle{Node}, or \plaintitle{Claim} reduction, effectively allowing the reductions they represent to happen at the backend of graph processing. 
The last case is especially important, in that it enables a dependent operation to claim its argument in the form of a future, while processing at the backend (recall \S~\ref{sec:coresocial}). 

Before we move on, let us illustrate the behavior of task and load reductions, especially on how a propagation step, a realization step, and a load reduction step interleave with each other, through an example:

\begin{example}[Incremental Folding]
\label{ex:foldinc}
Consider a configuration
where the backend
consists of two stations, with nodes $\inode_1$ and $\inode_2$,
and a $\fold$ operation has
been propagated to the first station.
The operation has a folding function $\func$ representing a function which sums up the payloads of all target nodes (this is a simplified version of the \textsc{CorePR} example),
and a target key list of $\KS{[\nkey_1, \nkey_2]}$. 
The following is one reduction sequence which
ends in the $\fold$ being completed:
\begin{center}
\scalebox{\scalemath}{
$\begin{array}{@{}r@{\ }l@{\hspace{-4em}}r@{\,}l@{\,}r@{\,}r@{}}
&\tuple{[\tuple{\inode_1; [[\fvs \mapsto \fold\ \func\ \inode_0\ \KS{[\nkey_1, \nkey_2]}]]},&  \tuple{\inode_2; \emptyseq}];& \emptyseq;& \emptyset; \ep}
\\
(\plaintitle{Fold}) \configreduce
&\tuple{[\tuple{\inode_1; [[\fvs \mapsto \fold\ \func\ (\func\ \inode_1\ \inode_0)\ \KS{[\nkey_2]}]]},&  \tuple{\inode_2; \emptyseq}];& \emptyseq;& \emptyset; \ep}
\\
(\plaintitle{Prop}) \configreduce
&\tuple{[\tuple{\inode_1; \emptyseq},&  \tuple{\inode_2; [[\fvs \mapsto \fold\ \func\ (\func\ \inode_1\ \inode_0)\ \KS{[\nkey_2]}]]}];& \emptyseq;& \emptyset; \ep}
\\
(\plaintitle{Fold}) \configreduce
&\tuple{[\tuple{\inode_1; \emptyseq},& \tuple{\inode_2; [[\fvs \mapsto \fold\ \func\ (\func\ \inode_2\ (\func\ \inode_1\ \inode_0))\ \KS{\emptyseq}]]}];& \emptyseq;& \emptyset; \ep}
\\
(\plaintitle{Load}) \configreduce^*
&\tuple{[\tuple{\inode_1; \emptyseq},& \tuple{\inode_2; [[\fvs \mapsto \fold\ \func\ \inode_0'\ \KS{\emptyseq}]]}];& \emptyseq;& \emptyset; \ep}
\\
(\plaintitle{Last}) \configreduce
&\tuple{[\tuple{\inode_1; \emptyseq},& \tuple{\inode_2; \emptyseq}];& \emptyseq;& \{\fvs \xmapsto{\KS{\emptyseq}} \inode_0'\}; \ep}
\end{array}$
}
\end{center}
where
$\inode_i = \ND{\nkey_i; i; \KS{\emptyseq}}$
for $i = 0, 1,2$
and
$\inode_0' = \tuple{\nkey_0; 3; \KS{\emptyseq}}$.
\end{example}

\paragraph{4) To-Graph Reduction}

The three rules that capture the behavior at the boundary of the top-level operation stream and the graph are simple. \plaintitle{Empty} considers the bootstrapping case where the graph so far contains no nodes. If the operation is a $\map$ or $\fold$ operation, a result is immediately returned. \plaintitle{First} removes the head element from the top-level operation stream, and places it as the last element of the streamlet associated with the first node. 


According to \plaintitle{Add}, a new node is created with a freshly generated key. In \systemprefix{} we adopt a simple design for node addition: they are always placed at the beginning of the graph station sequence. This can be seen in \plaintitle{Add}. It also explains why an $\addn$ reduction is a to-graph reduction not an in-graph one.
\inlong{
In \S~\ref{subsec:add}, we will discuss some alternative designs. 
}

\subsection{Temporal Locality Optimization}
\label{sec:tlo}

\begin{figure}[t]

\centering

\begin{tabular}{r}
\multicolumn{1}{@{}c@{}}{
\raisebox{-0.14cm}{
\begin{tikzpicture}
\coordinate (x) at (0,0);
\coordinate (y) at (\linewidth-\pgflinewidth,0);
\draw (x) -- (y);
\end{tikzpicture}
}
}
\\
\systembox{$\ostream \leadsto \ostream, \rstream$}
\end{tabular}
\scalebox{\scalemath}{
$\begin{array}{@{}l@{\ \ \ }r@{\ }c@{\ }ll}

\htitle{Batch}
&
[\batchgroup, \batchgroup']
& \leadsto &
[\batchgroup \concat \batchgroup'],
\emptyset
\\[2.0ex]

\htitle{Unbatch}
&
[\batchgroup_1 \concat \batchgroup_2]
& \leadsto &
[\batchgroup_1, \batchgroup_2],
\emptyset
& \fif
\batchgroup_i \neq \emptyseq\ \textrm{for}\ i = 1,2
\\[2.0ex]

\htitle{ReorderD}
&
[[\fvs_1 \mapsto \processor_1],
[\fvs_2 \mapsto \processor_2]]
& \leadsto &
[[\fvs_2 \mapsto \processor_2],
[\fvs_1 \mapsto \processor_1]],
\emptyset
& \fif
\target(\processor_1) \cap \target(\processor_2) = \emptyset
\\[2.0ex]

\htitle{ReorderRR}
&
[[\fvs_1 \mapsto \processor_1], [\fvs_2 \mapsto \processor_2]]
& \leadsto &
[[\fvs_2 \mapsto \processor_2], [\fvs_1 \mapsto \processor_1]],
\emptyset
& \fif
\processor_i = \fold\ \func_i\ \ep_i\ \K_i\ \textrm{for}\ i=1,2
\\[2.0ex]

\htitle{ReorderRW}
&
[[\fvs_1 \mapsto \map\ \func_1\ \KS{\aseq{\nkey}_1}],
& \leadsto &
[[\fvs_2 \mapsto \fold\ (\func_2 \dcomp{\aset{\nkey}_1} \func_1)\ \ep\ \K_2],
\\
&
[\fvs_2 \mapsto \fold\ \func_2\ \ep\ \K_2]]
&&
[\fvs_1 \mapsto \map\ \func_1\ \KS{\aseq{\nkey}_1}]],
\emptyset
\\[2.0ex]

\htitle{FuseM}
&
[[\fvs_1 \mapsto \map\ \func_1\ \K],
& \leadsto &
[[\fvs_1 \mapsto \map\ (\func_2 \circ \func_1)\ \K]],
& \fif
(\func_2 \circ \func_1) \not\feq \mathtt{Id}
\\
&
[\fvs_2 \mapsto \map\ \func_2\ \K]]
&&
\{\fvs_2 \xmapsto{\KS{\emptyseq}} 0\}
\\[2.0ex]

\htitle{FuseMId}
&
[[\fvs_1 \mapsto \map\ \func_1\ \K],
& \leadsto &
[],
& \fif
(\func_2 \circ \func_1) \feq \mathtt{Id}
\\
&
[\fvs_2 \mapsto \map\ \func_2\ \K]]
&&
\{\fvs_1 \xmapsto{\KS{\emptyseq}} 0, \fvs_2 \xmapsto{\KS{\emptyseq}} 0\}
\\[2.0ex]



\htitle{Reuse}
&
[[\fvs_1 \mapsto \fold\ \func\ \ep\ \KS{\aseq{\nkey}_1}],
& \leadsto &
[[\fvs_1 \mapsto \fold\ \func\ \ep\ \KS{\aseq{\nkey}_1}],
& \fif
\aset{\nkey}_1 \subseteq \aset{\nkey}_2,
\\
&
[\fvs_2 \mapsto \fold\ \func\ \ep\ \KS{\aseq{\nkey}_2}]]
&&
[\fvs_2 \mapsto \fold\ \func\ (\fromserver\ \fvs_1)\ \KS{\aseq{\nkey}_2 \kssetminus \aseq{\nkey}_1}]],
\emptyset
&
\fif\ \textrm{$\func$ is commutative}
\\



\end{array}$
}

\caption{Temporal Locality Optimization
}
\label{fig:fusion}

\end{figure}

The \plaintitle{Opt} rule bridges the main reduction relation ($\configreduce$) with the $\leadsto$ relation, which incarnates different forms of temporal locality optimization. 
Defined in Fig.~\ref{fig:fusion}, the $\ostream \leadsto \ostream', \rstream$ relation says that operation stream $\ostream$ reduces 
to operation stream $\ostream'$ in one step, while producing result
$\rstream$.
%
%

\htitle{Batch} and \htitle{Unbatch} allow units in the in-graph operation streams to be batched and unbatched. As the \plaintitle{Opt} rule can be applied over the streamlet in any station, batching and unbatching may happen in-graph at an arbitrary station. 
\inlong{
Recall that \plaintitle{Prop} supports batched propagation, so an alternative way of reducing the configuration we showed in Example~\ref{exm:prop} exists:

\begin{example}[Batched Propagation]
Consider the starting configuration of Example~\ref{exm:prop},
corresponding to the graph illustrated in Fig.~\ref{fig:datacentriclaziness}(c).
According to \htitle{Batch}, the configuration can one-step
reduce to the following, where the two operations are part of
the same stream unit:
\[
\lb
\begin{array}{l@{\hspace{-10em}}l}
[\tuple{\ND{\nkey_{\mathtt{eve}}; \integer_{\mathtt{eve}}; \KS{\emptyseq}};\\ &[[\fvs_1 \mapsto \mathtt{addRelationship\ \nkey_\mathtt{cam}\ \nkey_\mathtt{bob}}, \fvs_2 \mapsto \mathtt{addRelationship\ \nkey_\mathtt{deb}\ \nkey_\mathtt{cam}}]]},
\\
\tuple{\ND{\nkey_{\mathtt{deb}}; \integer_{\mathtt{deb}}; \KS{[\nkey_{\mathtt{cam}}]}}; \emptyseq},
\\
\tuple{\ND{\nkey_{\mathtt{cam}}; \integer_{\mathtt{cam}}; \KS{\emptyseq}}; \emptyseq},
\\
\tuple{\ND{\nkey_{\mathtt{bob}}; \integer_{\mathtt{bob}}; \KS{\emptyseq}}; \emptyseq},
\\
\tuple{\ND{\nkey_{\mathtt{amy}}; \integer_{\mathtt{amy}}; \KS{\emptyseq}}; \emptyseq}]
\end{array}
; \emptyseq; \rstream; \ep \rb
\]
Corresponding with Example~\ref{ex:batch} in \S~\ref{subsec:tlo},
one instance of \plaintitle{Prop} can then
propagate the stream unit to $\nkey_{\mathtt{deb}}$, since neither
operation targets $\nkey_\mathtt{eve}$:
\[
\lb
\begin{array}{l@{\hspace{-10em}}l}
[\tuple{\ND{\nkey_{\mathtt{eve}}; \integer_{\mathtt{eve}}; \KS{\emptyseq}}; \emptyseq},
\\
\tuple{\ND{\nkey_{\mathtt{deb}}; \integer_{\mathtt{deb}}; \KS{[\nkey_{\mathtt{cam}}]}};\\ &[[\fvs_1 \mapsto \mathtt{addRelationship\ c\ b}, \fvs_2 \mapsto \mathtt{addRelationship\ \nkey_\mathtt{deb}\ \nkey_\mathtt{cam}}]]},
\\
\tuple{\ND{\nkey_{\mathtt{cam}}; \integer_{\mathtt{cam}}; \KS{\emptyseq}}; \emptyseq},
\\
\tuple{\ND{\nkey_{\mathtt{bob}}; \integer_{\mathtt{bob}}; \KS{\emptyseq}}; \emptyseq},
\\
\tuple{\ND{\nkey_{\mathtt{amy}}; \integer_{\mathtt{amy}}; \KS{\emptyseq}}; \emptyseq}]
\end{array}
; \emptyseq; \rstream; \ep \rb
\]
\end{example}
}

While operations propagate through the graph, \htitle{Batch} and \htitle{Unbatch} can be flexibly applied.
The reader may notice that many task reduction rules, such as \plaintitle{Map} and \plaintitle{Fold}, are defined with a singleton stream unit (batch). This is because any batched stream unit can be unbatched first via \htitle{Unbatch}, realized, and then batched again via \htitle{Batch} for further propagation.



%

Reordering is supported by three rules. \htitle{ReorderD} says that two operations with disjoint target key lists can be reordered in the operation stream (without having any impact on the result). 

\begin{example}[Operation Reordering]
Imagine we have two mapping operations
which target two lists of disjoint keys:
$\fvs \mapsto \map\ \ep\ \KS{[\nkey_1, \nkey_2]}$
and
$\fvs' \mapsto \map\ \ep'\ \KS{[\nkey_3, \nkey_4]}$.
According to \htitle{ReorderD}, they may be swapped.
\end{example}

\htitle{ReorderRR} says that two $\fold$ operations can be reordered, as both are ``read'' in nature. Finally, \htitle{ReorderRW} shows a $\map$ operation and a $\fold$ operation may still be reordered even if they have overlapping targets. The insight behind is that a $\fold$ can ``skip ahead'' of a $\map$ if the former alters its folding function as applying the mapping function of the latter first. This rule relies on a helper operator for composing a mapping function and a folding function together,
where $\func \dcomp{\aset{\nkey}} \func'$ is defined as 
$
\lambda x . \lambda y .
\func\ (\mathbf{if}\ \neone{x} \in \aset{\nkey}\ \mathbf{then}\ \func'\ x\ \mathbf{else}\ x)\ y
$.
%
%


We support operation fusion with two rules. In \htitle{FuseM}, two $\map$ operations with identical target key lists can be fused into a single operation by composing their mapping functions. A special case of $\map$-$\map$ fusion is illustrated in \htitle{FuseMId}, where two mapping functions compose to one equivalent to the identity function. In that case, both operations can be removed all together from the operation streams: they ``cancel out'' on each other. 


\begin{example}[Map-Map Fusion]
Consider the pair of operations $\mathtt{addRelationship\ b\ f}$ and $\mathtt{deleteRelationship\ b\ f}$
from Fig.~\ref{fig:serverclient}. In Example~\ref{ex:cancel}, it was discussed that the two may cancel each other out. Formally, this is demonstrated by the rule \htitle{FuseMId}, with the composition of the two mapping functions as follows being equivalent to the identity function:
\[
(\lambda x . \ND{\neone{x}; \netwo{x}; \nethree{x} \setsubtract \KS{[\nkey_\mathtt{fred}]}})
\circ
(\lambda x . \ND{\neone{x}; \netwo{x}; \nethree{x} \setadd \KS{[\nkey_\mathtt{fred}]}})
%
\]
\end{example}

Finally, \htitle{Reuse} demonstrates the idea behind reusing. Observe here that the targets of the two $\fold$ operations form a subsetting relationship. With this rule, the second $\fold$ operation does not need to ``redo'' the computation that is to be carried out by the first $\fold$. The algebraic requirement of commutativity should become clear through one counterexample. 

\begin{example}[Reuse in the Presence of Non-Commutative Folding Functions]
Consider the following configuration with two nodes and two delayed operations:
\[
\tuple{[\tuple{\inode_1; [[\fvs \mapsto \fold\ \func\ \inode_0\ \KS{[\nkey_2]}], [\fvs' \mapsto \fold\ \func\ \inode_0\ \KS{[\nkey_1, \nkey_2, \nkey_3]}]]}, \tuple{\inode_2; \emptyseq}]; \emptyseq; \emptyset; 0}
\]
where $\func = \lambda x . \lambda y . \ND{\neone{y}; \netwo{x} - \netwo{y}; \nethree{y}}$,
$\inode_i = \ND{\nkey_i; i; \KS{\emptyseq}}$ for $i = 0,1,2$.
Without any TLO, we know $\fvs'$ eventually should compute to 
$\ND{\nkey_0; 1; \KS{\emptyseq}}$.
If \htitle{Reuse} were applied (ignoring the commutativity requirement on $\func$), 
the configuration would one-step reduce to
\[
\tuple{[\tuple{\inode_1; [[\fvs \mapsto \fold\ \func\ \inode_0\ \KS{[\nkey_2]}], [\fvs' \mapsto \fold\ \func\ (\fromserver\ \fvs)\ \KS{[\nkey_1, \nkey_3]}]]}, \tuple{\inode_2; \emptyseq}]; \emptyseq; \emptyset; 0}
\]
which would compute an incorrect value
$\ND{\nkey_0; -1; \KS{\emptyseq}}$ for $\fvs'$.
\end{example}


Our TLO support aims at demonstrating its principles behind, and its role in constructing continuous graph processing. Despite the diversity of TLOs --- from batching, reordering, fusing, to reusing --- they all share the spirit of ``short-circuiting'' the operation(s) before they are fully realized, leading to reduced need for computation.

\section{\systemprefix{} Type System}
\label{sec:safety}

\begin{figure}[t]
\centering

\scalebox{\scalemath}{
$\begin{array}{@{}l@{\ \ }l@{\ \ }lr}

\tau
& \defassign &
\mathsf{int}
\mid \mathsf{key}
\mid \mathsf{future}[\tau]
\mid \mathsf{kl}
\mid \mathsf{node}
\mid \tau \cdfunc \tau
& \textit{type}
\\

\cdstatus
& \defassign &
\dirtyexp
\mid \cleanexp
& \textit{emittability}
\\

\Gamma
& \defassign &
\aseq{\VAR : \tau}
& \textit{typing environment}
\\

\end{array}$
}

\begin{mathpar}
\scalebox{\scalemath}{$
\inferrule*[left=\ttitle{Int}]{}{
   \Gamma \vdash \integer : \mathsf{int} \bs \cleanexp
}
$}

\scalebox{\scalemath}{$
\inferrule*[left=\ttitle{Key}]{}{
   \Gamma \vdash \nkey : \mathsf{key} \bs \cleanexp
}
$}

\scalebox{\scalemath}{$
\inferrule*[left=\ttitle{KS}]{
   \aset{\Gamma \vdash \ep : \mathsf{key} \bs \cdstatus}
}{
   \Gamma \vdash \KS{\aseq{\ep}} : \mathsf{kl} \bs {\textstyle \bigvee \aset{\cdstatus}}
}
$}
%
%

\scalebox{\scalemath}{$
\inferrule*[left=\ttitle{Node}]{
   \Gamma \vdash \ep : \mathsf{key} \bs \cdstatus
\\ \Gamma \vdash \ep' : \mathsf{int} \bs \cdstatus'
\\ \Gamma \vdash \ep'' : \mathsf{kl} \bs \cdstatus''
}{
   \Gamma \vdash \ND{\ep; \ep'; \ep''} : \mathsf{node} \bs (\cdstatus \lor \cdstatus' \lor \cdstatus'')
}
$}

\scalebox{\scalemath}{$
\inferrule*[left=\ttitle{Var}]{}{
   \Gamma \vdash x : \Gamma\{x\} \bs \cleanexp
}
$}

\scalebox{\scalemath}{$
\inferrule*[left=\ttitle{Claim}]{
   \Gamma \vdash \ep : \mathsf{future}[\tau] \bs \cdstatus
}{
   \Gamma \vdash \fromserver\ \ep : \tau \bs \cdstatus
}
$}

\scalebox{\scalemath}{$
\inferrule*[left=\ttitle{ENode1}]{
   \Gamma \vdash \ep : \mathsf{node} \bs \cdstatus
}{
   \Gamma \vdash \neone{\ep} : \mathsf{key} \bs \cdstatus
}
$}

\scalebox{\scalemath}{$
\inferrule*[left=\ttitle{ENode2}]{
   \Gamma \vdash \ep : \mathsf{node} \bs \cdstatus
}{
   \Gamma \vdash \netwo{\ep} : \mathsf{integer} \bs \cdstatus
}
$}

\scalebox{\scalemath}{$
\inferrule*[left=\ttitle{ENode3}]{
   \Gamma \vdash \ep : \mathsf{node} \bs \cdstatus
}{
   \Gamma \vdash \nethree{\ep} : \mathsf{kl} \bs \cdstatus
}
$}

\scalebox{\scalemath}{$
\inferrule*[left=\ttitle{KSA}]{
   \Gamma \vdash \ep : \mathsf{kl} \bs \cdstatus
\\ \Gamma \vdash \ep' : \mathsf{kl} \bs \cdstatus'
}{
   \Gamma \vdash \ep \setadd \ep' : \mathsf{kl} \bs (\cdstatus \vee \cdstatus')
}
$}

\scalebox{\scalemath}{$
\inferrule*[left=\ttitle{KSS}]{
   \Gamma \vdash \ep : \mathsf{kl} \bs \cdstatus
\\ \Gamma \vdash \ep' : \mathsf{kl} \bs \cdstatus'
}{
   \Gamma \vdash \ep \setsubtract \ep' : \mathsf{kl} \bs (\cdstatus \vee \cdstatus')
}
$}

\scalebox{\scalemath}{$
\inferrule*[left=\ttitle{Add}]{
   \Gamma \vdash \ep : \mathsf{int} \bs \cdstatus
}{
   \Gamma \vdash \toserver\ \addn\ \ep : \mathsf{future}[\mathsf{key}] \bs \dirtyexp
}
$}

\scalebox{\scalemath}{$
\inferrule*[left=\ttitle{Map}]{
   \Gamma \vdash \ep : \mathsf{node} \cleanfunc \mathsf{node} \bs \cdstatus
\\ \Gamma \vdash \ep' : \mathsf{kl} \bs \cdstatus'
}{
   \Gamma \vdash \toserver\ \map\ \ep\ \ep' : \mathsf{future}[\mathsf{int}] \bs \dirtyexp
}
$}

\scalebox{\scalemath}{$
\inferrule*[left=\ttitle{Fold}]{
   \Gamma \vdash \ep : \mathsf{node} \cleanfunc \mathsf{node} \cleanfunc \mathsf{node} \bs \cdstatus
\\ \Gamma \vdash \ep' : \mathsf{node} \bs \cdstatus'
\\ \Gamma \vdash \ep'' : \mathsf{kl} \bs \cdstatus''
}{
   \Gamma \vdash \toserver\ \fold\ \ep\ \ep'\ \ep'' : \mathsf{future}[\mathsf{node}] \bs \dirtyexp
}
$}

\scalebox{\scalemath}{$
\inferrule*[left=\ttitle{Abs}]{
   \Gamma \concat [\VAR : \tau] \vdash \ep : \tau' \bs \cdstatus
\\
}{
   \Gamma \vdash \lambda \VAR : \tau . \ep : \tau \xrightarrow{\cdstatus} \tau' \bs \cleanexp
}
$}

\scalebox{\scalemath}{$
\inferrule*[left=\ttitle{App}]{
   \Gamma \vdash \ep : \tau \cdfunc \tau' \bs \cdstatus'
\\ \Gamma \vdash \ep' : \tau \bs \cdstatus''
}{
   \Gamma \vdash \ep\ \ep' : \tau' \bs (\cdstatus \lor \cdstatus' \lor \cdstatus'')
}
$}

\scalebox{\scalemath}{$
\inferrule*[left=\ttitle{Fix}]{
   \Gamma \vdash \ep : \tau \cdfunc \tau \bs \cdstatus'
}{
   \Gamma \vdash \mathbf{fix}\ \ep : \tau \bs (\cdstatus \lor \cdstatus')
}
$}
%
%
%
%
\end{mathpar}

\caption{\systemprefix{} Type System}
\label{fig:safety}

\end{figure}

Fig.~\ref{fig:safety} defines a type system for \systemprefix{}, where typing judgement
$\Gamma \vdash \ep : \tau \bs \cdstatus$ says that given typing environment
$\Gamma$, expression $\ep$ has type $\tau$ with \emph{emittability} $\cdstatus$.
Metavariable $\cdstatus$ ranges over booleans, where a true value ($\dirtyexp$) indicates the
expression may emit an operation whereas a false value ($\cleanexp$) indiates it must not.
Operator $\Gamma\{\VAR\}$ is defined as $\tau$ where $\VAR': \tau$ is the right most occurrence
in $\Gamma$ such that $\VAR = \VAR'$.

Types are either a key type $\mathsf{key}$, a payload type $\mathsf{int}$, a key list type $\mathsf{kl}$, a node type $\mathsf{node}$, a future type $\mathsf{future}[\tau]$ where $\tau$ is the type of the result represented by the future, or a function type $\tau \cdfunc \tau$. In the last form, emittability $\cdfunc$ is the effect
of the function, which we will explain next. When a function has type $\tau \dirtyfunc \tau$, we informally
say that the function is \emph{latently emittable}. 

\subsection{Phase Distinction}

The primary goal of the type system is to enforce phase distinction: whereas 
the evaluation of an expression at the frontend is unrestricted, the evaluation at the backend
cannot lead to an operation emission. We establish phase distinction through a type-and-effect
system. Observe that an operation emission could only happen at the backend if the functions that serve as the arguments of operations were latently emittable. As a result, the key to enforcing phase distinction is to make
sure these arguments are not latently emittable. Note that in our type system, both \ttitle{Map} and \ttitle{Fold}
ensure that their argument functions --- be it the mapping function or the folding function --- have function types that are not latently emittable. 

The majority of the rules in the type system address the emittability of an expression. Predictably, \ttitle{Add}, \ttitle{Map}, \ttitle{Fold} all have the emittability of $\dirtyexp$. Emittability is disjunctive, as shown in rules such as \ttitle{App}, \ttitle{Node}, and \ttitle{KS}. 
%
\inshort{To revisit Example~\ref{ex:emit}, the program will not type check, because expression 
$\mathtt{mapVal\ g\ [k]}$ would violate phase distinction.}

\inlong{
Let us revisit Example~\ref{ex:emit}, and the program will not type check. 
\begin{example}[Emittability]
%
In Example~\ref{ex:emit}, expression 
$\mathtt{mapVal\ g\ [k]}$ would violate phase distinction.
The following is the last step of a would-be derivation, where \textbf{let}-\textbf{in} is encoded:
\begin{mathpar}
%
\scalebox{\scalemath}{$
\inferrule*[right=\ttitle{Map}]{
   \inferrule*[right=\ttitle{Abs}]{
   }{
      \Gamma \vdash (\mathtt{\lambda \tuple{\_;\ payload;\ \_} : \mathsf{node} . ((\lambda \_ . \mathtt{payload})\ (\toserver\ \mathtt{mapVal}\ f\ [k]))}) : \mathsf{node} \cleanfunc \mathsf{node} \bs \cleanexp
   }
}{
   \Gamma \vdash
   \toserver\ \mathtt{mapVal}\ (\mathtt{\lambda \tuple{\_;\ payload;\ \_} : \mathsf{node} . ((\lambda \_ . \mathtt{payload})\ (\toserver\ \mathtt{mapVal}\ f\ [k]))})\ [\mathtt{k}]
   : \mathsf{future}[\mathsf{int}] \bs \dirtyexp
}
$}
\end{mathpar}
We however cannot find a derivation for the expression
$\mathtt{(\lambda \_ . \mathtt{x})\ (\toserver\ mapVal\ f\ [k])}$ with type $\mathsf{node}$ and emittability $\cleanexp$.

\end{example}
}

Beyond emittablity, our type system also performs sanity checks to ensure the types of operation arguments match that of the graph node representation. For example, the first argument of the $\fold$ operation has type $\mathsf{node} \cleanfunc \mathsf{node} \cleanfunc \mathsf{node}$, which says that the function is a binary function over nodes.




\subsection{Runtime Typing}

Our type system can be implemented either as a static system or a dynamic system. The former is useful with the ``closed world'' assumption: the entire processing operations are known before the program starts. The latter is more appropriate with the ``open world'' (see Sec.~\ref{sec:syntax}), where the forms of operations and their arguments may not be known until run time.
In this section, we define runtime typing in Fig.~\ref{fig:welltyped}. (For static typing, the runtime typing rules are useful for the proof.) 
Judgment $\Gamma \vdash_\mathtt{c} \config: \tau \bs \cdstatus$ says configuration $\config$ has type $\tau$ with emittability $\cdstatus$ under typing environment $\Gamma$. To support runtime typing of future values, we further extend the typing environment to include the form of $\fvs: \tau$, and further introduce future value typing as follows:
\begin{mathpar}
\inferrule*[left=\ttitle{Future}]{}{
   \Gamma \vdash \fvs : \Gamma\{\fvs\}
}
\end{mathpar}

Configuration typing is defined in \rttitle{Configuration}. It requires all keys in the graph to be distinct, and all future labels are distinct. 
A well-typed configuration must ensure (i) all values in the result store are well-typed; (ii) the backend graph runtime is well-typed; (iii) all operations in the top-level operation stream are well-typed; and finally (iv) the frontend expression is well typed. Backend graph runtime typing is defined by judgment $\Gamma \vdash_{\mathtt{b}} \data: \Gamma'$, which says that backend graph $\data$ is well-typed under typing judgment $\Gamma$, and all future values in the graph are also well-typed as in $\Gamma'$. Operation stream typing, stream unit typing, and station typing are defined by judgments $\vdash_\mathtt{o}$, $\vdash_\mathtt{u}$, and $\vdash_\mathtt{d}$, whose forms are similar to $\vdash_\mathtt{b}$.

The most interesting observation of runtime typing is that \emph{order matters}: we first type the result store in configuration typing, then type the \emph{last} station and go backward one by one until the first station is typed, and finally type the frontend expression. 
This ``relay race'' nature of typing reflects the potential chronological dependencies of operations: a later emitted operation may contain a future label of an operation emitted before it. 

At the first glance, the definition here may appear expensive to implement in a dynamic typing system. 
In practice, configuration typing defined in  Fig.~\ref{fig:welltyped} can be optimized with an efficient \emph{cached typing environment} design: whenever an operation is emitted, the dynamic typing system can immediately type it, and store the future value and its corresponding type in the cached type environment. 
As a result, dynamic typing does not need to resort to the typing of any component of the configuration other than the frontend at all. 

%

\begin{figure}[t]
\centering

\begin{mathpar}
\scalebox{\scalemath}{$
\inferrule*[left=\rttitle{Configuration}]{
   \data = \aseq{\tuple{\ND{\nkey; \ep'; \ep''}; \ostream'}}^m
\\ \nkey_1, \ldots, \nkey_m \ \textrm{are all distinct}
\\ \dom(\ostream), \dom(\rstream), \dom(\ostream'_1), \ldots, \dom(\ostream'_m) \ \textrm{are all distinct}
\\ \rstream = \aset{\fvs \xmapsto{\K} \val}
\\ \aset{\emptyseq \vdash \val : \tau \bs \cleanexp}
\\ \Gamma \concat \aseq{\fvs : \tau} \vdash_\mathtt{b} \data : \Gamma'
\\ \Gamma' \vdash_\mathtt{o} \ostream : \Gamma''
\\ \Gamma'' \vdash \ep : \tau' \bs \cdstatus
}{
   \Gamma \vdash_\mathtt{c} \tuple{\data; \ostream; \rstream; \ep} : \tau' \bs \cdstatus
}
$}

\scalebox{\scalemath}{$
\inferrule*[left=\rttitle{Graph}]{
   \Gamma \vdash_\mathtt{d} \tuple{\inode; \ostream} : \Gamma'
\\ \Gamma \concat \Gamma' \vdash_\mathtt{b} \data : \Gamma''
}{
   \Gamma \vdash_{\mathtt{b}} \data \concat [\tuple{\inode; \ostream}] : \Gamma''
}
$}

\scalebox{\scalemath}{$
\inferrule*[left=\rttitle{GraphEmpty}]{}{
   \Gamma \vdash_\mathtt{b} \emptyseq : \Gamma
}
$}

\scalebox{\scalemath}{$
\inferrule*[left=\rttitle{Station}]{
   \Gamma \vdash \ep : \mathsf{key} \bs \cleanexp
\\ \Gamma \vdash \ep' : \mathsf{int} \bs \cleanexp
\\ \Gamma \vdash \ep'' : \mathsf{kl} \bs \cleanexp
\\ \Gamma \vdash_\mathtt{o} \ostream : \Gamma'
}{
   \Gamma \vdash_{\mathtt{d}} \tuple{\ND{\ep; \ep'; \ep''}; \ostream} : \Gamma'
}
$}

\scalebox{\scalemath}{$
\inferrule*[left=\rttitle{Stream}]{
   \Gamma \vdash_\mathtt{u} \batchgroup_1 \concat \ldots \concat \batchgroup_u : \Gamma'
}{
   \Gamma \vdash_\mathtt{o} \aseq{\batchgroup} : \Gamma'
}
$}

\scalebox{\scalemath}{$
\inferrule*[left=\rttitle{StreamUnit}]{
   \Gamma \vdash \toserver\ \processor : \tau \bs \cleanexp
\\ \Gamma \concat [\fvs : \tau] \vdash_\mathtt{u} \batchgroup : \Gamma'
}{
   \Gamma \vdash_{\mathtt{u}} \fvs \mapsto \processor \cons \batchgroup : \Gamma'
}
$}

\scalebox{\scalemath}{$
\inferrule*[left=\rttitle{StreamUnitEmpty}]{}{
   \Gamma \vdash_\mathtt{u} \emptyseq : \Gamma
}
$}
\end{mathpar}

\caption{Runtime Typing}
\label{fig:welltyped}

\end{figure}




\section{Meta-Theory}
\label{sec:properties}


\begin{figure}[t]
\centering

\scalebox{\scalemath}{
$\begin{array}{@{}l@{\ \ }l@{\ \ }lr}

\eagerfE
& \defassign &
\tuple{\eagerdata; \emptyseq; \rstream; \cE}
& \textit{eager frontend context}
\\

\eagerbE
& \defassign &
\tuple{\bullet; \emptyseq; \rstream; \ep}
& \textit{eager backend context}
\\

\eagertE
& \defassign &
\eagerbE[\eagerdata \concat \bullet \concat \eagerdata]
& \textit{eager task context (redex load-free)}
\\

\eagerlE
& \defassign &
\eagerbE[\eagerdata \concat \tuple{\cE; \ostream} \cons \eagerdata]
& \textit{eager load context}
\\ & \mid &
\eagerbE[\eagerdata \concat \tuple{\eagerinode; [[\fvs \mapsto \fold\ \func\ \cE\ \K]]} \cons \eagerdata]
\\

\inshort{
\eagerdata
& \defassign &
\aseq{\tuple{\eagerinode; \emptyseq}}
& \textit{dry backend}
\\
}


\end{array}$
}

\caption{Evaluation Context for Eager Processing}
\label{fig:eagercontexts}

\end{figure}



\inlong{

\subsection{Baseline: Deterministic Eager Graph Processing}
\label{subsec:eager}

Our baseline of choice is deterministic eager processing, the conventional form of graph processing where operations are handled one at a time (see \S~\ref{subsec:ingraph}).
Instead of coming up with another reduction system to capture the essence of eager data processing, we now define this baseline by utilizing the same reduction system described in \S~\ref{sec:highlevelsemantics}. The key insight is that eager processing can be viewed as a restrictive form of \systemprefix{}, where restrictions are placed to turn a non-deterministic execution semantics into a deterministic one. Specifically: 

\begin{itemize}
    \item \textbf{R1}: A frontend reduction may only happen when both the top-level operation stream and in-graph operation stream are \emph{dry}, i.e., they do not contain any operation. This is aligned with our intuition of ``one at a time'': the frontend may only emit an operation when no other operations are being processed, and upon doing so, the frontend must wait until its result is placed into the result stream. 
    \item \textbf{R2}: The backend can only perform a task reduction if its redex station neighborhood is \emph{load free}, i.e., the first station in the neighorhood consists of a node value and a (singleton) streamlet whose operation consists of arguments that are all values. In other words, no load reduction can happen inside the station. 
\end{itemize}

In \S~\ref{sec:metatheory:properties}, we will rigorously establish that \systemprefix{} under these restrictions does have a deterministic semantics. 
Interestingly, these restrictions can be embodied by redefining the evaluation contexts, and there is \emph{no need to alter any reduction rules}. 
Rigorously, we represent deterministic eager processing as the $\eagerreduce$ reduction relation, defined as identical as the $\configreduce$ we introduced in Fig.~\ref{sec:highlevelsemantics}, except that the $\fE$, $\bE$, $\tE$, $\lE$ evaluation contexts are replaced with $\eagerfE$, $\eagerbE$, $\eagertE$, $\eagerlE$ evaluation contexts in Fig.~\ref{fig:eagercontexts}.
We use $\eagerreducestar$
to represent the reflexive and transitive
closure of $\eagerreduce$.
For the eager task context $\eagertE$, we further require any redex that fills its hole (more rigorously, any element in the domain of the context fulfillment function) is load-free. \textbf{R1} is enforced by the definition of $\eagerfE$, and \textbf{R2} is enforced by the definition of $\eagertE$. 

}



\inlong{
We now demonstrate the deterministic nature of eager processing through an example:

\begin{example}[Eager Processing]
\label{ex:eagerprocessing}
%
Consider the following configuration with a graph with two nodes
and two emit expression at the frontend.
\[
\tuple{[\tuple{\inode_1; \emptyseq}, \tuple{\inode_2; \emptyseq}]; \emptyseq; \emptyset; \ep}
\]
where $\K = \KS{[\nkey_1]}$,
$\K' = \KS{[\nkey_2]}$,
$\inode_i = \ND{\nkey_i; \ep_i; \ep_i'}$ for $i = 1, 2$,
$\ep = (\lambda \_ . \ep')\ (\toserver\ \map\ \func_1\ \K; \ep')$
and $\ep' = \toserver\ \map\ \func_2\ \K'$.
The reduction derivation which leads to the
two operations at the frontend being completed
is as follows:
\begin{center}
\scalebox{0.65}{
$\begin{array}{@{}r@{\ }l@{\,}l@{\hspace{-7em}}r@{\,}r@{\,}r@{}}
& \tuple{\data && \ostream;& \rstream;& \ep} \\
\cmidrule(r){2-6}
&\tuple{[\tuple{\inode_1; \emptyseq},& \tuple{\inode_2; \emptyseq}];& \emptyseq;& \emptyset;& \ep}
\\{}
(\plaintitle{Emit}) \configreduce
&\tuple{[\tuple{\inode_1; \emptyseq},& \tuple{\inode_2; \emptyseq}];& [[\fvs \mapsto \map\ \func_1\ \K]];& \emptyset;& (\lambda \_. \ep')\ \fvs}
\\{}
(\plaintitle{First}) \configreduce
&\tuple{[\tuple{\inode_1; [[\fvs \mapsto \map\ \func_1\ \K]]},& \tuple{\inode_2; \emptyseq}];& \emptyseq;& \emptyset;& (\lambda \_. \ep')\ \fvs}
\\{}
(\plaintitle{Map}) \configreduce
&\tuple{[\tuple{\ND{\nkey_1; \netwo{(\func_1\ \inode_1)}; \nethree{(\func_1\ \inode_1)}}; [[\fvs \mapsto \map\ \func_1\ \KS{\emptyseq}]]},& \tuple{\inode_2; \emptyseq}];& \emptyseq;& \emptyset&; (\lambda \_. \ep')\ \fvs}
\\{}
(\plaintitle{Load}) \configreduce^*
&\tuple{[\tuple{\ND{\nkey_1; \integer_1; \nethree{(\func_1\ \inode_1)}}; [[\fvs \mapsto \map\ \func_1\ \KS{\emptyseq}]]},& \tuple{\inode_2; \emptyseq}];& \emptyseq;& \emptyset&; (\lambda \_. \ep')\ \fvs}
\\{}
(\plaintitle{Load}) \configreduce^*
&\tuple{[\tuple{\ND{\nkey_1; \integer_1; \K_1}; [[\fvs \mapsto \map\ \func_1\ \KS{\emptyseq}]]},& \tuple{\inode_2; \emptyseq}];& \emptyseq;& \emptyset&; (\lambda \_. \ep')\ \fvs}
\\{}
(\plaintitle{Complete}) \configreduce
&\tuple{[\tuple{\ND{\nkey_1; \integer_1; \K_1}; \emptyseq},& \tuple{\inode_2; \emptyseq}];& \emptyseq;& [\fvs \xmapsto{\KS{\emptyseq}} 0]&; (\lambda \_. \ep')\ \fvs}
\\{}
(\plaintitle{Claim}) \configreduce
&\tuple{[\tuple{\ND{\nkey_1; \integer_1; \K_1}; \emptyseq},& \tuple{\inode_2; \emptyseq}];& \emptyseq;& [\fvs \xmapsto{\KS{\emptyseq}} 0]&; (\lambda \_. \ep')\ 0}
\\{}
(\plaintitle{Beta}) \configreduce
&\tuple{[\tuple{\ND{\nkey_1; \integer_1; \K_1}; \emptyseq},& \tuple{\inode_2; \emptyseq}];& \emptyseq;& [\fvs \xmapsto{\KS{\emptyseq}} 0]&; \ep'}
\\{}
(\plaintitle{Emit}) \configreduce
&\tuple{[\tuple{\ND{\nkey_1; \integer_1; \K_1}; \emptyseq},& \tuple{\inode_2; \emptyseq}];& [[\fvs' \mapsto \map\ \func_2\ \K']];& [\fvs \xmapsto{\KS{\emptyseq}} 0]&; \fvs'}
\\{}
(\plaintitle{First}) \configreduce
&\tuple{[\tuple{\ND{\nkey_1; \integer_1; \K_1}; [[\fvs' \mapsto \map\ \func_2\ \K']]},& \tuple{\inode_2; \emptyseq}];& \emptyseq;& [\fvs \xmapsto{\KS{\emptyseq}} 0]&; \fvs'}
\\{}
(\plaintitle{Prop}) \configreduce
&\tuple{[\tuple{\ND{\nkey_1; \integer_1; \K_1}; \emptyseq},& \tuple{\inode_2; [[\fvs' \mapsto \map\ \func_2\ \K']]}];& \emptyseq;& [\fvs \xmapsto{\KS{\emptyseq}} 0]&; \fvs'}
\\{}
(\plaintitle{Map}) \configreduce
&\tuple{[\tuple{\ND{\nkey_1; \integer_1; \K_1}; \emptyseq},& \tuple{\ND{\nkey_2; \netwo{(\func_2\ \inode_2)}; \nethree{(\func_2\ \inode_2)}}; [[\fvs' \mapsto \map\ \func_2\ \KS{\emptyseq}]]}];& \emptyseq;& [\fvs \xmapsto{\KS{\emptyseq}} 0]&; \fvs'}
\\{}
(\plaintitle{Load}) \configreduce^*
&\tuple{[\tuple{\ND{\nkey_1; \integer_1; \K_1}; \emptyseq},& \tuple{\ND{\nkey_2; \integer_2; \nethree{(\func_2\ \inode_2)}}; [[\fvs' \mapsto \map\ \func_2\ \KS{\emptyseq}]]}];& \emptyseq;& [\fvs \xmapsto{\KS{\emptyseq}} 0]&; \fvs'}
\\{}
(\plaintitle{Load}) \configreduce^*
&\tuple{[\tuple{\ND{\nkey_1; \integer_1; \K_1}; \emptyseq},& \tuple{\ND{\nkey_2; \integer_2; \K_2}; [[\fvs' \mapsto \map\ \func_2\ \KS{\emptyseq}]]}];& \emptyseq;& [\fvs \xmapsto{\KS{\emptyseq}} 0]&; \fvs'}
\\{}
(\plaintitle{Complete}) \configreduce
&\tuple{[\tuple{\ND{\nkey_1; \integer_1; \K_1}; \emptyseq},& \tuple{\ND{\nkey_2; \integer_2; \K_2}; \emptyseq}];& \emptyseq;& [\fvs' \xmapsto{\KS{\emptyseq}} 0, \fvs \xmapsto{\KS{\emptyseq}} 0]&; \fvs'}
\\{}
\end{array}$}
\end{center}
\end{example}
}


\inlong{
\subsection{Properties}
\label{sec:metatheory:properties}
}



\inlong{
For the rest of this section,
we say a configuration $\config$ is well-typed
iff $\emptyseq \vdash_\mathtt{c} \config : \tau \bs \cdstatus$
for some $\tau$ and $\cdstatus$.
}
\inshort{
We now state important properties for \systemprefix{}.
%
We say a backend is \emph{dry} if it follows the form $\aseq{\tuple{\eagerinode; \emptyseq}}$, written as $\eagerdata$.
We say a configuration $\config$ is well-typed
iff $\emptyseq \vdash_\mathtt{c} \config : \tau \bs \cdstatus$
for some $\tau$ and $\cdstatus$.
We define function $\init(\ep, \data)$ to compute
the initial configuration of frontend program $\ep$ given initial backend $\data$.
Specifically,
$\init(\ep, \data) \df \tuple{\data; \emptyseq; \emptyset; \ep}$.
The function $\init(\ep, \data)$ is only defined if
$\tuple{\data; \emptyseq; \emptyset; \ep}$ is well-typed.
According to this definition, a program does
not have to start with an empty graph;
it can start with a graph represented by $\data$.
}

\inlong{
Before we continue,
let us define function $\init(\ep, \data)$ to compute
the initial configuration of frontend program $\ep$ given initial graph $\data$.
Specifically,
$\init(\ep, \data) \df \tuple{\data; \emptyseq; \emptyset; \ep }$.
The function $\init(\ep, \data)$ is only defined if
$\tuple{\data; \emptyseq; \emptyset; \ep}$ is well-typed.
According to this definition, a program does
not have to start with an empty graph;
it can start with a graph represented by $\eagerdata$.
}

\paragraph{1) Type Soundness}

\begin{lemma}[Type Preservation]
\label{lem:preservation}
If
$\Gamma \vdash_\mathtt{c} \config : \tau \bs \cdstatus$,
and $\config \configreduce \config'$
then
$\Gamma \vdash_\mathtt{c} \config' : \tau \bs \cdstatus'$
where $\cdstatus = \cleanexp$ implies $\cdstatus' = \cleanexp$.
\end{lemma}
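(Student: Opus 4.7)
The plan is to proceed by case analysis on which rule justifies the single step $\config \configreduce \config'$, reassembling the \rttitle{Configuration} derivation component by component and checking that the effect component $\cdstatus$ does not escalate from $\cleanexp$ to $\dirtyexp$. Before diving into cases I would prove three auxiliary lemmas: (i) a substitution lemma stating that if $\Gamma \concat [x:\tau'] \vdash \ep : \tau \bs \cdstatus$ and $\emptyseq \vdash \val : \tau' \bs \cleanexp$ then $\Gamma \vdash \ep[\val/x] : \tau \bs \cdstatus$; (ii) a decomposition/recomposition lemma for each evaluation context form $\cE$, $\fE$, $\bE$, $\tE$, $\lE$, saying that $\Gamma \vdash_\mathtt{c} \fE[\ep] : \tau \bs \cdstatus$ factors into a derivation for $\ep$ at some type $\tau_0 \bs \cdstatus_0$ and a hole-derivation whose types/effects compose by disjunction; and (iii) a weakening lemma permitting extension of $\Gamma$ on the right with fresh future labels. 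These three cover the mechanical bulk of the cases.

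For the frontend reductions (\plaintitle{Beta}, \plaintitle{Node}, \plaintitle{KSA}, \plaintitle{KSS}) the result follows from the substitution/decomposition lemmas together with the disjunctive combination of $\cdstatus$ in rules like \ttitle{App} and \ttitle{Node}. For \plaintitle{Emit}, the replacement of $\toserver\ \strictprocessor$ (effect $\dirtyexp$) by $\fvs$ (effect $\cleanexp$ under \ttitle{Future}) only lowers the frontend effect, and appending $[\fvs \mapsto \strictprocessor]$ to the stream types under \rttitle{StreamUnit} with the extended environment. For \plaintitle{Claim} the residual value in $\rstream$ is already typed in $\Gamma$ by \rttitle{Configuration}. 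The to-graph rules (\plaintitle{Empty}, \plaintitle{First}, \plaintitle{Add}) are bookkeeping: in particular \plaintitle{Add} generates a fresh $\nkey : \mathsf{key}$, matching the $\mathsf{future}[\mathsf{key}]$ decreed by \ttitle{Add}.

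For the task reductions (\plaintitle{Map}, \plaintitle{Fold}, \plaintitle{Prop}, \plaintitle{Complete}, \plaintitle{Last}), the argument functions are forced by \ttitle{Map}/\ttitle{Fold} to have \emph{clean} arrow types $\mathsf{node}\cleanfunc\mathsf{node}$ (resp.\ binary), so the updated node and updated base-case are typeable at \textsf{node} with no residual effect, and \rttitle{Station}/\rttitle{StreamUnit} reassemble the graph; this is exactly where the type system pays off for preservation. \plaintitle{Prop} relies on the observation that the typing environment output by \rttitle{StreamUnit} does not depend on the station hosting the streamlet, so moving a unit across adjacent stations preserves the relayed environment. \plaintitle{Complete}/\plaintitle{Last} emit into $\rstream$ a value ($0$ or the current base-case) whose type matches that declared at the label's introduction. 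The \plaintitle{Load} case is handled by structural induction on the derivation of $\configreduce$: the premise is another one-step reduction over a trivialized configuration, to which the inductive hypothesis applies and then the decomposition lemma for $\lE$ lifts the result back.

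The main obstacle will be the \plaintitle{Opt} case, which reduces to a sub-case-analysis on all rules of $\leadsto$ in Fig.~\ref{fig:fusion}. \htitle{Batch}/\htitle{Unbatch} are immediate from associativity of \rttitle{StreamUnit}. \htitle{ReorderD}, \htitle{ReorderRR}, and \htitle{ReorderRW} require the non-obvious claim that neither of the reordered operations mentions the other's future label, so that the \rttitle{StreamUnit} chain still types after the swap; this follows because futures introduced in the stream are only visible to \emph{later} frontend-emitted expressions, and operation arguments are restricted to clean types. For \htitle{ReorderRW} one additionally needs a small typing lemma for the composite $\func_2 \dcomp{\aset{\nkey}_1} \func_1$, showing it inhabits $\mathsf{node}\cleanfunc\mathsf{node}\cleanfunc\mathsf{node}$ whenever its two operands inhabit the expected clean arrow types. \htitle{FuseM}, \htitle{FuseMId}, and \htitle{Reuse} similarly require verifying both that the composed/residual operations remain well-typed and that the ``consolation'' entries added to $\rstream$ (e.g.\ $\fvs_2 \xmapsto{\KS{\emptyseq}} 0$) meet the type declared at the corresponding label, which is exactly what \ttitle{Map}/\ttitle{Fold} predict. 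Together these cases establish preservation; the effect clause follows uniformly because no reduction ever produces a redex whose typing raises an effect from $\cleanexp$ to $\dirtyexp$.
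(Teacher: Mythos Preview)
Your approach is the same as the paper's --- case analysis on the rule instantiating $\configreduce$ --- but considerably more detailed: the paper only spells out the \plaintitle{Emit} case and summarizes everything else in a sentence, whereas you set up the standard substitution, decomposition, and weakening lemmas and walk every case including the \plaintitle{Opt} sub-rules.

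The one place to tighten is the reordering sub-cases of \plaintitle{Opt}. You correctly note that after a swap the \rttitle{StreamUnit} chain would type $\processor_2$ without $\fvs_1$ in scope, so you need $\processor_2$ not to mention $\fvs_1$; but the justification you give --- that ``operation arguments are restricted to clean types'' --- does not establish this. The clean arrow types required by \ttitle{Map} and \ttitle{Fold} forbid \emph{emission} in the function body, not \emph{claiming}: nothing in the type system prevents a mapping or folding function from capturing $\fvs_1$ at the frontend and containing $\fromserver\ \fvs_1$, and such a body still has effect $\cleanexp$. The paper's sketch does not address this point either; what you actually need is a permutation lemma for the environment chain threaded through \rttitle{StreamUnit}, or a stronger invariant that every label free in an in-stream operation already appears in the ambient $\Gamma$ handed down from \rttitle{Configuration}.
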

\inlong{
\begin{proof}
Case analysis on $\configreduce$.
The most interesting rule is \plaintitle{Emit}:

\textbf{Case \plaintitle{Emit}}
We have
\[
\config = \tuple{\data; \ostream; \rstream; \cE[\toserver \strictprocessor]}
\]
and
\[
\config' = \tuple{\data; \ostream \concat [[\fvs \mapsto \strictprocessor]]; \rstream; \cE[\fvs]}
\]
where $\fresh\ \fvs$.
Case analysis on $\strictprocessor$ will show that whatever type
\ttitle{Map},
\ttitle{Fold},
or
\ttitle{Add},
gives $\toserver\ \strictprocessor$ in $\config$, ($\mathsf{future}[\tau]$ for some $\tau$),
the same type is given to $\fvs$
in \rttitle{StreamUnit}
and accessed by $\rttitle{Future}$.

The rest of the rules follow from
that frontend reductions can change the $\cdstatus$ from $\dirtyexp$ to $\cleanexp$, but not the other way around,
and that task reductions which realize operations ---
\plaintitle{Map} and \plaintitle{Fold} ---
do not make copies of any operations, cannot make new operations (phase distinction), and if a result is created then the operation is removed.

\end{proof}
}



\begin{lemma}[Progress]
\label{lem:progress}
For any $\config$ which is well-typed,
then either
$\config = \tuple{\eagerdata; \emptyseq; \rstream; \val}$
for some $\eagerdata$ and $\rstream$
or there exists some $\config' \neq \config$
and
$\config \configreduce \config'$.
\end{lemma}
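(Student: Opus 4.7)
The plan is to proceed by case analysis on the structure of $\config = \tuple{\data; \ostream; \rstream; \ep}$, showing that in every case where $\config$ is not of the form $\tuple{\eagerdata; \emptyseq; \rstream; \val}$, some reduction rule fires. I would organize the cases by scanning outward from the ``most backend'' components toward the frontend, since the final form constrains each component simultaneously: when everything behind the frontend is quiescent, the frontend's Claim is guaranteed to succeed by well-typedness.

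First, if $\ostream \neq \emptyseq$, then since \plaintitle{Emit} is the only producer of top-level stream units and produces singletons (and the $\leadsto$ rules apply only under $\tE$, never to the top-level stream), the head unit has the shape $[\fvs \mapsto \processor]$. I would dispatch on $\processor$ and $\data$: \plaintitle{Add} applies when $\processor = \addn\ \integer$; otherwise \plaintitle{Empty} applies when $\data = \emptyseq$, and \plaintitle{First} applies when $\data \neq \emptyseq$.

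Second, if $\ostream = \emptyseq$ but $\data \neq \eagerdata$, then some station has either a non-value node component or a non-empty streamlet. For a non-value node component, \plaintitle{Load} fires provided the embedded inner redex is not a stuck Claim; I would lean on well-typedness to show that any future occurring in that redex is typed in the environment that types $\data$, and therefore is either already in $\rstream$ or is the label of an operation still residing in the backend that we can instead reduce. For a non-empty streamlet, I would pick the head stream unit $\batchgroup$: if $|\batchgroup| > 1$, \plaintitle{Opt} with \htitle{Unbatch} fires; otherwise $\batchgroup = [\fvs \mapsto \processor]$ and I case-split on the relationship between the station's key $\nkey$ and $\target(\processor)$, matching \plaintitle{Map}/\plaintitle{Fold} (when $\nkey \in \target(\processor)$, using \plaintitle{Load} on a non-value fold accumulator when needed), \plaintitle{Prop} (when $\nkey \notin \target(\processor)$ and a successor station exists), \plaintitle{Last} (at the final station), and \plaintitle{Complete} (when $\target(\processor) = \emptyset$ and a successor exists, again using \plaintitle{Load} first if the fold accumulator is not yet a value, so that $\pmb{\circlearrowright}$ is defined). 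The main obstacle here is verifying that \plaintitle{Fold}/\plaintitle{Complete} encounter the value-shape their rules implicitly demand; I expect to use Lemma~\ref{lem:preservation} together with the runtime typing rules of Fig.~\ref{fig:welltyped} to make these shape invariants precise.

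Third, if $\ostream = \emptyseq$ and $\data = \eagerdata$, I would appeal to a standard canonical-forms and decomposition argument for the frontend expression. If $\ep$ is not a value, decompose $\ep = \cE[\ep']$ where $\ep'$ is a redex; the cases \plaintitle{Beta}, \plaintitle{Node}, \plaintitle{KSA}, \plaintitle{KSS}, and \plaintitle{Emit} fire directly from canonical forms on the corresponding types. The only delicate subcase is $\ep' = \fromserver\ \val$, where canonical forms at type $\mathsf{future}[\tau]$ give $\val = \fvs$, and \plaintitle{Claim} requires $\fvs \in \dom(\rstream)$. Here I would invoke the typing-order structure of \rttitle{Configuration}, \rttitle{Graph}, and \rttitle{Stream}: since $\ostream$ is empty and every streamlet inside $\eagerdata$ is empty, no station nor stream unit contributes a binding for $\fvs$ to the typing environment that types $\ep$, so by well-typedness $\fvs$ must have been introduced from $\rstream$. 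Hence $\rstream(\fvs)$ is defined and \plaintitle{Claim} fires. If instead $\ep$ is a value, we have reached the final form and the disjunct of the lemma holds.
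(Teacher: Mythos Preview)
Your proposal is correct and considerably more detailed than the paper's own proof, which is a two-sentence sketch. The paper organizes the case split frontend-first: if the frontend expression is not a value, either a frontend reduction fires or the frontend is blocked on a \plaintitle{Claim} whose result is not yet in $\rstream$, in which case some to-graph, load, or task reduction is available; if the frontend is a value, the configuration is either final or the backend is not dry and a backend reduction fires. You instead proceed backend-first, dispatching on $\ostream$, then on whether $\data$ is dry, and only then on the frontend. Both decompositions are valid; yours has the pleasant property that by the time you reach the frontend, the backend and top-level stream are already quiescent, so the argument that \plaintitle{Claim} succeeds follows cleanly from the typing order in \rttitle{Configuration} rather than from a separate ``if the result is not there yet, something else in the backend can step'' sub-argument that the paper leaves implicit.

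One small wrinkle worth making explicit: in your first case you justify that the head unit of $\ostream$ is a singleton by observing that \plaintitle{Emit} is its only producer. That is a reachability invariant, not a consequence of well-typedness as defined in Fig.~\ref{fig:welltyped}; an arbitrary well-typed configuration could in principle have a multi-element (or empty) head unit in the top-level stream, for which none of \plaintitle{First}, \plaintitle{Empty}, \plaintitle{Add} match. The paper's sketch glosses over the same point, and the mechanization presumably carries this invariant alongside typing, so this is not a defect peculiar to your route---just something to state as a side condition if you write the proof out in full.
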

\inlong{
\begin{proof}
If the frontend is not a value then either (1) a frontend reduction can occur; or (2) the frontend is waiting on a result from the server so there is a to-graph, load, or task reduction that can occur.
If the frontend is a value then the graph is not dry, so there is a to-graph, load, or task reduction that can occur.
\end{proof}
}

%

\begin{theorem}[Soundness]
For any program $\ep$ and backend
$\data$,
if
$\init(\ep, \data) = \config$
then either there exists $\config'$ such that $\config \configreduce^* \config'$
where
$\config' = \tuple{\eagerdata; \emptyseq; \rstream; \val}$
or $\config$ diverges.
\end{theorem}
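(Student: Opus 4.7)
The plan is to invoke the standard Wright--Felleisen progress-and-preservation argument, using the two lemmas immediately preceding the statement. First, I would observe that by the definition of $\init$, the configuration $\config = \tuple{\data; \emptyseq; \emptyset; \ep}$ is well-typed, i.e., $\emptyseq \vdash_\mathtt{c} \config : \tau \bs \cdstatus$ for some $\tau$ and $\cdstatus$.

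Second, I would consider the set of all maximal reduction sequences starting from $\config$ under $\configreduce$. By repeated application of Lemma~\ref{lem:preservation} (Type Preservation), every configuration $\config''$ reachable via $\config \configreduce^* \config''$ is again well-typed (with the same $\tau$, and an emittability that can only collapse from $\dirtyexp$ to $\cleanexp$). Then I would split on whether the chosen maximal sequence is finite or infinite. If it is infinite, then by definition $\config$ diverges, and we are done with that disjunct of the conclusion. If it is finite, let $\config'$ denote its last element; by maximality there is no $\config''$ with $\config' \configreduce \config''$. Since $\config'$ is well-typed, Lemma~\ref{lem:progress} (Progress) applies, and because $\config'$ cannot step, it must fall into the terminal shape $\tuple{\eagerdata; \emptyseq; \rstream; \val}$, which gives exactly the first disjunct of the conclusion.

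The main obstacle is really just a foundational/meta-theoretic subtlety rather than a technical one: because $\configreduce$ is non-deterministic, one has to be careful about what ``diverges'' means. I would take the conventional reading that $\config$ diverges iff there exists some infinite reduction sequence from $\config$, so the dichotomy ``some finite maximal sequence ends in a terminal configuration, or an infinite sequence exists'' is exactly what the theorem asserts. In a constructive/mechanized setting (the Coq development) this amounts to a straightforward case split at each step using Progress, and no classical choice is required because we only need the \emph{existence} of \emph{some} terminating or diverging execution, not a statement about all of them. Everything else --- compatibility of contexts, the flow of future values through $\rttitle{Configuration}$, and the ``relay race'' dependency on well-typed future labels --- has already been absorbed into Preservation and Progress, so the remaining argument is purely structural.
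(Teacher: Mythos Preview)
Your proposal is correct and follows essentially the same approach as the paper: the paper's proof sketch reads ``Induction on $\configreduce^*$ and case analysis on the last step, using Lemma~\ref{lem:preservation} and Lemma~\ref{lem:progress},'' which is exactly the Wright--Felleisen progress-and-preservation argument you spell out. Your added remarks about what ``diverges'' means under non-determinism are a useful clarification but do not deviate from the paper's line of argument.
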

\inlong{
\begin{proof}
Induction on $\configreduce^*$ and case analysis on the last step, using Lemma~\ref{lem:preservation} and Lemma~\ref{progress}.
\end{proof}
}

\begin{corollary}[Phase Distinction]
For any well-typed configuration $\config$,
if $\config \configreduce \config'$, then either (1) the reduction is an instance of \plaintitle{Emit}, or (2) the reduction is not an instance of \plaintitle{Emit}, and its derivation does not contain an instance of \plaintitle{Emit}.
\end{corollary}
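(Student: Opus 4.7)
The plan is to prove the corollary by induction on the derivation of $\config \configreduce \config'$. All reduction rules except \plaintitle{Load} are ``flat'' --- their premises contain no nested $\configreduce$ --- so they trivially satisfy the claim: if they are not \plaintitle{Emit} themselves, their derivations contain no \plaintitle{Emit}. The only nontrivial case is when the last rule used is \plaintitle{Load}, whose premise is itself a $\configreduce$ step on a load sub-configuration of the form $\tuple{\emptyseq; \emptyseq; \rstream; \ep}$.

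For that case, I would establish a key lemma: in any well-typed configuration, the expression $\ep$ at the hole of the load context $\lE$ has clean emittability $\cleanexp$. I would prove this by inspecting the two shapes of $\lE$ against the runtime typing rules. For the node-component hole $\tuple{\cE; \ostream}$, rule \rttitle{Station} forces each triple component to be typed with $\cleanexp$; since the expression context $\cE$ does not introduce any binders and every typing rule propagates emittability disjunctively, the subexpression at the hole must itself be $\cleanexp$. For the $\fold$ base-case hole, the expression at that position arises from reducing an emitted operation $\fold\ \func\ \val\ \K$ whose base case started as a value, evolving under \plaintitle{Fold} into applications $\func\ \inode\ \ep_1$; since $\func$ is a lambda-value typed by \ttitle{Abs} to have a $\cleanfunc$ arrow, and the subsequent reductions preserve cleanliness by Lemma~\ref{lem:preservation}, the current base case is clean.

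With this lemma, I next show that a configuration whose frontend $\ep$ has clean emittability cannot reduce via \plaintitle{Emit}. Indeed, \plaintitle{Emit} requires $\ep = \cE[\toserver\ \strictprocessor]$; but $\toserver\ \strictprocessor$ is typed $\dirtyexp$ by \ttitle{Add}, \ttitle{Map}, or \ttitle{Fold}, and a straightforward induction on $\cE$ (using the disjunctive form of emittability in every typing rule covering an $\cE$-production) shows that any such decomposition forces $\ep$ to be $\dirtyexp$, contradicting cleanliness. Thus the \plaintitle{Load} premise is not itself an instance of \plaintitle{Emit}, and its sub-configuration remains well-typed (by Lemma~\ref{lem:preservation}) and clean, so the inductive hypothesis applies to the smaller derivation.

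The main obstacle is the key lemma on cleanliness of load expressions, specifically the $\fold$ base-case case, because the base case is not static but evolves during reduction. The argument requires establishing a runtime invariant --- that every $\fold$ base case appearing inside a streamlet is always clean --- and showing that this invariant is preserved by every reduction rule. This is the crux where the phase-distinction discipline baked into \ttitle{Map} and \ttitle{Fold} (forcing their argument functions to carry $\cleanfunc$ arrows) pays off, because it is exactly what rules out the counterexample of Example~\ref{ex:emit} from arising at run time.
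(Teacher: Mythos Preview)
Your plan is correct and is the natural elaboration of the paper's informal justification: the paper gives no detailed proof for this corollary, only the remark that \plaintitle{Load} is the sole rule with a nested $\configreduce$ premise and that well-typedness precludes \plaintitle{Emit} there. Two small refinements: the $\cleanfunc$ constraint on the folding function is imposed by \ttitle{Fold} (not \ttitle{Abs}, which just records whatever latent effect the body has); and since the premise of \plaintitle{Load} is a step on $\tuple{\emptyseq;\emptyseq;\rstream;\ep}$ with empty backend and empty top-level stream, only flat frontend rules can derive it---no nested \plaintitle{Load} is possible---so your induction on the derivation collapses to a single case analysis ruling out \plaintitle{Emit} via cleanliness.
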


Recall that \plaintitle{Emit} is defined with the frontend context $\fE$. Case (1) says that operation emission may happen at the frontend.
On the backend, recall that the only reduction that may contain a subderivation
of \plaintitle{Emit} would be an instance of \plaintitle{Load}.
Case (2) says that such a derivation is not possible.
In other words, operation emission cannot happen on the backend.
The importance of phase distinction is that it contributes to
result determinism, which we elaborate next.

\paragraph{2) Result Determinism}


%




With generality as a design goal, \systemprefix{} is guided with a design rationale that we should place as few restrictions on the evaluation order as possible, leading to a semantics inherent with non-deterministic executions. One example is the non-deterministic redex selection for propagation which we described in \S~\ref{sec:highlevelsemantics}. More generally, a simple case analysis of evaluation contexts in Fig.~\ref{fig:contexts} should make clear that \systemprefix{} is endowed with non-deterministic redex selection between:
\begin{itemize}
    \item \emph{a frontend reduction and a backend reduction}: given a configuration, either $\fE$ or $\bE$ can be used for selecting the redex of the next step of reduction;    
    \item \emph{task reductions over different stations}: according to $\tE$, the redex can be an arbitrary station in the runtime graph, where the task is an instance of \plaintitle{Map}, \plaintitle{Fold}, \plaintitle{Complete}, and \plaintitle{Last}, or two adjacent stations, where the task is an instance of \plaintitle{Prop};
    \item \emph{load reductions inside different stations}: according to $\lE$, the redex can be any load expression inside an arbitrary station;
    \item \emph{a task reduction and a load reduction}: either $\tE$ and $\lE$ can be used for redex selection. 
\end{itemize}

Non-deterministic executions are good news for generality and adaptability (see \S~\ref{subsec:ingraph}),
but they are a challenge to correctness: do different reduction sequences from the same configuration produce the same result?



\inlong{

\begin{definition}[Configuration Equivalence]
We define equivalence relation $\config \sim \config'$ by the rules in Fig.~\ref{fig:configequiv},
where helper relation
$\data \stackrel{b}{\sim} \data'$
says that the payload and edge list of all stations in the backends are term equivalent,
and that the intention of the delayed operations in
the stations are equivalent.
The most interesting parts of the $\stackrel{b}{\sim}$ relation is the
\emph{write effect} and \emph{read result} equivalence check.
These are defined with helper functions $\fundef{writes}$ and $\fundef{reads}$, respectively.
To check the write effect, the mapping operations for every node are composed, and checked against the two operation streams for term equivalence.
To check the read result, every fold operation
is first gathered together with the functions of any relevant mapping operations composed to the fold function. Then the, $\stackrel{\cup}{\equiv}$ relation checks that, for every fold operation, the fold functions are term equivalent (using the helper function $\fundef{expr}$), regardless of the node values that are eventually applied as arguments.
The $\fundef{expr}$ function also performs any inter-fold substitution, to allow for more general term equality in the $\stackrel{\cup}{\equiv}$ relation.
\end{definition}
\begin{figure}[t]
\centering

\begin{mathpar}
%
\scalebox{\scalemath}{$
\inferrule{
   \data
   \stackrel{b}{\sim}
   \data'
}{
   \tuple{\data; \ostream; \rstream; \ep}
   \sim
   \tuple{\data'; \ostream; \rstream; \ep}
}
$}

\scalebox{\scalemath}{$
\inferrule{}{
   \emptyseq \stackrel{b}{\sim} \emptyseq
}
$}

\scalebox{\scalemath}{$
\inferrule{
   \data \stackrel{b}{\sim} \data'
\\ \ep_1 \feq \ep_1'
\\ \ep_2 \feq \ep_2'
\\ \forall \nkey' . \fundef{writes}(\fundef{fl}(\ostream), \nkey') \feq \fundef{writes}(\fundef{fl}(\ostream'), \nkey')
\\ \fundef{reads}(\fundef{fl}(\ostream)) \stackrel{\cup}{\feq} \fundef{reads}(\fundef{fl}(\ostream'))
}{
   \tuple{\tuple{\nkey; \ep_1; \ep_2}; \ostream} \cons \data \stackrel{b}{\sim} \tuple{\tuple{\nkey; \ep_1'; \ep_2'}; \ostream'} \cons \data'
}
$}

\scalebox{\scalemath}{$
\inferrule{}{
   \fundef{writes}(\emptyseq, \nkey) \df \mathtt{Id}
}
$}

\scalebox{\scalemath}{$
\inferrule{
   \nkey \in \aset{\nkey}
}{
   \fundef{writes}(\ostream \concat [[\fvs \mapsto \map\ \func\ \KS{\aseq{\nkey}}]], \nkey) \df \func \circ \fundef{writes}(\ostream, \nkey)
}
$}

\scalebox{\scalemath}{$
\inferrule{
   \nkey \notin \target(\processor) \lor \processor = \fold\ \func\ \ep\ \K
}{
   \fundef{writes}(\ostream \concat [[\fvs \mapsto \processor]], \nkey) \df \fundef{writes}(\ostream, \nkey)
}
$}

\scalebox{\scalemath}{$
\inferrule{}{
   \fundef{reads}(\emptyseq) \df \{ \}
}
$}

%
\scalebox{\scalemath}{$
\inferrule{
   \func' = (((\func \dcomp{\{ \nkey_1 \}} \fundef{writes}(\ostream, \nkey_1)) \dcomp{\{ \nkey_2 \}} \fundef{writes}(\ostream, \nkey_2)) \ldots) \dcomp{\{ \nkey_m \}} \fundef{writes}(\ostream, \nkey_m)
}{
   \fundef{reads}(\ostream \concat [[\fvs \mapsto \fold\ \func\ \ep\ \KS{\aseq{\nkey}^m}]]) \df \{ \fvs \mapsto \fold\ \func'\ \ep\ \K \} \cup \fundef{reads}(\ostream, \nkey)
}
$}

\scalebox{\scalemath}{$
\inferrule{}{
   \fundef{reads}(\ostream \concat [[\fvs \mapsto \map\ \func\ \K]]) \df \fundef{reads}(\ostream)
}
$}

\scalebox{\scalemath}{$
\inferrule{
   \forall i \in [1, \ldots, m] , j \in [1, \ldots, m'] .
   \fvs_i = \fvs'_j
   \implies
   \fundef{expr}(\processor_i)
   \feq
   \fundef{expr}(\processor'_j)
}{
   \aset{\fvs \mapsto \processor}^{m}
   \stackrel{\cup}{\feq}
   \aset{\fvs' \mapsto \processor'}^{m'}
}
$}

\scalebox{\scalemath}{$
\inferrule{}{
   \fundef{expr}(\fold\ \func\ \ep\ \KS{\aseq{\nkey}^m}, \aset{\fvs \mapsto \processor}) \df (\func\ \ND{\nkey_1; \_; \_}\ (\func\ \ND{\nkey_2; \_; \_}\ (\ldots (\func\ \ND{\nkey_m; \_; \_}\ \ep))))[\aset{\fundef{expr}(\processor, \aset{\fvs \mapsto \processor}) / \fromserver\ \fvs}]
}
$}

\scalebox{\scalemath}{$
\inferrule{}{
   \fundef{fl}(\aseq{\batchgroup}^m) \df \batchgroup_1 \concat \batchgroup_2 \concat \ldots \concat \batchgroup_m
}
$}
\end{mathpar}

\caption{
Configuration Equivalence: $\config \sim \config'$}
\label{fig:configequiv}

\end{figure}

\begin{lemma}[Composition Reorder]
\label{lem:compreorder}
Given functions $\func$, $\func'$, and $\func'$,
and keys $\nkey$ and $\nkey'$,
where $\nkey \neq \nkey'$,
then
$
(\func \dcomp{\{\nkey\}} \func') \dcomp{\{\nkey'\}} \func''
\feq
(\func \dcomp{\{\nkey'\}} \func'') \dcomp{\{\nkey\}} \func'
$
\end{lemma}


\begin{lemma}[Fold Target Decompose]
\label{lem:foldtargetdecompose}
Given functions $\func$ and $\func'$,
expression $\ep$,
and target key lists
$\K = \KS{\aseq{\nkey}^m}$
and
$\K' \KS{\aseq{\nkey'}^{m'}}$
then the operation
$\fvs \mapsto \fold\ (\func \dcomp{\aset{\nkey'}} \func')\ \ep\ \K$
is equivalent to
$\fvs \mapsto \fold\ (((\func \dcomp{\{\nkey_1\}} (\mathbf{if}\ \nkey_1 \in \aset{\nkey'} \ \mathbf{then}\ \func' \ \mathbf{else}\ \mathtt{Id})) \ldots) \dcomp{\nkey_m} (\mathbf{if}\ \nkey_m \in \aset{\nkey'} \ \mathbf{then}\ \func' \ \mathbf{else}\ \mathtt{Id}))\ \ep\ \K$.
\end{lemma}

\begin{lemma}[Configuration Equivalence Over TLO Reduction]
\label{lem:tlosim}
Given well-typed $\config$ and $\config \configreduce \config'$
where the reduction is an instance of \plaintitle{Opt},
then $\config \sim \config'$.
\end{lemma}
\begin{proof}
Case analysis on the \htitle{} rule.

\textbf{Case \htitle{Batch} and \htitle{Unbatch}}
The write effect and read result conditions ignore batches,
so $\config \sim \config'$ holds.

\textbf{Case \htitle{ReorderD}}
Before the reduction we have
$\ostream = \ostream_1 \concat [[\fvs_1 \mapsto \processor_1], [\fvs_2 \mapsto \processor_2]] \concat \ostream_2$
and after we have
$\ostream' = \ostream_1 \concat [[\fvs_2 \mapsto \processor_2], [\fvs_1 \mapsto \processor_1]] \concat \ostream_2$.
We perform case analysis on the operations $\processor_1$ and $\processor_2$,
there are four cases:
(1) $\fold$ $\fold$:
The read result condition holds since the $\fundef{reads}$ function is not affected by the order of fold operations.
(2) $\fold$ $\map$:
The write effect is preserved since the $\fundef{writes}$ function ignores $\fold$ operations.
The read result condition is preserved
since the $\fundef{writes}$ function will return $\mathtt{Id}$ for any node in target set of the $\fold$ operation, since the target lists of the $\map$ and $\fold$ operation are disjoint.
(3) $\map$ $\fold$:
Identical reasoning as the previous case.
(4) $\map$ $\map$:
Since, for any key, $\fundef{writes}(\ostream, \nkey)$ will include either the first $\map$ function or the second, never both, since the target key lists are disjoint.

\textbf{Case \htitle{ReorderRR}:}
Since the $\fundef{reads}$ function does not depend on the order of fold operations.

\textbf{Case \htitle{ReorderRW}:}
The write effect is preserved since the $\fundef{writes}$ function ignores $\fold$ operations.
For the read result,
assume the fold operation is $\fvs_2 \mapsto \fold\ \func_2\ \ep_2\ \K_2$ and the map operation is $\fvs_1 \mapsto \map\ \func_1\ \K_1$.
Before the reduction, the $\fundef{reads}$ entry for $\fvs_2$ will be
\[
\fold\ (((\func_2 \dcomp{\{ \nkey_1 \}} \fundef{writes}(\ostream \concat [[\fvs_1 \mapsto \map\ \func_1\ \K_1]], \nkey_1)) \ldots) \dcomp{\{ \nkey_m \}} \fundef{writes}(\ostream \concat [[\fvs_1 \mapsto \map\ \func_1\ \K_1]], \nkey_m))\ \ep_2\ \K_2
\]
where $\K_2 = \KS{\aseq{\nkey}^m}$.
After the rule, the fold operation is transformed to $\fvs_2 \mapsto \fold_2\ (\func_2 \dcomp{\aset{\nkey'}} \func_1)\ \ep_2\ \K_2$
where $\K_1 = \KS{\aseq{\nkey'}^{m'}}$.
and the $\fundef{reads}$ entry for $\fvs_2$ will be
\[
\fold\ ((((\func_2 \dcomp{\aset{\nkey'}} \func_1) \dcomp{\{ \nkey_1 \}} \fundef{writes}(\ostream, \nkey_1)) \ldots) \dcomp{\{ \nkey_m \}} \fundef{writes}(\ostream, \nkey_m))\ \ep_2\ \K_2
\]
The pre-reduction $\fundef{reads}$ entry can be rewritten to the following using the definition of $\fundef{writes}$:
\[
\fold\ (((\func_2 \dcomp{\{ \nkey_1 \}} ((\mathbf{if}\ \nkey_1 \in \aset{\nkey'} \ \mathbf{then}\ \func_1 \ \mathbf{else}\ \mathtt{Id}) \circ \fundef{writes}(\ostream, \nkey_1))) \ldots) \dcomp{\{ \nkey_m \}} ((\mathbf{if}\ \nkey_m \in \aset{\nkey'} \ \mathbf{then}\ \func_1 \ \mathbf{else}\ \mathtt{Id}) \circ \fundef{writes}(\ostream, \nkey_m)))\ \ep_2\ \K_2
\]
%
The post-reduction $\fundef{reads}$ entry can be rewritten to the following using Lemma~\ref{lem:foldtargetdecompose}:
\[
\fold\ ((((((\func_2 \dcomp{\{ \nkey_1 \}} (\mathbf{if}\ \nkey_1 \in \aset{\nkey'} \ \mathbf{then}\ \func_1 \ \mathbf{else}\ \mathtt{Id})) \ldots) \dcomp{\{ \nkey_m \}} (\mathbf{if}\ \nkey_m \in \aset{\nkey'} \ \mathbf{then}\ \func_1 \ \mathbf{else}\ \mathtt{Id})) \dcomp{\{\nkey_1\}} \fundef{writes}(\ostream, \nkey_1)) \ldots) \dcomp{\{\nkey_m\}} \fundef{writes}(\ostream, \nkey_m))\ \ep_2\ \K_2
\]
which can be rewritten to the rewritten pre-reduction $\fundef{reads}$ entry using Lemma~\ref{lem:compreorder}, 

\textbf{Case \htitle{FuseM}:}
Before the reduction we have
$\ostream = \ostream_1 \concat [[\fvs_1 \mapsto \map\ \func_1\ \K], [\fvs_2 \mapsto \map\ \func_2\ \K]] \concat \ostream_2$
and after we have
$\ostream' = \ostream_1 \concat [[\fvs_2 \mapsto \map\ (\func_2 \circ \func_1)\ \K]] \concat \ostream_2$.
For every $\nkey \in \K$, $\fundef{writes}(\ostream, \nkey)$ returns
$\fundef{writes}(\ostream_2, \nkey) \circ \func_2 \circ \func_1 \circ \fundef{writes}(\ostream_1, \nkey)$.,
the $\fundef{writes}(\ostream', \nkey)$ returns
$\fundef{writes}(\ostream_2, \nkey) \circ (\func_2 \circ \func_1) \circ \fundef{writes}(\ostream_1, \nkey)$,
which are equivalent due to associativity of function composition.

\textbf{Case \htitle{FuseMId}:}
Before the reduction we have
$\ostream = \ostream_1 \concat [[\fvs_1 \mapsto \map\ \func_1\ \K], [\fvs_2 \mapsto \map\ \func_2\ \K]] \concat \ostream_2$
and after we have
$\ostream' = \ostream_1 \concat [] \concat \ostream_2$.
For every $\nkey \in \K$, $\fundef{writes}(\ostream, \nkey)$ returns
$\fundef{writes}(\ostream_2, \nkey) \circ \func_2 \circ \func_1 \circ \fundef{writes}(\ostream_1, \nkey)$.,
the $\fundef{writes}(\ostream', \nkey)$ returns
$\fundef{writes}(\ostream_2, \nkey) \circ \fundef{writes}(\ostream_1, \nkey)$.
From the side condition of \htitle{FuseMId}, we know that $\func_2 \circ \func_1 \feq \mathtt{Id}$. Replacing this in the result of $\fundef{writes}(\ostream, \nkey)$ makes the two cases equivalent.

\textbf{Case \htitle{Reuse}:}
Before the reduction we have
$\ostream = \ostream_1 \concat [[\fvs_1 \mapsto \fold\ \func\ \ep\ \K_1], [\fvs_2 \mapsto \fold\ \func\ \ep\ \K_2]] \concat \ostream_2$
and after we have
$\ostream' = \ostream_1 \concat [[\fvs_1 \mapsto \fold\ \func\ \ep\ \K_1], [\fvs_2 \mapsto \fold\ \func\ (\fromserver\ \fvs_1)\ (\K_2 \kssetminus \K_1)]] \concat \ostream_2$.
The $\fundef{reads}$ entry for $\fvs_1$ before and after the reduction is identical.
The $\fundef{reads}$ entry for $\fvs_2$ before the reduction is equivalent to
\[
\fvs_2 \mapsto \fold\ (\func \dcomp{\aset{\nkey}} \func')\ \ep\ \K_2
\]
for some $\aset{\nkey}$ and $\func'$,
and after the reduction it is equivalent to
\[
\fvs_2 \mapsto \fold\ (\func \dcomp{\aset{\nkey}} \func')\ (\fromserver\ \fvs_1)\ (\K_2 \kssetminus \K_1)
\]
The result of $\fundef{expr}$ is the following for the operation before the reduction:
\[
\func''\ \ND{\nkey_1; \_; \_}\ (\ldots\ (\func''\ \ND{\nkey_m; \_; \_}\ \ep))
\]
where $\K_2 = \KS{\aseq{\nkey}^m}$
and $\func'' = \func \dcomp{\aset{\nkey}} \func'$,
and the following for the operation after the reduction:
\[
\func''\ \ND{\nkey'_1; \_; \_}\ (\ldots\ (\func''\ \ND{\nkey'_{m'}; \_; \_}\ (\func''\ \ND{\nkey''_1; \_; \_}\ (\ldots\ (\func''\ \ND{\nkey''_{m''}; \_; \_}\ \ep))))
\]
where $\K_2 \kssetminus \K_1 = \KS{\aseq{\nkey'}^{m'}}$.
and $\K_1 = \KS{\aseq{\nkey''}^{m''}}$.
From the side conditions of \htitle{Reuse},
we have that $\K_1 \kssubseteq \K_2$ and $\func$ is commutative.
We then know that $\K_2 = (\K_2 \kssetminus \K_1) \cup \K_1$,
meaning the two results of $\fundef{expr}$ are equivalent.


\end{proof}

For the rest of this section, we only consider finite reduction sequences.
There are cases where a reduction sequence involving $\configreduce$ can be infinite, but we are not concerned with proving the confluence of divergent reductions.

We use the following notation:
$\config_x \mathrel{\overset{\sim}{\downarrow}} \config_y \Leftrightarrow \exists \config_u , \config_v . \config_x \configreduce^* \config_u \land \config_y \configreduce^* \config_v \land \config_u \sim \config_v$.

\begin{figure}[t]
\centering
\begin{tabular}{cc}
\begin{subfigure}{0.5\textwidth}
\centering
\begin{tikzpicture}
\node at (0,2) (c1) {$\config_x$};
\node at (-1,1) (c2) {$\config_y$};
\node at (1,1) (c3) {$\config_z$};
\node at (-0.3,0) (joinleft) {$\config_u$};
\node at (0.3,0) (joinright) {$\config_v$};
\node at ($(joinright) !0.5! (joinleft)$) {$\sim$};
\draw[->] (c1) -- (c2);
\draw[->] (c1) -- (c3);
\draw[->,dashed] (c2) -- node[midway,above,sloped] {$*$} (joinleft);
\draw[->,dashed] (c3) -- node[midway,above,sloped] {$*$} (joinright);
\end{tikzpicture}
\caption{P1}
\end{subfigure}
&
\begin{subfigure}{0.5\textwidth}
\centering
\begin{tikzpicture}
\node at (-0.3,2) (c1) {$\config_x$};
\node at (0.3,2) (c2) {$\config_z$};
\node at ($(c1) !0.5! (c2)$) {$\sim$};
\node at (-1,1) (c3) {$\config_y$};
\node at (-0.3,0) (joinleft) {$\config_u$};
\node at (0.3,0) (joinright) {$\config_v$};
\node at ($(joinleft) !0.5! (joinright)$) {$\sim$};
\draw[->] (c1) -- (c3);
\draw[->,dashed] (c3) -- node[midway,above,sloped] {$*$} (joinleft);
\draw[->,dashed] (c2) -- node[midway,above,sloped] {$*$} (joinright);
\end{tikzpicture}
\caption{P2}
\end{subfigure}
\end{tabular}
\caption{Two Properties of Local Confluence Modulo $\sim$}
\label{fig:localtwocases}
\end{figure}

The rest of the properties are inspired from confluence properties and proofs for term rewriting systems~\cite{4567923}.

\begin{lemma}[Local Confluence Modulo $\sim$ (P1)]
\label{lem:localp1}
For any well-typed $\config_x$, $\config_y$, and $\config_z$,
if
$\config_x \configreduce \config_y$
and
$\config_x \configreduce \config_z$
then
$\config_y \mathrel{\overset{\sim}{\downarrow}} \config_z$.
\end{lemma}
\begin{proof}
Case analysis on the two reductions:

\textbf{Case Both reductions are frontend reductions:}
The choice of frontend reduction is deterministic, so $\config_y = \config_z$.

\textbf{Case Both reductions are to-graph reductions:}
The choice of to-graph reduction is deterministic, so $\config_y = \config_z$.

\textbf{Case Both reductions are load reductions:}
There are two cases:
(1) each load reduction step occurs at a distinct station;
(2) both load reduction steps occur at the same station
but either (2a) one reduces the node and one reduces a fold operation or (2b) both reduce separate fold operations.
For both (1) and (2), $\config_u$ can be constructed such that $\config_y \configreduce \config_u$ and $\config_z \configreduce \config_v$ (each reduction replicates the load reduction performed in the other first step).

\textbf{Case Both reductions are task reductions:}
Case analysis on the task reductions, there are three cases:
(1) Neither is \plaintitle{Opt}:
The choice of task reduction at a given station is deterministic.
$\config_u$  can be constructed such that $\config_y \configreduce \config_u$ and $\config_v$ such that $\config_z \configreduce \config_v$ (each reduction replicates the task reduction performed in the other first step).
(2) One is \plaintitle{Opt}:
Assume $\config_x \configreduce \config_y$ is an instance of \plaintitle{Opt}.
From Lemma~\ref{lem:tlosim} we have that $\config_x \sim \config_y$.
It is then possible to either replicate the task reduction in $\config_x \configreduce \config_z$ to $\config_y$ or to replicate the \plaintitle{Opt} reduction to $\config_z$.
(3) Both are \plaintitle{Opt}:
From Lemma~\ref{lem:tlosim} we have that $\config_x \sim \config_y$ and $\config_x \sim \config_z$,
so $\config_y \sim \config_z$.

\textbf{Case 5: One reduction is a frontend or to-graph reduction and the other is a task or load reduction}
Since the choice of frontend and to-graph reduction is deterministic.

\textbf{Case 6: One reduction is a load reduction and one is a task reduction}
There are two cases:
(1) the task reduction moves the load reduction via an instance of \plaintitle{Prop} or \plaintitle{Opt};
(2) the task reduction does not move the load reduction.
For (2) the reductions do not interfere and can be replicated.
For (1), using Lemma~\ref{lem:tlosim} for any \plaintitle{Opt} reduction
and replicating a moved load reduction in the case of \plaintitle{Prop}.
\end{proof}


\begin{lemma}[Local Confluence Modulo $\sim$ (P2)]
\label{lem:localp2}
For any well-typed $\config_x$, $\config_y$, and $\config_z$,
if
$\config_x \sim \config_z$
and
$\config_x \configreduce \config_y$
then
$\config_y \mathrel{\overset{\sim}{\downarrow}} \config_z$.
\end{lemma}
\begin{proof}
Case analysis on the reduction:

\textbf{Case Frontend or to-graph reduction:}
Replicate the reduction since frontend and to-graph reductions
do not interfere with any optimized operations.

\textbf{Case Load or task reduction:}
If \plaintitle{Opt} then Lemma~\ref{lem:tlosim} gives us that $\config_x \sim \config_y$, so $\config_y \sim \config_z$.
Otherwise, replicate the load or task reduction.
\end{proof}

P1 and P2 for local confluence modulo $\sim$ are illustrated in Fig.~\ref{fig:localtwocases}

\begin{definition}[A Normal Form]
We say $\config'$ is a normal form of $\config$
if $\config \configreduce^* \config'$
and $\config' = \tuple{\eagerdata; \emptyseq; \emptyset; \val}$.
\end{definition}

\newcommand{\bconfig}{\bar{\config}}
\begin{lemma}[Confluence Modulo $\sim$]
\label{lem:confluence}
For any well-typed $\config_x$, $\config_x'$, $\config_y$, and $\config_y'$,
if
$\config_x \sim \config_y$, and
$\config_x \configreduce ^* \config_x'$, and
$\config_y \configreduce ^* \config_y'$,
then there exists $\bconfig_x$ and $\bconfig_y$
such that
$\config_x' \configreduce^* \bconfig_x$, and
$\config_y' \configreduce^* \bconfig_y$, and
$\bconfig_x \sim \bconfig_y$.
\end{lemma}
\begin{proof}
Induction on $\configreduce$ and the following property:
\[
P(\config_x, \config_y): \config_x \sim \config_y \implies
[\forall \config_x', \config_y' . \config_x \configreduce^* \config_x' \land \config_y \configreduce ^* \config_y' \implies \config_x' \mathrel{\overset{\sim}{\downarrow}} \config_y']
\]
Given $\config_x$, $\config_y$,
$\config_x'$, and $\config_y'$,
where
$\config_x \sim \config_y$, and
$\config_x \stackrel{n}{\configreduce} \config_x'$
and
$\config_y \stackrel{m}{\configreduce} \config_y'$,
we must show that there exists
$\bconfig_x$ and $\bconfig_y$
such that
$\config_x' \configreduce ^* \bconfig_x$,
and
$\config_y' \configreduce ^* \bconfig_y$,
and $\bconfig_x \sim \bconfig_y$.

If $n = 0$ and $m = 0$ then the result directly follows.
Otherwise, assume $n > 0$ such that $\config_x \configreduce \config_{x_1} \configreduce ^* \config_x'$.
Applying P2 to $\config_x$, $\config_y$, and $\config_{x_1}$, we get $\config_u$ and $\config_v$ such that $\config_{x_1} \configreduce ^* \config_u$, and $\config_y \configreduce^* \config_v$, and $\config_u \sim \config_v$.
There are two cases:

\textbf{Case 1} $m = 0$.
Let $\bconfig_x'$ and $\bconfig_u$ and $\bconfig_v$ be normal forms of $\config_x'$, $\config_u$, and $\config_v$.
$\bconfig_x' \sim \bconfig_u$ by the induction hypothesis $P(\config_{x_1}, \config_{x_1})$,
and $\bconfig_u \sim \bconfig_v$ by the induction hypothesis $P(\config_u, \config_v)$, which completes the proof.
The diagram for this case is shown in Fig.~\ref{fig:confluence:case1}.

\textbf{Case 2} $m > 0$.
Assume
$\config_y \configreduce \config_{y_1} \configreduce^* \config_y'$. There are two sub cases:

\textbf{Case 2a} $\config_v = \config_y$.
Applying P2 to $\config_y$, $\config_u$, and $\config_{y_1}$, we get $\config_w$ and $\config_z$ such that
$\config_u \configreduce^* \config_w$, and
$\config_{y_1} \configreduce^* \config_z$, and
$\config_w \sim \config_z$.
Let $\bconfig_x'$, $\bconfig_w$, and $\bconfig_z$ be normal forms of $\config_x'$, $\config_w$, and $\config_z$, respectively.
$\bconfig_x' \sim \bconfig_w$ by the induction hypothesis $P(\config_{x_1}, \config_{x_1})$, and
$\bconfig_w \sim \bconfig_z$ by $P(\config_w, \config_z)$, and
$\bconfig_z \sim \bconfig_y'$ by $P(\config_{y_1}, \config_{y_1})$,
which completes the proof.
The diagram for this case is shown in Fig.~\ref{fig:confluence:case2a}.

\textbf{Case 2b} Otherwise,
Assume $\config_y \configreduce \config_t \configreduce^* \config_v$.
Applying P1 to $\config_y$, $\config_{y_1}$, and $\config_t$, we get $\config_w$ and $\config_z$ such that
$\config_u \configreduce^* \config_w$, and
$\config_{y_1} \configreduce^* \config_z$,
and $\config_w \sim \config_z$.
Let $\bconfig_x'$, $\bconfig_u$, $\bconfig_v$, $\bar\config_w$, $\bconfig_z$, and $\bconfig_y'$ be normal forms of $\config_x'$, $\config_u$, $\config_v$, $\config_w$, $\config_z$, and $\config_y'$, respectively.
$\bconfig_x' \sim \bconfig_u$ by the induction hypothesis $P(\config_{x_1}, \config_{x_1})$, and
$\bconfig_u \sim \bconfig_v$ by $P(\config_u, \config_v)$, and
$\bconfig_v \sim \bconfig_w$ by $P(\config_t, \config_t)$, and
$\bconfig_w \sim \bconfig_z$ by $P(\config_w, \config_z)$, and
$\bconfig_z \sim \bconfig_y'$ by $P(\config_{y_1}, \config_{y_1})$,
which completes the proof.
The diagram for this case is shown in Fig.~\ref{fig:confluence:case2b}.

\end{proof}
\begin{figure}[t]
\centering

\begin{tabular}{cc}
\begin{subfigure}{0.5\textwidth}
\centering
\begin{tikzpicture}
\node at (-0.3,2) (cx) {$\config_x$};
\node at (0.9,2) (cy) {$\config_y$};
\node at ($(cy) + (0.6,0)$) {$= \config_y'$};
\node[scale=2] at ($(cx) !0.5! (cy)$) {$\sim$};
\node at (-1.2,1) (cx1) {$\config_{x_1}$};
\draw[->] (cx) -- (cx1);
\node at (-1.8,0) (cx') {$\config_x'$};
\node at (-0.6,0) (cu) {$\config_u$};
\draw[->,dashed] (cx1) -- node[midway,above,sloped] {$*$} (cx');
\draw[->,dashed] (cx1) -- node[midway,above,sloped] {$*$} (cu);
\node at (-1.8,-1) (cbarx') {$\bconfig_x'$};
\draw[->,dashed] (cx') -- node[midway,above,sloped] {$*$} (cbarx');
\node at (-0.6,-1) (cbaru) {$\bconfig_u$};
\draw[->,dashed] (cu) -- node[midway,above,sloped] {$*$} (cbaru);
\node at (0.9,0) (cv) {$\config_v$};
\draw[->,dashed] (cy) -- node[midway,above,sloped] {$*$} (cv);
\node at (0.9,-1) (cbarv) {$\bconfig_v$};
\draw[->,dashed] (cv) -- node[midway,above,sloped] {$*$} (cbarv);
\node[scale=2] at ($(cbarx') !0.5! (cbaru)$) {$\sim$};
\node[scale=2] at ($(cu) !0.5! (cv)$) {$\sim$};
\node[scale=2] at ($(cbaru) !0.5! (cbarv)$) {$\sim$};
\node at (0.0,1.0) {$P2$};
\node at (-1.2,-0.5) {Ind};
\node at (0.15,-0.5) {Ind};
\end{tikzpicture}
\caption{Case 1}
\label{fig:confluence:case1}
\end{subfigure}
&
\begin{subfigure}{0.5\textwidth}
\centering
\begin{tikzpicture}
\node at (-0.3,2) (cx) {$\config_x$};
\node at (0.9,2) (cy) {$\config_y$};
\node at ($(cy) + (0.6,0)$) {$= \config_v$};
\node[scale=2] at ($(cx) !0.5! (cy)$) {$\sim$};
\node at (-1.2,1) (cx1) {$\config_{x_1}$};
\draw[->] (cx) -- (cx1);
\node at (-1.8,-1) (cbarx') {$\bconfig_x'$};
\draw[->,dashed] (cx') -- node[midway,above,sloped] {$*$} (cbarx');
\node at (-1.8,0) (cx') {$\config_x'$};
\draw[->,dashed] (cx1) -- node[midway,above,sloped] {$*$} (cx');
\node at (0.3,1) (cu) {$\config_u$};
\node at (-0.3,0) (cw) {$\config_w$};
\draw[->,dashed] (cu) -- node[midway,above,sloped] {$*$} (cw);
\draw[->,dashed] (cx1) -- node[above,midway,sloped] {$*$} (cu);
\node at (-0.3,-1) (cbarw) {$\bconfig_w$};
\draw[->,dashed] (cw) -- node[midway,above,sloped] {$*$} (cbarw);
\node at (1.6,1) (cy1) {$\config_{y_1}$};
\draw[->] (cy) -- (cy1);
\node at (0.9,0) (cz) {$\config_z$};
\draw[->,dashed] (cy1) -- node[midway,above,sloped] {$*$} (cz);
\node at (0.9,-1) (cbarz) {$\bconfig_z$};
\draw[->,dashed] (cz) -- node[midway,above,sloped] {$*$} (cbarz);
\node at (2.1,0) (cy') {$\config_y'$};
\draw[->,dashed] (cy1) -- node[midway,above,sloped] {$*$} (cy');
\node at (2.1,-1) (cbary') {$\bconfig_y'$};
\draw[->,dashed] (cy') -- node[midway,above,sloped] {$*$} (cbary');
\node[scale=2,rotate=45] at ($(cy) !0.5! (cu)$) {$\sim$};
\node[scale=2] at ($(cw) !0.5! (cz)$) {$\sim$};
\node[scale=2] at ($(cbarx') !0.5! (cbarw)$) {$\sim$};
\node[scale=2] at ($(cbarw) !0.5! (cbarz)$) {$\sim$};
\node[scale=2] at ($(cbarz) !0.5! (cbary')$) {$\sim$};
\node at (-0.15,1.5) {$P2$};
\node at (0.9,1.0) {$P2$};
\node at (-1.05,-0.5) {Ind};
\node at (0.3,-0.5) {Ind};
\node at (1.55,-0.5) {Ind};
\end{tikzpicture}
\caption{Case 2a}
\label{fig:confluence:case2a}
\end{subfigure}
\\
\multicolumn{2}{c}{
\begin{subfigure}{1.0\textwidth}
\centering
\begin{tikzpicture}
\node at (-0.3,2) (cx) {$\config_x$};
\node at (0.9,2) (cy) {$\config_y$};
\node[scale=2] at ($(cx) !0.5! (cy)$) {$\sim$};
\node at (-2.4,1) (cx1) {$\config_{x_1}$};
\draw[->] (cx) -- (cx1);
\node at (-3.0,0) (cx') {$\config_x'$};
\draw[->,dashed] (cx1) -- node[midway,above,sloped] {$*$} (cx');
\node at (-3.0,-1) (cbarx') {$\bconfig_x'$};
\draw[->,dashed] (cx') -- node[midway,above,sloped] {$*$} (cbarx');
\node at (-1.8,0) (cu) {$\config_u$};
\draw[->,dashed] (cx1) -- node[midway,above,sloped] {$*$} (cu);
\node at (-1.8,-1) (cbaru) {$\bconfig_u$};
\draw[->,dashed] (cu) -- node[midway,above,sloped] {$*$} (cbaru);
\node at (-0.3,0) (cv) {$\config_v$};
\node at (-0.3,-1) (cbarv) {$\bconfig_v$};
\draw[->,dashed] (cv) -- node[midway,above,sloped] {$*$} (cbarv);
\node at (0.9,0) (cw) {$\config_w$};
\node at (0.9,-1) (cbarw) {$\bconfig_w$};
\draw[->,dashed] (cw) -- node[midway,above,sloped] {$*$} (cbarw);
\node at (2.1,0) (cz) {$\config_z$};
\node at (2.1,-1) (cbarz) {$\bconfig_z$};
\draw[->,dashed] (cz) -- node[midway,above,sloped] {$*$} (cbarz);
\node at (3.3,0) (cy') {$\config_y'$};
\node at (3.3,-1) (cbary') {$\bconfig_y'$};
\draw[->,dashed] (cy') -- node[midway,above,sloped] {$*$} (cbary');
\node at (2.7,1) (cy1) {$\config_{y_1}$};
\draw[->] (cy) -- (cy1);
\draw[->,dashed] (cy1) -- node[midway,above,sloped] {$*$} (cz);
\draw[->,dashed] (cy1) -- node[midway,above,sloped] {$*$} (cy');
\draw[dashed] (cy) -- (0.3,1);
\draw[->,dashed] (0.3,1) -- node[midway,above,sloped] {$*$} (cv);
\draw[->,dashed] (0.3,1) -- node[midway,above,sloped] {$*$} (cw);
\node[scale=2] at ($(cbarx') !0.5! (cbaru)$) {$\sim$};
\node[scale=2] at ($(cbaru) !0.5! (cbarv)$) {$\sim$};
\node[scale=2] at ($(cbarv) !0.5! (cbarw)$) {$\sim$};
\node[scale=2] at ($(cbarw) !0.5! (cbarz)$) {$\sim$};
\node[scale=2] at ($(cbarz) !0.5! (cbary')$) {$\sim$};
\node[scale=2] at ($(cu) !0.5! (cv)$) {$\sim$};
\node[scale=2] at ($(cw) !0.5! (cz)$) {$\sim$};
\node at (-2.4,-0.5) {Ind};
\node at (-1.05,-0.5) {Ind};
\node at (0.3,-0.5) {Ind};
\node at (1.5,-0.5) {Ind};
\node at (2.7,-0.5) {Ind};
\node at (-1.05,1) {$P2$};
\node at (1.5,1) {$P1$};
\end{tikzpicture}
\caption{Case 2b}
\label{fig:confluence:case2b}
\end{subfigure}
}
\end{tabular}

\caption{Cases for Confluence}
\label{fig:confluence}

\end{figure}
} 

\begin{theorem}[Determinism]
\label{lem:confluence}
For any frontend program $\ep$ and backend
$\data$,
if
$\init(\ep, \data) \configreduce^* \tuple{\eagerdata_1; \emptyseq; \rstream_1; \val_1}$
and
$\init(\ep, \data) \configreduce^* \tuple{\eagerdata_2; \emptyseq; \rstream_2; \val_2}$
then
$\eagerdata_1 = \eagerdata_2$
and
$\dom(\rstream_1) = \dom(\rstream_2)$
and
$\forall \fvs \in \dom(\rstream_1) . \rstream_1(\fvs) = \rstream_2(\fvs)$
and
$\val_1 \feq \val_2$.
\end{theorem}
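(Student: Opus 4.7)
The plan is to derive the theorem as a corollary of confluence modulo $\sim$ (established earlier in the development). Starting from $\config_0 = \init(\ep, \data)$, reflexivity gives $\config_0 \sim \config_0$, and confluence modulo $\sim$ applied to the two reduction sequences $\config_0 \configreduce^* \tuple{\eagerdata_1; \emptyseq; \rstream_1; \val_1}$ and $\config_0 \configreduce^* \tuple{\eagerdata_2; \emptyseq; \rstream_2; \val_2}$ yields configurations $\bconfig_1, \bconfig_2$ such that each $\tuple{\eagerdata_i; \emptyseq; \rstream_i; \val_i}$ reduces to $\bconfig_i$ and $\bconfig_1 \sim \bconfig_2$.

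The next step is to show that $\tuple{\eagerdata_i; \emptyseq; \rstream_i; \val_i}$ is already a normal form, so that $\bconfig_i$ coincides with it. A case analysis over the reduction rules of Fig.~\ref{fig:clientserver} confirms this: frontend rules require a redex in the frontend expression, but $\val_i$ is a value; \plaintitle{Empty}, \plaintitle{First}, and \plaintitle{Add} require a non-empty top-level operation stream; and \plaintitle{Prop}, \plaintitle{Map}, \plaintitle{Fold}, \plaintitle{Complete}, \plaintitle{Last}, \plaintitle{Opt}, \plaintitle{Load} all require either a non-empty streamlet or a non-value load in some station, but by the shape of $\eagerdata$ every streamlet is empty and every node is a fully-evaluated $\eagerinode$ triple. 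Hence we obtain directly that $\tuple{\eagerdata_1; \emptyseq; \rstream_1; \val_1} \sim \tuple{\eagerdata_2; \emptyseq; \rstream_2; \val_2}$.

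The final step is to unpack the definition of $\sim$. The equivalence requires syntactic equality on the operation stream (trivial: both $\emptyseq$), on the result store (which gives $\rstream_1 = \rstream_2$, and hence $\dom(\rstream_1) = \dom(\rstream_2)$ with matching values on the common domain), and on the frontend expression (giving $\val_1 = \val_2$, from which $\val_1 \feq \val_2$ follows). The backends must be related by $\stackrel{b}{\sim}$: since both streamlets are empty at every station, the write-effect and read-result side conditions hold trivially, leaving only the componentwise term-equivalence of node payloads and edge lists. Because each such component is an irreducible value (an integer or a key-list value $\K$), term equivalence collapses to syntactic equality, yielding $\eagerdata_1 = \eagerdata_2$.

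The main obstacle is of course the confluence modulo $\sim$ lemma itself, which is doing all the heavy lifting; the step documented above is essentially a projection from $\sim$-equivalence of normal forms to the stated equalities. Establishing that lemma requires carefully defining $\sim$ to absorb the semantic effect of TLO rewrites (so that two sequences which apply \htitle{Batch}, \htitle{FuseM}, \htitle{ReorderRW}, \htitle{Reuse} at different moments can still be related), and then proving the local confluence properties P1 and P2 by exhaustive case analysis on which pair of reduction forms (frontend, to-graph, task, load, and within task the \plaintitle{Opt} sub-case) is involved in each diverging step, relying on Lemma~\ref{lem:tlosim} to absorb \plaintitle{Opt} steps into the equivalence on either side of the diamond.
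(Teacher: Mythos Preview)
Your proposal is correct and follows the same approach as the paper: the paper's proof is the single line ``Follows from Lemma~\ref{lem:confluence}'' (Confluence Modulo $\sim$), and you have simply spelled out how that implication goes, including the normal-form argument and the unpacking of $\sim$ on dry configurations. The additional paragraph sketching how the confluence lemma itself is established via P1, P2, and Lemma~\ref{lem:tlosim} is also an accurate summary of the paper's development.
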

\inlong{
\begin{proof}
Follows from Lemma~\ref{lem:confluence}.
\end{proof}
}

This important result says that
despite the non-deterministic execution exhibited by the asynchronous processing between the frontend and the backend (see \S~\ref{sec:coresocial}),
despite the non-deterministic choices in propagation and realization in the backend (see \S~\ref{subsec:ingraph}),
despite non-deterministic executions over load expressions resulting from lazy realization (see \S~\ref{subsec:ingraph}),
despite the in-graph TLO (see \S~\ref{subsec:tlo}),
all executions produce the same final graph,
the same results,
and the same values modulo term equivalence in $\lambda$ calculus.
Here, term equivalence is needed because of the TLO rules such as fusing.
It is also important to observe this Theorem can only be established with the support of phase distinction.

Finally, eager graph processing (see \S~\ref{subsec:dcalculus}) can be modeled by redefining evaluation contexts \emph{without altering any reduction rules}. Intuitively, this means that eager data processing is a restrictive instance of \systemprefix{}. Rigorously, we represent eager processing as the $\eagerreduce$ reduction relation, defined as identical as the $\configreduce$ we introduced in Fig.~\ref{sec:highlevelsemantics}, except that the $\fE$, $\bE$, $\tE$, $\lE$ evaluation contexts are replaced
with $\eagerfE$, $\eagerbE$, $\eagertE$, $\eagerlE$ evaluation contexts in Fig.~\ref{fig:eagercontexts}.
We use $\eagerreducestar$
to represent the reflexive and transitive
closure of $\eagerreduce$.
\inshort{
We say a backend $\data$ is \emph{load-free} if any load expression in any station in $\data$ is a value.
}
For the eager task context $\eagertE$, we further require any element in the domain of its fulfillment function to be load-free.
\inlong{
\begin{lemma}[$\eagerreduce$ Deterministic Reduction]
\label{lem:determe}
For any program $\ep$ and backend $\eagerdata$,
if $\init(\ep, \eagerdata) \eagerreducestar \config$
where $\config = \tuple{\data; \ostream; \rstream; \ep}$
and
$\ep \neq \val$
then there exists a unique $\config'$ such that $\config \eagerreduce \config'$.
\end{lemma}
\begin{proof}
Induction on $\eagerreduce$, and case analysis on the last step of derivation.
The eager contexts limit each step to one option.
\end{proof}
}
\inshort{
A trivial case analysis will reveal $\eagerreduce$ is deterministic,
conforming to our intuition of one-at-a-time processing.
}

\begin{corollary}[\systemprefix{} With Regard to Eager Processing]
For any frontend program $\ep$ and backend
$\data$,
if
$\init(\ep, \data) \eagerreducestar \tuple{\eagerdata_1; \emptyseq; \rstream_1; \val_1}$%
and
$\init(\ep, \data) \configreduce^* \tuple{\eagerdata_2; \emptyseq; \rstream_2; \val_2}$%
then
$\eagerdata_1 = \eagerdata_2$
and
$\dom(\rstream_1) = \dom(\rstream_2)$
and
$\forall \fvs \in \dom(\rstream_1) . \rstream_1(\fvs) = \rstream_2(\fvs)$
and
$\val_1 \feq \val_2$.
%
\end{corollary}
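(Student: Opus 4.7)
The plan is to reduce this corollary directly to the Determinism theorem stated just above. The crucial observation is that the eager system $\eagerreduce$ is defined using exactly the same reduction rules as $\configreduce$, merely with more restrictive evaluation contexts $\eagerfE$, $\eagerbE$, $\eagertE$, $\eagerlE$ in place of $\fE$, $\bE$, $\tE$, $\lE$. Consequently, the first step will be a containment lemma: every single-step eager reduction is also a single-step $\configreduce$ reduction. This follows by case analysis on the reduction rule and by observing that each eager context is a specialization of the corresponding general context (a dry operation stream is a stream, a load-free task redex is a task redex, etc.), so any $\eagerreduce$-derivation can be re-read as a $\configreduce$-derivation over the same configurations.

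Given this containment, the hypothesis $\init(\ep, \data) \eagerreducestar \tuple{\eagerdata_1; \emptyseq; \rstream_1; \val_1}$ immediately lifts to $\init(\ep, \data) \configreduce^* \tuple{\eagerdata_1; \emptyseq; \rstream_1; \val_1}$. Since we already have $\init(\ep, \data) \configreduce^* \tuple{\eagerdata_2; \emptyseq; \rstream_2; \val_2}$ by hypothesis, both terminal configurations are $\configreduce^*$-reachable from the same initial configuration and both have the shape required by the Determinism theorem (namely, dry backend, empty top-level operation stream, and a value at the frontend). Applying the Determinism theorem therefore yields $\eagerdata_1 = \eagerdata_2$, $\dom(\rstream_1) = \dom(\rstream_2)$, pointwise equality $\rstream_1(\fvs) = \rstream_2(\fvs)$ for each $\fvs$, and $\val_1 \feq \val_2$, which is exactly the conclusion.

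The only step that needs a little care is the containment $\eagerreduce \subseteq \configreduce$, and I expect it to be the main (though still routine) obstacle. One must verify that for every reduction rule in Fig.~\ref{fig:clientserver}, the hole of the corresponding eager context can be plugged using the more general context, and that the additional side conditions imposed by the eager contexts (such as the load-freeness requirement on $\eagertE$ and the emptiness of the operation streams in $\eagerfE$) never introduce extra premises that the general rule lacks; they only restrict \emph{when} a rule may fire, not \emph{how} it fires. Once this is checked, the corollary is a one-line consequence of Determinism, and no new bisimulation or confluence argument is required.
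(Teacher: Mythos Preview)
Your proposal is correct and matches the paper's intended approach: the statement is presented as an immediate corollary of the Determinism theorem, relying on the fact that the eager contexts $\eagerfE$, $\eagerbE$, $\eagertE$, $\eagerlE$ are syntactic specializations of $\fE$, $\bE$, $\tE$, $\lE$, so that $\eagerreduce \subseteq {\configreduce}$ and hence both terminal configurations are $\configreduce^*$-reachable from $\init(\ep,\data)$. The containment step you flag as the only nontrivial point is exactly the content the paper builds into its description of eager processing as ``identical \ldots\ except that the evaluation contexts are replaced,'' and your case-by-case justification of it is appropriate.
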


The simple corollary however carries an important message: the general, less restrictive graph processing of \systemprefix{} preserves the computation results of conventional graph processing.
In a nutshell, TLO and IOP are both \emph{sound} optimizations.

\inlong{
\section{A Parallel Variant: \parlang{}}
\label{sec:parallel}

\begin{figure}[t]
\centering

$\begin{array}{@{}l@{\ \ }l@{\ \ }lr}

\mathbb{TF}
& \defassign &
\tuple{\data \concat \bullet \concat \data; \ostream; \rstream; \cE}
& \textit{task--frontend parallel context}
\\

\mathbb{LF}
& \defassign &
\tuple{\data \concat \mathbb{SL} \concat \data; \ostream; \rstream; \cE}
& \textit{load-frontend parallel context}
\\

\mathbb{TT}
& \defassign &
\tuple{\data \concat \bullet \concat \data \concat \bullet \concat \data; \ostream; \rstream; \ep}
& \textit{task-task parallel context}
\\

\mathbb{TL}
& \defassign &
\tuple{\data \concat \bullet \concat \data \concat \mathbb{SL} \concat \data; \ostream; \rstream; \ep}
& \textit{task-load parallel context}
\\ & \mid &
\tuple{\data \concat \mathbb{SL} \concat \data \concat \bullet \concat \data; \ostream; \rstream; \ep}
\\

\mathbb{LL}
& \defassign &
\tuple{\data \concat \mathbb{SL} \concat \data \concat \mathbb{SL} \concat \data; \ostream; \rstream; \ep}
& \textit{load-load parallel context}
\\ & \mid &
\tuple{\data \concat \mathbb{DL} \concat \data; \ostream; \rstream; \ep}
\\

\mathbb{SL}
& \defassign &
[\tuple{\cE; \ostream}]
& \textit{single load context}
\\ & \mid &
[\tuple{\inode; \ostream \concat [\fvs \mapsto \fold\ \func\ \cE\ \K] \cons \ostream}]
\\

\mathbb{DL}
& \defassign &
[\tuple{\ND{\cE; \cE; \ep}; \ostream}]
& \textit{double load context}
\\ & \mid &
[\tuple{\ND{\nkey; \cE; \cE}; \ostream}]
\\ & \mid &
[\tuple{\cE; \ostream \concat [\fvs \mapsto \fold\ \func\ \cE\ \K] \cons \ostream}]
\\

\end{array}$

\caption{\parlang{} Evaluation Contexts}
\label{fig:parcontexts}

\end{figure}

\begin{figure}[t]
\centering

\begin{tabular}{r}
\multicolumn{1}{@{}c@{}}{
\raisebox{-0.14cm}{
\begin{tikzpicture}
\coordinate (x) at (0,0);
\coordinate (y) at (\linewidth-\pgflinewidth,0);
\draw (x) -- (y);
\end{tikzpicture}
}
}
\\
\systembox{$\config \parreduce \config$}
\end{tabular}
\begin{mathpar}
\inferrule*[left=\ptitle{Serial}]{
   \config \configreduce \config'
}{
   \config \parreduce \config'
}

\inferrule*[left=\ptitle{TF}]{
   \mathbb{TF}[\data][\ep]
   \configreduce
   \mathbb{TF}[\data'][\ep] \writeresult \rstream
\\ \mathbb{TF}[\data][\ep]
   \configreduce
   \mathbb{TF}'[\data][\ep']
}{
   \mathbb{TF}[\data][\ep]
   \parreduce
   \mathbb{TF}'[\data'][\ep'] \writeresult \rstream
}

\inferrule*[left=\ptitle{LF}]{
   \mathbb{LF}[\ep_1][\ep_2]
   \configreduce
   \mathbb{LF}[\ep_1'][\ep_2]
\\ \mathbb{LF}[\ep_1][\ep_2]
   \configreduce
   \mathbb{LF}'[\ep_1][\ep_2']
}{
   \mathbb{LF}[\ep_1][\ep_2]
   \parreduce
   \mathbb{LF}'[\ep_1'][\ep_2']
}

\inferrule*[left=\ptitle{TT}]{
   \mathbb{TT}[\data_1][\data_2]
   \configreduce
   \mathbb{TT}[\data_1'][\data_2] \writeresult \rstream_1
\\ \mathbb{TT}[\data_1][\data_2]
   \configreduce
   \mathbb{TT}[\data_1][\data_2'] \writeresult \rstream_2
}{
   \mathbb{TT}[\data_1][\data_2]
   \parreduce
   (\mathbb{TT}[\data_1'][\data_2'] \writeresult \rstream_1) \writeresult \rstream_2
}

\inferrule*[left=\ptitle{LL}]{
   \mathbb{LL}[\ep_1][\ep_2]
   \configreduce
   \mathbb{LL}[\ep_1'][\ep_2]
\\ \mathbb{LL}[\ep_1][\ep_2]
   \configreduce
   \mathbb{LL}[\ep_1][\ep_2']
}{
   \mathbb{LL}[\ep_1][\ep_2]
   \parreduce
   \mathbb{LL}[\ep_1'][\ep_2']
}

\inferrule*[left=\ptitle{TL}]{
   \mathbb{TL}[\data][\ep]
   \configreduce
   \mathbb{TL}[\data'][\ep] \writeresult \rstream
\\ \mathbb{TL}[\data][\ep]
   \configreduce
   \mathbb{TL}[\data][\ep']
}{
   \mathbb{TL}[\data][\ep]
   \parreduce
   \mathbb{TL}[\data'][\ep'] \writeresult \rstream
}

\end{mathpar}

\caption{\parlang{} Operational Semantics}
\label{fig:parreduction}

\end{figure}

\dnote{consider moving this before meta theory; if so, we just need to rewrite the first sentence here} 

The combination of non-deterministic executions (\S~\ref{sec:highlevelsemantics}) and deterministic results (\S~\ref{sec:properties}) bring in significant opportunities for parallelism design. In this section, we introduce \parlang{}, a parallel calculus for continuous graph processing built on top of \systemprefix{}. Relation $\config \parreduce \config$ says configuration $\config$ one-step reduces to $\config'$ with parallel support, whose definition appears in Fig.~\ref{fig:parreduction}.
We use $\parreduce^*$
to represent the reflexive and transitive
closure of $\parreduce$.

Formally, this calculus naturally extends from \systemprefix{} without redefining any of the reduction rules of the latter. The most obvious "bridge" rule is \ptitle{Serial}, which says a serial reduction is also a reduction in our parallel calculus. The rest of all rules address parallelism. They rely on several newly introduced parallel evaluation contexts, shown in Fig.~\ref{fig:parcontexts}. Each context has two holes, which are two be fulfilled by two redexes in a parallel execution. It is not accidental that the forms of parallel evaluation contexts mirror those in \systemprefix{}: it is the non-deterministic choice of evaluation enabled by the latter that identifies the opportunities for parallel evaluation.

The parallelism between the backend and the frontend is captured by \ptitle{TF} and \ptitle{LF}.  There are two rules here, because the reduction that happens at the backend may either be a task reduction or a load reduction. We demonstrate backend-frontend parallelism with an example:

\begin{example}[Backend-Frontend Parallelism]
Consider the following configuration 
illustrated in Fig.~\ref{fig:datacentriclaziness}(c),
where Line 11 of Fig.~\ref{fig:serverclient}
is about to be executed,
and the operations $\mathtt{addRelationship\ \nkey_\mathtt{deb}\ \nkey_\mathtt{cam}}$ and
$\mathtt{addRelationship\ \nkey_\mathtt{cam}\ \nkey_\mathtt{bob}}$ have both propagated to $\mathtt{eve}$.
\[
\lb
\begin{array}{l@{\hspace{-10em}}l}
[\tuple{\ND{\nkey_{\mathtt{eve}}; \integer_{\mathtt{eve}}; \KS{\emptyset}}; \\ &[[\fvs \mapsto \mathtt{addRelationship}\ \nkey_\mathtt{cam}\ \nkey_\mathtt{bob}], \\ &[\fvs' \mapsto \mathtt{addRelationship}\ \nkey_\mathtt{deb}\ \nkey_\mathtt{cam}]]},
\\
\tuple{\ND{\nkey_{\mathtt{deb}}; \integer_{\mathtt{deb}}; \KS{\emptyset}}; \emptyseq},
\\
\tuple{\ND{\nkey_{\mathtt{cam}}; \integer_{\mathtt{cam}}; \KS{\emptyset}}; \emptyseq},
\\
\tuple{\ND{\nkey_{\mathtt{bob}}; \integer_{\mathtt{bob}}; \KS{\emptyset}}; \emptyseq},
\\
\tuple{\ND{\nkey_{\mathtt{amy}}; \integer_{\mathtt{amy}}; \KS{\emptyset}}; \emptyseq}]
\end{array}
; \emptyseq; \rstream; \toserver\ \mathtt{addRelationship}\ \nkey_\mathtt{eve}\ \nkey_\mathtt{bob} \rb
\]
The frontend consists of a single operation, taken from Line 11
of Fig.~\ref{fig:serverclient}. According to rule \ptitle{TF}, the configuration may one-step reduce to the following where $\fvs''$ is fresh: 
\[
\lb
\begin{array}{l@{\hspace{-10em}}l}
[\tuple{\ND{\nkey_{\mathtt{eve}}; \integer_{\mathtt{eve}}; \KS{\emptyset}}; \\ &[[\fvs' \mapsto \mathtt{addRelationship}\ \nkey_\mathtt{deb}\ \nkey_\mathtt{cam}]]},
\\
\tuple{\ND{\nkey_{\mathtt{deb}}; \integer_{\mathtt{deb}}; \KS{\emptyset}}; \\ &[[\fvs \mapsto \mathtt{addRelationship}\ \nkey_\mathtt{cam}\ \nkey_\mathtt{bob}]]},
\\
\tuple{\ND{\nkey_{\mathtt{cam}}; \integer_{\mathtt{cam}}; \KS{\emptyset}}; \emptyseq},
\\
\tuple{\ND{\nkey_{\mathtt{bob}}; \integer_{\mathtt{bob}}; \KS{\emptyset}}; \emptyseq},
\\
\tuple{\ND{\nkey_{\mathtt{amy}}; \integer_{\mathtt{amy}}; \KS{\emptyset}}; \emptyseq}]
\end{array}
; [[\fvs'' \mapsto \mathtt{addRelationship}\ \nkey_\mathtt{eve}\ \nkey_\mathtt{bob}]]; \rstream; \fvs'' \rb
\]

\end{example}



In \ptitle{TT}, \ptitle{LL}, and \ptitle{TL}, \parlang{} allows parallel reductions between two tasks, two loads, or one task and one load, to happen. Recall that a task reduction may take on several forms, such as a realization (an instance of \plaintitle{Map} or \plaintitle{Fold}), a propagation (an instance of \plaintitle{Prop}), or a temporal locality optimization (an instance of \plaintitle{Opt}). Thus, \ptitle{TT} intuitively says that any of them can be evaluated in parallel with any other of them, as long as the two tasks do not overlap in the dam neighborhood. Similarly, \ptitle{LL} says any loads may reduce in parallel if they happen in different dams (following the first case of the $\mathbb{LL}$ definition based on $\mathbb{SL}$), or even inside the same dam as they reduce different loads in the dam (following the second case of the $\mathbb{LL}$ definition based on $\mathbb{DL}$). We now show task-task parallelism through an example:



\begin{example}[Task--Task Parallelism]
Consider the following configuration 
illustrated in Fig.~\ref{fig:datacentriclaziness}(f):
\[
\lb
\begin{array}{l}
[\tuple{\ND{\nkey_{\mathtt{eve}}; \integer_{\mathtt{eve}}; \KS{\emptyset}}; \emptyseq},
\\
\tuple{\ND{\nkey_{\mathtt{deb}}; \integer_{\mathtt{deb}}; \KS{\emptyset}}; [[\fvs' \mapsto \mathtt{addRelationship}\ \nkey_\mathtt{deb}\ \nkey_\mathtt{cam}]]},
\\
\tuple{\ND{\nkey_{\mathtt{cam}}; \integer_{\mathtt{cam}}; \KS{\emptyset}}; [[\fvs \mapsto \mathtt{addRelationship}\ \nkey_\mathtt{cam}\ \nkey_\mathtt{bob}]]},
\\
\tuple{\ND{\nkey_{\mathtt{bob}}; \integer_{\mathtt{bob}}; \KS{\emptyset}}; \emptyseq},
\\
\tuple{\ND{\nkey_{\mathtt{amy}}; \integer_{\mathtt{amy}}; \KS{\emptyset}}; \emptyseq}]
\end{array}
; \emptyseq; \rstream; \ep \rb
\]
According to rule \ptitle{TT}, the configuration may one-step reduce to the following configuration, illustrated in
Fig.~\ref{fig:datacentriclaziness}(h):
\[
\lb
\begin{array}{l}
[\tuple{\ND{\nkey_{\mathtt{eve}}; \integer_{\mathtt{eve}}; \KS{\emptyset}}; \emptyseq},
\\
\tuple{\ND{\nkey_{\mathtt{deb}}; \integer_{\mathtt{deb}}; \KS{\{\nkey_\mathtt{cam}\}}}; \emptyseq},
\\
\tuple{\ND{\nkey_{\mathtt{cam}}; \integer_{\mathtt{cam}}; \KS{\{\nkey_\mathtt{bob}\}}}; \emptyseq},
\\
\tuple{\ND{\nkey_{\mathtt{bob}}; \integer_{\mathtt{bob}}; \KS{\emptyset}}; \emptyseq},
\\
\tuple{\ND{\nkey_{\mathtt{amy}}; \integer_{\mathtt{amy}}; \KS{\emptyset}}; \emptyseq}]
\end{array}
; \emptyseq; \rstream; \ep \rb
\]
\end{example}

The most important property that \parlang{} enjoys is that it produces the same result as \systemprefix{}.

\begin{theorem}[Observable Equivalence between \parlang{} and \systemprefix{}]
\label{thm:parobserve}
For any well-typed program $\ep$ and graph $\data$,
if
$\init(\ep, \data) \eagerreducestar \tuple{\eagerdata_1; \emptyseq; \rstream_1; \val_1}$
and
$\init(\ep, \data) \parreduce^* \tuple{\eagerdata_2; \emptyseq; \rstream_2; \val_2}$
then
then
$\eagerdata_1 = \eagerdata_2$
and
$\dom(\rstream_1) = \dom(\rstream_2)$
and
$\forall \fvs \in \dom(\rstream_1) . \rstream_1(\fvs) = \rstream_2(\fvs)$
and
$\val_1 \feq \val_2$.
\end{theorem}

\paragraph{Explicit Data Parallelism}

Before we move on, a question we wish to answer is how the parallelism support in \parlang{} is related to classic ``divide-and-conquer'' data parallelism. Indeed, this classic style \emph{could} be optionally built into \parlang{} through the
rules in Fig.~\ref{fig:classicd}.

\newcommand{\parnode}{P}
\newcommand{\pE}{\mathbb{P}}
\begin{figure}[t]
\centering

$\begin{array}{@{}l@{\ \ }l@{\ \ }lr}
\data
& \defassign &
\aseq{\station}
\mid \parnode || \parnode
\\
\parnode
& \defassign &
\tuple{\data; \ostream; \rstream}
& \textit{parallel dam unit}
\\
\pE
& \defassign &
\bullet
\mid \tuple{\pE; \ostream; \rstream; \ep}
\mid \tuple{\pE; \ostream; \rstream}
\mid \pE || \parnode
\mid \parnode || \pE
& \textit{parrallel context}
\\
\end{array}$

$\begin{array}{r@{\ }c@{\ }l}
(\parnode || \parnode') \writeoperation \batchgroup & \df &
  \parnode \writeoperation \batchgroup || \parnode' \writeoperation \batchgroup
  \\
\tuple{\data; \ostream; \rstream} \writeoperation \batchgroup & \df &
  \tuple{\data \writeoperation \batchgroup; \ostream \concat [\batchgroup]; \rstream}
  \\
\pE[\tuple{\bullet; \ostream; \rstream; \ep}] \writeresult \rstream' & \df &
  \pE[\tuple{\bullet; \ostream; \rstream \cup \rstream'; \ep}]
  \\
\pE[\tuple{\bullet; \ostream; \rstream}] \writeresult \rstream' & \df &
  \pE[\tuple{\bullet; \ostream; \rstream \cup \rstream'}]
  \\
\end{array}$
\begin{mathpar}
\inferrule*[left=\ptitle{CDSplit}]{}{
   \pE[\eagerdata_1 \concat \data_2]
   \parreduce
   \pE[\tuple{\eagerdata_1; \emptyseq; \emptyset} || \tuple{\data_2; \emptyseq; \emptyset}]
}

\inferrule*[left=\ptitle{CDJoin}]{}{
   \pE[\tuple{\eagerdata_1; \ostream_1; \emptyset} || \tuple{\data_2; \ostream_2; \rstream_2}]
   \parreduce
   (\pE \writeresult \rstream_2)[\eagerdata_1 \concat \data_2]
}

%
\inferrule*[left=\ptitle{CDClaimMap}]{
   \fvs \mapsto \map\ \func\ \K_0 \in \ostream_i\ \textrm{for}\ i=1,2
\\ \fvs \xmapsto{\K_i} 0 \in \rstream_i\ \textrm{for}\ i=1,2
\\ \K = (\K_0 \setminus \K_1) \setminus \K_2
}{
   \pE[\tuple{\data_1; \ostream_1; \rstream_1} || \tuple{\data_2; \ostream_2; \rstream_2}]
   \parreduce
   (\pE \writeresult \{ \fvs \xmapsto{\K} 0 \})[\tuple{\data_1; \ostream_1; \rstream_1 \setminus \fvs} || \tuple{\data_2; \ostream_2; \rstream_2 \setminus \fvs}]
}

\inferrule*[left=\ptitle{CDClaimFold}]{
   \fvs \mapsto \fold\ \func\ \val_0\ \K_0 \in \ostream_i\ \textrm{for}\ i=1,2
\\ \fvs \xmapsto{\K_i} \val_i \in \rstream_i\ \textrm{for}\ i=1,2
\\ \K = (\K_0 \setminus \K_1) \setminus \K_2
\\ \val = \func\ \val_1\ \val_2
\\ \tuple{\func; \val_0}\ \textrm{forms a monoid}
}{
   \pE[\tuple{\data_1; \ostream_1; \rstream_1} || \tuple{\data_2; \rstream_2}]
   \parreduce
   (\pE \writeresult \{ \fvs \xmapsto{\K} \val \})[\tuple{\data_1; \rstream_1 \setminus \fvs} || \tuple{\data_2; \ostream_2; \rstream_2 \setminus \fvs}]
}
%
\end{mathpar}

\caption{Classic Data Parallelism
\pnote{need rule for doing serial reduction on the split parts?}
\pnote{decision: extend C to include that form and extend the contexts to go into parallel units}
}
\label{fig:classicd}

\end{figure}


There are two things to note about the rules. First, observe that in this ``divide-and-conquer'' data processing style, the first partition of the graph ($\eagerdata_1$ in the rule \ptitle{CDSplit}) must be dry. Otherwise, any operation that remains in a streamlet in the first partition may never have a chance to be realized in the second partition of the graph, should its target happen to be there. More importantly, observe that ``divide-and-conquer'' data processing finally needs to consider how the results of processing from different partitions can be joined together, which happens to be operation-specific. As shown in \ptitle{CDClaimFold} , joining over $\fold$ results is only possible if the folding function happens to be \emph{associative}, and the initial value of $\fold$ applied to both partitions happens to be the \emph{identity}. In short, a monoidal requirement is needed.
Both \ptitle{CDClaimMap} and \ptitle{CDClaimFold} subtract the residual target key sets of both copies of the result from the initial target key set of the operation to produce a final residual target key set.
\ptitle{CDJoin} says that a previous split can be joined when every operation emitted to the two halves has been claimed.

\parlang{} does not include the \ptitle{CD-} rules by default.
With significant parallelism opportunities as we described in Fig.~\ref{fig:parreduction}, the need for classic data parallelism may be less: as much as it is important to address how to parallelize the processing of one single operation (achieved by classic data parallelism), it is more unique in the context of continuous graph processing to elucidate how to parallelize the processing of all operations in a stream (achieved by \parlang{}). Nonetheless, we think the presentation of the \ptitle{CD-} rules is useful for placing \parlang{} in context: unlike classic data parallelism, all flavors of parallelism enabled by \parlang{} are \emph{agnostic} of specific properties of the operations.


}

\inshort{

\section{Technical Report}

The accompanying technical report under submission consists of 
(i) an extension to support explicit parallelism;
(ii) an extension to support explicit exception handling of ``key not found'' errors (recall that  \systemprefix{} is able to track ``key not found'' through residual target (\S~\ref{sec:opandresultstreams}), but our core system does not report errors upon claiming a result whose residual target is non empty);
(iii) an extension to support node addition at the end of the traversal sequence, or at a position in the traversal sequence (recall that our core calculus supports the simple form of adding a node at the beginning of the traversal sequence);
(iv) an extension to support ``map all'' and ``fold all''.
In addition to the mechanized proofs, we also provide a version of manual proofs in the technical report.

}

\inlong{

\section{Discussion}
\label{sec:discussion}

We now sketch some dimensions of design that are orthogonal to the core in the first two subsections. The last one places the work in the context.

\subsection{Error Handling}

In \S~\ref{sec:syntax}, we discussed that our frontend language intentionally allows programmers to name a key as a primitive value. As a result, it is possible that an operation (such as a $\map$ and $\fold$) that name a key in its target key list, but that key cannot be found in the graph. 

The runtime of \systemprefix{} is capable of tracking if any key in the target key list is not processed. Recall that in \S~\ref{sec:opandresultstreams}, we described the residual target key list associated with each entry in the result store. In a language extension with explicit exception handling support, modeling ``key not found'' as an exception is a simple extension. The only change is to replace \plaintitle{Claim} with the following rules:
\begin{mathpar}
\scalebox{\scalemath}{$
\inferrule*[left=\plaintitle{ClaimNoResidual}]{
   \fE[\fromserver\ \fvs] = \tuple{\data; \ostream; \rstream; \ep}
\\ \fvs \xmapsto{\KS{\emptyseq}} \val \in \rstream
}{
   \fE[\fromserver\ \fvs]
   \configreduce
   \fE[\val]
}
$}

\scalebox{\scalemath}{$
\inferrule*[left=\plaintitle{ClaimWithResidual}]{
   \fE[\fromserver\ \fvs] = \tuple{\data; \ostream; \rstream; \ep}
\\ \fvs \xmapsto{\K} \val \in \rstream
\\ \K \neq \KS{\emptyseq}
}{
   \fE[\fromserver\ \fvs]
   \configreduce
   \fE[\mathbf{exception}(\K)]
}
$}
\end{mathpar}
where \textbf{exception}(K) is a value of this extended language. Now that the residual target key list is carried by the exception, a programmer can further inspect it and decide on what steps to be taken as part of exception handling.

With a type system in place, no type errors (such as mismatched function arguments) can happen in \systemprefix{}.

\subsection{Alternative Node Addition Support}
\label{subsec:add}

In \systemprefix{}, we choose to define a simple semantics for node addition: the newly added node always appears at the beginning of the station sequence. In this section, we define some alternatives. First, it is simple to support node addition at the \emph{end} of the station sequence, even though it is somewhat cluttering. To support this flavor, let us introduce the expression $\mathtt{addLast}\ \ep$, and extend the operation semantics with the following rules:
\begin{mathpar}
\inferrule*[left=\plaintitle{AddatLast}]{
   \tE = \data \concat \bullet
\\ \data_i = \tuple{\inode; \ostream_i}\ \textrm{for}\ i = 1,2
\\ \ostream_1 = [\fvs \mapsto \mathtt{addLast}\ \integer] \cons \ostream_2
\\ \nkey\ \fresh
}{
   \tE[\data_1]
   \configreduce
   \tE[\data_2 \concat [\ND{\nkey; \integer; \KS{\emptyseq}}]] \writeresult [\fvs \xmapsto{\KS{\emptyseq}} \nkey]
}

\inferrule*[left=\plaintitle{AddEmpty}]{ 
   \nkey\ \fresh
}{
   \tuple{\emptyseq; [\fvs \mapsto \mathtt{addLast}\ \integer] \cons \ostream; \rstream; \ep}
   \configreduce
   \tuple{\tuple{\ND{\nkey; \integer; \KS{\emptyseq}}; \emptyseq}; \ostream; \{\fvs \xmapsto{\KS{\emptyseq}} \nkey\} \cup \rstream; \ep}
}
\end{mathpar}
\noindent
with an extension to the $\target$ definition with one extra case, defining $\target(\mathtt{addLast}\ \integer)$ as $\emptyset$. With this design, when an $\mathtt{addLast}$ operation flows to the head of the top-level operation stream, \plaintitle{AddEmpty} will add the new node if the graph currently consists of no nodes. Otherwise, the operation will be propagated through \plaintitle{Prop}, until reaching the last station, where it is realized by \plaintitle{AddAtLast}. 





It is also simple to extend our calculus to support node addition at a fix possible. For example, let us introduce the expression $\mathtt{addAt}\ \ep\ \ep'$ where it says a new node with payload value $\ep$ should be placed after the node with key $\ep'$.
We can extend the operation semantics with the following rule:
\begin{mathpar}
\inferrule*[left=\plaintitle{FPAdd}]{
   \data = \tuple{\ND{\nkey; \integer; \K}; [\fvs \mapsto \addn\ \integer'\ \nkey] \cons \ostream}
\\ \nkey'\ \fresh
}{
   \tE[\data]
   \configreduce
   \tE[\data \concat [\tuple{\ND{\nkey'; \integer'; \KS{\emptyseq}}; \emptyseq}]] \writeresult [\fvs \xmapsto{\KS{\emptyseq}} \nkey']
}
\end{mathpar}
with an extension to the $\target$ definition with one extra case, defining $\target(\mathtt{addAt}\ \ep\ \nkey)$ as $\{\nkey\}$.

\subsection{MapAll and FoldAll}

In our core calculus, both $\map$ and $\fold$ operations are target-based: the last argument in both operations are the list of keys indicating which nodes in the graph will be processed. It is simple to extend our calculus with variants where all nodes in the graph will be processed. The support for the $\starmap$ and $\starfold$ operations are defined below: 
\begin{mathpar}
\inferrule*[left=\plaintitle{MapAll}]{
   \inode = \ND{\nkey; \ep; \ep'}
\\ \tE[\tuple{\inode; [[\fvs \mapsto \map\ \func\ \KS{[\nkey]}]]}]
   \configreduce
   \tE[\tuple{\inode'; [[\fvs \mapsto \map\ \func\ \KS{\emptyseq}]]}]
}{
   \tE[\tuple{\inode; [\fvs \mapsto \starmap^+\ \func] \cons \ostream}]
   \configreduce
   \tE[\tuple{\inode'; [\fvs \mapsto \starmap^-\ \func] \cons \ostream}]
}

\inferrule*[left=\plaintitle{FoldAll}]{
   \inode = \ND{\nkey; \ep''; \ep'''}
\\ \tE[\tuple{\inode; [[\fvs \mapsto \fold\ \func\ \ep\ \KS{[\nkey]}]]}]
   \configreduce
   \tE[\tuple{\inode; [[\fvs \mapsto \fold\ \func\ \ep'\ \KS{\emptyseq}]]}]
}{
   \tE[\tuple{\inode; [\fvs \mapsto \starfold^+\ \func\ \ep] \cons \ostream}]
   \configreduce
   \tE[\tuple{\inode; [\fvs \mapsto \starfold^-\ \func\ \ep'] \cons \ostream}]
}
\end{mathpar}

Note that in this formal definition, both operations carry a polarity (\texttt{+} and \texttt{-}). The \texttt{+} version of each operation is the programmer syntax, whereas the \texttt{-} version plays a formal role: it means that the operation has been realized at the current station (and is ready to be propagated to the next station). In both rules, the pre-reduction of the operation carries a \texttt{+} polarity, and the post-reduction one carries a \texttt{-} polarity. Predictably, the reduction of $\starmap$ depends on that of $\map$ in our core calculus. The same applies to $\starfold$. As the polarity turns from \texttt{+} to \texttt{-} during the reduction, it needs to be reversed during propagation. This latter is supported by redefining the $\writeoperation$ and $\writeoptwo$ operations as the following:
\[
\begin{array}{r@{\ }c@{\ }l}
\tuple{\data; \ostream; \rstream; \ep} \writeoperation \batchgroup & \df &
  \tuple{\data; \ostream \concat [[ \fvs \mapsto \fundef{r}(\processor) \mid \fvs \mapsto \processor \in \batchgroup ]]; \rstream; \ep}
  \\
\tuple{\inode; \ostream} \cons \data \writeoptwo \batchgroup & \df &
  \tuple{\inode; \ostream \concat [[ \fvs \mapsto \fundef{r}(\processor) \mid \fvs \mapsto \processor \in \batchgroup ]]} \cons \data
  \\
\fundef{r}(\processor) & \df & \begin{cases}
\starmap^+\ \func & \fif \processor = \starmap^-\ \func
\\
\starfold^+\ \func\ \ep & \fif \processor = \starfold^-\ \func\ \ep
\\
\processor & \textrm{otherwise}
\end{cases}
\\
\end{array}
\]

To allow \plaintitle{Prop} to propagate
$\starmap$ and $\starfold$ with polarity equal to \texttt{-},
we extend $\target(\processor)$ by defining
$\target(\starmap^-\ \func)$
and
$\target(\starfold^-\ \func\ \ep)$
as $\emptyset$.

}

\section{Related Work}
\label{sec:related}

\paragraph{\bf Continuous Processing in Graph Databases}

The TLO semantics of \systemprefix{} provides a language-based foundation for optimization in the presence of multiple operations. The latter is a well-studied area in databases known as multi-query optimization (MQO). QUEL*~\cite{10.1145/318898.318993} is an early compiler optimization defined with a number of tactics for inter-query optimization, such as combining two REPLACE operations in a relational query language into one. This is analogous to fusing in the style of the \htitle{FuseM} rule in \systemprefix{}. Sellis~\cite{10.1145/42201.42203} focuses on how to optimize queries that share common tasks and how local and global query execution plans can be constructed based on them. Park and Segev~\cite{105474} formulates the problem as sub-expression identification among queries. The essence of exploring commonality among queries is embodied by the \htitle{Reuse} rule in \systemprefix{}.
These pioneer efforts lead to a large body of research on MQO-style query optimization, both for graph databases (e.g., ~\cite{6228123,10.14778/3021924.3021929}) and non-graph databases (e.g.,~\cite{Ramachandra:2012:HOP:2213836.2213852,Sousa:2014:CQU:2594291.2594305,Scully:2017:POA:3009837.3009891}).

\inlong{

The aforecited efforts pioneered the research direction on MQO-based query optimization, leading to fruitful results. For example, query prefetching~\cite{Ramachandra:2012:HOP:2213836.2213852} resorts to static
analysis and transformation to combine queries and reduce the number of
round-trips to access data.
Sousa et al{.}~\cite{Sousa:2014:CQU:2594291.2594305} describes a calculus and a system that allow multiple queries in the form of user-defined functions to be consolidated. 
As a form of reusing, query results may also be cached, such as through Sqlcache~\cite{Scully:2017:POA:3009837.3009891}. 

Often deployed as the backend of data-intensive applications today, graph databases~\cite{10.1145/3104031} inherit the need for continuous processing. MQO is a common optimization for graph databases. For example, Le et al{.}~\cite{6228123} optimizes multiple SPARQL queries over RDF data. Ren and Wang~\cite{10.14778/3021924.3021929} defines heuristic MQO algorithms for identifying subgraph isomorphism in large graphs. In industry, deploying graph databases in use scenarios that demand continuous processing give rise to their popularity as Online Analytical Processing (OLAP) databases. In this domain, the boundary between graph databases and graph analytics/mining engines is often blurred. We will review the latter in the next subsection.

}



\inlong{

\systemprefix{} shares the motivation with existing work that the ``one operation at a time'' view misses out on important opportunities of data processing optimization. 
}

As a foundational study, \systemprefix{} complements existing work in several dimensions. (1) \systemprefix{} generalizes the optimization 
as a stream rewriting problem, subsuming both optimizations commonly studied in MQO (such as reusing and fusing), and those that are not (such as reordering and batching). (2) \systemprefix{} rigorously defines TLO as a \emph{part} of the semantics of the database engine, with the goal of 
demonstrating the role of the former in the broader scope of the latter. \systemprefix{} is not only a semantics for studying what forms of TLOs can be supported, but also (more importantly) for illuminating the design space on \emph{when} and \emph{where} TLOs may happen (see ~\S~\ref{sec:tlo}). 
\inlong{
(3) \systemprefix{} positions MQO in \emph{open-world} use scenarios, when compile-time optimization may not be applicable. In \systemprefix{}, all forms of TLOs can happen at runtime, and they can be performed in an opportunistic manner, including when multiple operations are encountered \emph{in-graph}. This flavor of MQO based on dynamic semantics expands the scope of applicability of MQO for scenarios where queries/updates are unknown \emph{a priori}. Indeed, if one takes the general view in programming languages that compiler optimization is a semantics-preserving transformation, our semantics-based definition on TLOs entails compile-time optimization when operations are known statically. 
}

\paragraph{\bf Continuous Graph Analytics Systems}

\inlong{
The need for efficiently analyzing and mining large graphs gives rise to the rapid development of graph analytics/mining systems in the recent decade.
}
Anther source of motivation of our foundation is the active area of experimental continuous graph analytics systems,  with a large body of work on both \emph{online graph analytics systems}~\cite{ediger2012stinger,Cheng:2012:KTP:2168836.2168846,10.1145/2567948.2580051,Vora:2017:KFA:3037697.3037748,10.1145/3267809.3267811,dhulipala2019low,10.1145/3364180,han2014chronos,shi2016tornado,196274,10.1145/2168836.2168854}
and \emph{iterative graph analytics systems}~\cite{kang2009pegasus,Malewicz:2010:PSL:1807167.1807184,low2012distributed,kyrola2012graphchi,10.1145/2517349.2522739,180251,190488}. 
Both TLO and IOP play an important role in scalable system design of these systems. 
\inlong{
Stinger~\cite{ediger2012stinger} supports high performance dynamic graphs through a data structure involving linked lists of blocks. UNICORN~\cite{10.1145/2567948.2580051} incrementally processes the graph with matrix-vector operations. 
}
Some examples of online graph processing systems are as follows: GraPU~\cite{10.1145/3267809.3267811} allows updates to the graph to be buffered and pre-processed, similar to a TLO operation in our top-level operation stream; Kineograph~\cite{Cheng:2012:KTP:2168836.2168846} supports a commit protocol for incremental graph updates; DeltaGraph~\cite{dexter2016lazy} allows graph operations to be batched and fused within the graph through a Haskell datatype representation of an inductive graph; 
Kickstarter~\cite{Vora:2017:KFA:3037697.3037748} incrementally corrects the error in their approximation result;
Reflective consistency~\cite{dexterreflective} incrementally defers synchronization for relaxed consistency. 
\inlong{
C-Trees~\cite{dhulipala2019low} are purely functional data structures to enable efficient concurrent processing in the presence of queries and updates. GraphOne~\cite{10.1145/3364180} proposes a new versioning-based data storage scheme to address rapid evolution of graphs in a real-time setting. Other examples include Chronos~\cite{han2014chronos}, Tornado~\cite{shi2016tornado}, Version Traveler~\cite{196274}, and LazyBase~\cite{10.1145/2168836.2168854}. 
}
Supporting iterative graph analytics and graph mining is a central goal of graph processing research in the past decade. For example, PEGASUS~\cite{kang2009pegasus} supports iterative matrix vector mutiplication built upon MapReduce, similar to our choice of graph operations. Their framework demonstrates the MapReduce model is sufficient to support common graph analytics such as random walk, connected components, diameter estimation, as well as pagerank. 
As another example, Pregel~\cite{Malewicz:2010:PSL:1807167.1807184} introduces the notion of supersteps, which our \textsc{CorePR} example is based on. 
GraphLab~\cite{low2012distributed} focuses on distributed computing support for iterative graph-based machine learning algorithms.
\inlong{
 Other examples of graph processing systems that support iterative graph analytics and graph mining are GraphChi~\cite{kyrola2012graphchi}, Galois~\cite{10.1145/2517349.2522739}, Powergraph~\cite{180251} and GraphQ~\cite{190488}. 
}
TLO is common among these systems. For example, both Pregel and GraphLab allow low-level MPI messages across graph partitions on different clusters to be batched, which can be viewed as a lower-level semantics-oblivious implementation of batching we generalize. 
Our treatment of continuously processed operations as an operation stream may lead to adaptive optimization opportunities
(e.g., Example~\ref{ex:superstepblend}).
\systemprefix{} provides support for dynamic graphs without any restrictions. This is the most general setting beyond existing work on static graphs (e.g., ~\cite{180251,low2012distributed,10.1145/2442516.2442530}), or on dynamic graphs with restrictions on updates (e.g.,~\cite{190488,Vora:2017:KFA:3037697.3037748,10.1145/3267809.3267811}).




Overall, \systemprefix{} complements existing work with a correctness-driven approach, in order to elucidate the invariants and principles in building continuous graph processing systems. Despite a minimal core, the calculus spans the scope of the graph processing engine, its optimization, and its interaction with a frontend programming model. 


\paragraph{\bf Data Streaming Systems and Foundations}


\emph{Data streaming systems} have a model where a stream of \emph{data} flow through data processing operations (often called stream processors) composed together through framework-defined combinators. This is a well established area, including data flow and data streaming languages~\cite{lucid,lustre,thies2002streamit,10.1145/1297027.1297043,Meyerovich:2009:FPL:1640089.1640091,10.1007/978-3-662-44202-9_15}, data flow processing frameworks~\cite{murray2011ciel,Zaharia:2016:ASU:3013530.2934664,10.1145/2517349.2522737,murray2013naiad}, and foundations~\cite{1458143,10.1145/1111037.1111054,10.1145/2660193.2660225}. Graph nodes may also form a stream~\cite{186216,roy2013x,10.1145/1065167.1065201,mcgregor2005finding,10.1145/1376616.1376634,10.14778/1920841.1920964,10.1145/1970392.1970397,10.1145/2745754.2745763}. 
Thanks to the fundamental difference between operation streams and data streams, \systemprefix{} explores a different design space. For example, for operation streams, TLO is an essential design component, 
which does not find natural parallels in data streaming systems. For example, the order in the data stream often does not matter, so it makes little sense to reorder data in the stream, or in general, rewrite them through a rule-based system as we did in Fig.~\ref{fig:fusion}. 

\paragraph{\bf Foundations on Incremental Processing}


Haller et al{.}~\cite{haller2018programming}
describes a formal semantics and lineage-based programming model for
 distributed data processing. In their model, deferred evaluation is supported 
 at the boundary of distributed nodes to promote opportunities for operation 
 fusion and improve 
the efficiency of network communications. 
More broadly, incremental computing
systems~\cite{Pugh:1989:ICV:75277.75305,Acar:2006:AFP:1186632.1186634,Hammer:2014:ACD:2594291.2594324,harkesicedust,harkes2017icedust}
maintain the propagation latent in the control flow of a function, and
efficiently perform re-computation along the propagation path only when
necessary.
If we view a graph-processing operation as a function and the graph it
operates on as the argument, \systemprefix{} at its essence calls for a dual propagation design: the function propagates within the argument.
\inlong{
Our general relationship with classic lazy evaluation was described in \S~\ref{sec:dimenlazy}.
}
We reviewed incremental processing support in experimental graph processing systems earlier while discussing online analytics systems. 



\paragraph{\bf Graph Calculi}

Our calculus has a more distant relationship with languages and calculi designed specifically for graph construction and verification, with several examples as follows. Oliveira and Cook~\cite{oliveira:fps} studies the functional representation of graphs and demonstrates the benefit of embedding inductive graphs through higher-order abstract syntax. Fregel~\cite{10.1145/2951913.2951938} is a functional programming language where Pregel-style vertex computations are abstracted as higher-order functions. Hobor and Villard~\cite{10.1145/2429069.2429131} reasons about ramifications for shared data structures, including graphs, through separation logic. Raad et al{.}~\cite{10.1007/978-3-319-47958-3_17} describes a reasoning technique for verifying the correctness of concurrent graph-manipulating programs.

\section{Conclusion}

%

Designing correct continuous graph processing systems with expressive optimization support of temporal locality optimization and incremental operation processing is a challenging
problem in data management.
\systemprefix{} complements existing experimental research
by rigorously establishing \emph{correctness} while promoting \emph{expressiveness}.
%


\bibliography{references}


\end{document}